\documentclass{article}

\usepackage{dilab_arxiv}

\usepackage{enumitem}
\setlist[itemize]{itemsep=1pt, topsep=2pt, parsep=0pt, leftmargin=1.2em}
\usepackage{tabularx}
\usepackage{float}
\usepackage{placeins}
\usepackage{amsmath}
\usepackage{amsthm}
\usepackage{nicefrac}
\usepackage{url}
\usepackage{multirow}
\usepackage[capitalise]{cleveref}

\providecommand{\vu}{\boldsymbol{u}}
\providecommand{\vv}{}
\renewcommand{\vv}{\boldsymbol{v}}
\providecommand{\vp}{\boldsymbol{p}}
\providecommand{\vf}{\boldsymbol{f}}
\providecommand{\vg}{\boldsymbol{g}}
\providecommand{\vell}{\boldsymbol{\ell}}

\providecommand{\Description}[1]{}

\newcommand{\vx}{\boldsymbol{x}}
\newcommand{\vy}{\boldsymbol{y}}
\newcommand{\vq}{\boldsymbol{q}}
\newcommand{\vm}{\boldsymbol{m}}

\newtheorem{theorem}{Theorem}
\newtheorem{lemma}{Lemma}
\newtheorem{corollary}{Corollary}
\newtheorem{conjecture}{Conjecture}
\newtheorem{assumption}{Assumption}

\newtheorem{proposition}{Proposition}
\newtheorem{remark}{Remark}
\newtheorem*{theorem*}{Theorem}
\newtheorem*{lemma*}{Lemma}
\newtheorem*{proposition*}{Proposition}
\newtheorem*{corollary*}{Corollary}

\newcommand{\ALT}{\textsc{alt}}
\newcommand{\SIM}{\textsc{sim}}
\newcommand{\OSPRM}{\textsc{os-prm}$^{+}$}
\newcommand{\PRM}{\textsc{prm}$^{+}$}
\newcommand{\PP}{\textsc{pp}}
\newcommand{\TSOPT}{\textsc{ts-opt}}
\newcommand{\GSPROX}{\textsc{gs-prox}}
\newcommand{\DELAY}{\textsc{delay}}
\newcommand{\RMp}{RM$^{+}$}
\newcommand{\ALTmath}{\mathrm{ALT}}
\newcommand{\OSPRMmath}{\mathrm{OS}}

\makeatletter
\AtBeginDocument{%
  \begingroup\usefont{T1}{XCharter-TLF}{m}{n}\endgroup
  \@ifundefined{T1+XCharter-TLF}{}{%
    \DeclareFontShape{T1}{XCharter-TLF}{m}{scit}{<->ssub*XCharter-TLF/m/sc}{}}%
}
\makeatother

\title{Why Alternation Speeds Up Counterfactual Regret Minimization?}
\runningtitle{Why Alternation Speeds Up CFR}
\date{\today}

\paperlogo{\includegraphics[height=1.5cm]{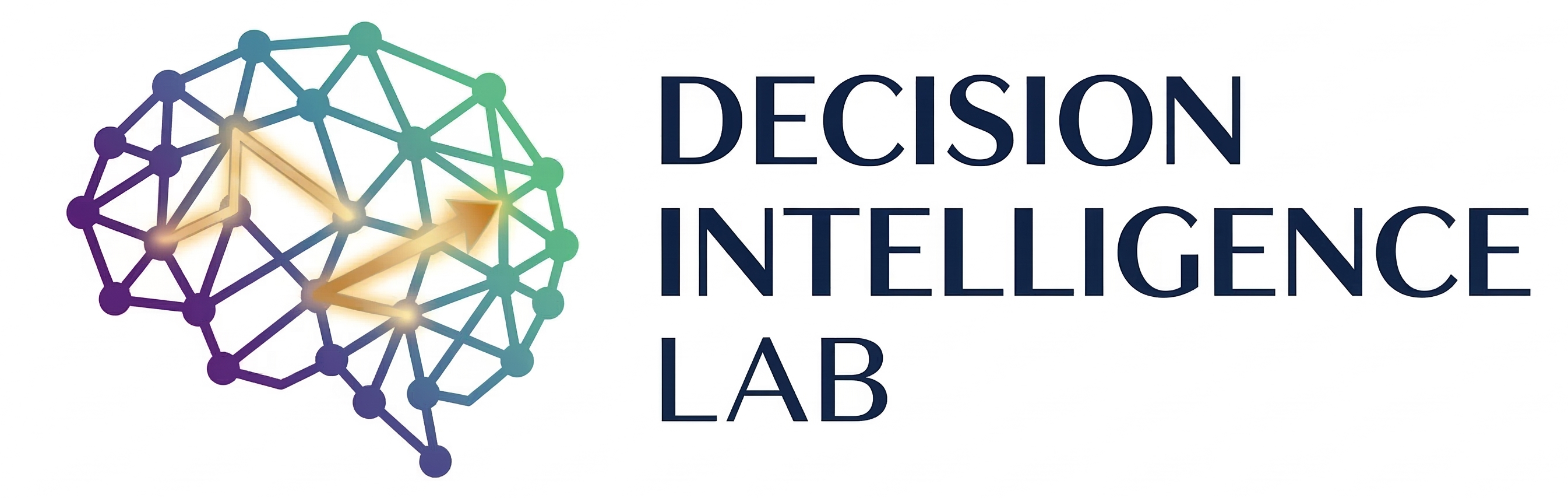}}

\author{
  Boning Li$^{1}$ and Longbo Huang$^{1\,\text{\faEnvelope}}$
  \\[0.3em]\normalfont
  $^1$Institute for Interdisciplinary Information Sciences, Tsinghua University
  \\
  \text{\faEnvelope}\ Correspondence: longbohuang@tsinghua.edu.cn
}

\begin{document}
\maketitle
\thispagestyle{fancy}

\begin{abstract}
Solving imperfect-information games at scale rests on counterfactual regret minimization
with the regret matching$^{+}$ (\RMp{}) update, and a seemingly minor implementation choice accounts for much of
its practical speed: \emph{alternating} the two players' updates, so that the second
player responds to the strategy the first player just produced. A widely repeated
intuition explains why: seeing the opponent's fresh strategy should act like optimism. We
treat this intuition as a testable hypothesis about deterministic two-player zero-sum
self-play dynamics and find that its canonical additive form fails an information-matched
test. On an analytic witness family we compare alternating \RMp{} against a one-sided
predictive \RMp{} that receives the same information channel each round and differs only
in where the fresh signal enters, a persistent regret state versus a transient prediction
term. Our main result is a dichotomy theorem that reduces trajectory separation to the
recurrence of exact-vertex play by the predictive dynamics: under an interior-band
assumption on the alternating iterate, either vertex times recur and the profile distance
is at least, infinitely often, an explicit constant that our measurements nearly attain, or they
terminate. Both antecedents are observed throughout the tested range and under multi-precision
recomputation, and a proof of either remains open. The mechanism is a readout dichotomy:
a transient injection at small persistent state turns the readout into a switch between exact vertices, while a
persistent injection is averaged into a rotation whose angular step vanishes as the
regret mass accumulates. Beyond one-sided prediction, a falsification suite over 47 matrix games and
three extensive-form witnesses rules out
the canonical implementations of three further dominant explanation routes, play-level
proximal damping, staleness interpolation, and standard two-sided optimism, whose
relative ordering crosses sign across regimes. Across the tested self-play regimes, any
correct explanation of the alternation speedup must reproduce a staleness cost present at
the smallest tested stale fraction, an advantage that persists across seven orders of payoff conditioning, regime
crossings of the optimism family, and a measured per-perturbation cost of weakening the
injection path that is far steeper for staleness than for
partial persistence.

\end{abstract}

\section{Introduction}
\label{sec:intro}

Counterfactual regret minimization (CFR) with the regret matching$^{+}$ (\RMp{}) update rule is the workhorse of
large-scale imperfect-information game solving
\citep{zinkevich2007regret,tammelin2014solving,bowling2015heads,moravvcik2017deepstack,brown2018superhuman}.
A seemingly minor implementation choice is responsible for a large share of its
practical speed. \emph{Alternating} the two players' updates lets the second player
respond to the strategy the first player just produced; under simultaneous updates both
players respond only to the previous joint profile
\citep{tammelin2014solving,tammelin2015solving,burch2019revisiting,brown2019solving}.
The speedup is robust, routinely one to two orders of magnitude, and largely unexplained.
Existing theory separates the alternating duality gap from the simultaneous one by a
nonnegative improvement term \citep{burch2019revisiting}, while the best known regret
bounds for predictive regret matching do not separate them \citep{hait2025alternating}.

There is, however, a folk explanation. Under alternation, player~2's
update sees player~1's \emph{fresh} strategy; a simultaneous update would see the one
before it. Seeing the opponent's next move is what \emph{optimistic} or \emph{predictive}
no-regret methods approximate with a prediction term
\citep{rakhlin2013online,rakhlin2013optimization,syrgkanis2015fast}, and predictive
variants of regret matching carry exactly this prediction term
\citep{farina2019stable,farina2021faster}. The reading of alternation as optimism with a
perfect one-step predictor is natural enough that, to our knowledge, it has never been
isolated and tested directly. We isolate it here, and its canonical additive form fails
the test.

We study the \emph{self-play dynamical system}: the deterministic sequence
of profiles and duality gaps generated when both players run a fixed update rule
against each other. This object is distinct from the adversarial alternating-regret
setting, whose lower bounds \citep{hait2025alternating} do not settle
self-play \emph{trajectory} equivalence. We ask whether the alternation speedup
can be \emph{reproduced} by the optimism family.

The core difficulty in testing the folk explanation is a confound: an optimistic method
differs from alternation in which information it uses and in how it uses it. We remove
the confound by construction.

On an analytic family of $2\times 2$
zero-sum games we compare two dynamics that receive the same information channel every
round, player~2's instantaneous regret against player~1's newly computed strategy,
evaluated along each dynamics' own trajectory. The two dynamics differ in one respect.
Alternating \RMp{}
(\ALT) folds this signal into the persistent regret state. One-sided predictive
\RMp{} (\OSPRM), the canonical additive realization of ``alternation as optimism,''
injects it as a transient prediction term discarded after each round. If the folk
explanation were right, the two should be close. Their duality gaps separate by a factor
of $5$ to $154$ (median $28.7$) across the family, and their trajectories remain a constant distance apart on average
(median final $\ell_1$ $0.58$).

Our main theorem turns this into mathematics. The alternating player's strategy
stays bounded away from exact vertices by a margin, verified on the tested
grid in every arithmetic precision, and under this interior-band assumption the
comparison reduces to the recurrence of the predictive dynamics' exact-vertex play.

The local analysis exposes the mechanism
(\cref{fig:phase}): a transient injection meets a persistent state that the dynamics
itself washes down to exactly zero, so the readout becomes an unbounded-gain switch
between exact vertices, while a persistent injection is averaged by a growing
normalizer into a
rotation whose angular step vanishes as the regret mass grows.

One-sided prediction is only the most literal reading of the folk explanation. We
assemble the dominant live explanation routes into a falsification catalog of four
mechanism families, labeled F1 to F4: canonical additive one-sided prediction (F1),
play-level proximal damping of the alternating structure (F2), staleness
interpolation, alternation read as a fresher simultaneous update (F3), and standard
two-sided optimism in its three standard implementations (F4). Each family comes
with its implied signature and a pre-registered falsifier, and all four canonical implementations fail on a
suite of $47$ matrix games and extensive-form witnesses (Kuhn poker, Leduc poker,
and a liar's-dice game).

Our contributions are as follows.
\begin{itemize}
\item We prove a conditional trajectory-level separation of alternation from one-sided
optimism. On an analytic witness family the canonical additive predictive form
either revisits exact vertices infinitely often, in which case its distance from
the alternating trajectory is at least an explicit constant infinitely often, a
constant our measurements nearly attain, or its vertex play terminates. Both antecedents are
observed on the tested grid (\cref{sec:theory}).
\item We identify the mechanism behind the speedup and quantify it. An exact
decomposition of the alternating duality gap isolates a nonnegative improvement
term with a closed form under plain \RMp, equal on truncation-free windows to at least a $1/\sqrt m$ fraction of the increase
of the potential that bounds the regret. On the full suite the regret channel accounts for a
median share of $0.60$ of the alternating/simultaneous (\ALT{}/\SIM{}) log-speedup, and the one-sided
predictive variant's cost falls on its opponent's regret on $42/47$ games
(\cref{sec:anatomy}, \cref{sec:exp-anatomy}).
\item We test the canonical implementations of the four dominant explanation routes
against alternation and find none equivalent. The differences are
structured: the two-sided optimism family's advantage reverses across
conditioning and tree regimes, while the price of perturbing alternation is measured: removing half the
persistence costs as little as one percent of staleness
(\cref{sec:catalog}, \cref{sec:experiments}, \cref{sec:signatures}).
\end{itemize}

\section{Related Work}
\label{sec:related}

For gradient-type dynamics alternation is provably beneficial: alternating mirror descent
improves the average regret of both players \citep{wibisono2022alternating}; alternating gradient
descent-ascent admits finite-regret and cycling analyses with near-optimal local rates
\citep{bailey2020finite,zhang2022near}, beats simultaneous updates
\citep{lee2024fundamental}, weakens the effective adversary \citep{cevher2023alternation},
and attains $O(1/T)$ average-iterate convergence in constrained bilinear games
\citep{nan2026on}. All
concern mirror-descent or gradient-descent-ascent updates; whether regret-matching-type
dynamics enjoy analogous benefits remains open. We constrain its shape.

In the adversarial model, \citet{hait2025alternating} prove $\Omega(\sqrt{T})$ lower bounds on alternating regret
for predictive regret matching$^{+}$ (\PRM{}) and optimistic gradient methods. For an alternation-aware Hedge variant,
\citet{tao2026optimalalternatingregretonline} prove logarithmic-in-$d$ bounds on
alternating regret with a matching lower bound, and average-iterate convergence at
$O(\log d/T)$ to Nash (zero-sum) and coarse correlated
(general-sum) equilibria. These guarantees concern one learner's worst-case regret under
an alternating accounting. Our object is \RMp{} self-play trajectories. In \RMp{}
self-play the alternating arm leads the information-matched predictive arm
(\cref{sec:theory}). Fictitious play separates the same two mechanisms in the opposite
direction. Without regularization, optimism alone attains constant regret on
$2\times2$ zero-sum games with a unique interior equilibrium under any tiebreaking
rule, while alternation alone incurs $\Omega(\sqrt{T})$ regret on Matching Pennies for
almost all initializations \citep{lazarsfeld2026optimism}. Our witness family lives in
exactly that game class, so the contrast isolates the update rule.

Optimistic no-regret methods exploit predictable loss sequences
\citep{rakhlin2013online,rakhlin2013optimization}, and the Regret bounded by Variation in
Utilities (RVU) property converts prediction into fast convergence
\citep{syrgkanis2015fast}. Predictive Blackwell approachability transfers this machinery
to regret matching (\PRM{}, \citealp{blackwell1956analog,farina2021faster}); optimistic
variants extend it to sequential games \citep{farina2019stable,farina2019optimistic}
and to weighted, discounted regret \citep{xu2024minimizing}, refined forms of optimism
improve rates in general games \citep{soleymani2025faster}, and adaptive asymmetry of
step sizes stabilizes predictive CFR under inaccurate predictions \citep{meng2026faster}.
In the simultaneous regime the same route has reached constant regret:
\citet{liu2026constantindividualregretgeneral} bound each player's individual regret
independently of the horizon in $N$-player general-sum self-play, via an exponential moving average (EMA) cascade
optimism of a form our two-sided falsifier does not target. These works supply the
optimistic machinery that our one-sided and two-sided families formalize, and \ALT{}'s
measured behavior contradicts both signatures.

Neither \RMp{} nor \PRM{} is last-iterate stable in general: strategies can cycle while
averages converge \citep{farina2023regret}, the last-iterate landscape is delicate
\citep{cai2025lastiterate,fiegel2025harder}, and fast last-iterate convergence provably
requires forgetting \citep{cai2024fast}. For optimistic dynamics the three notions
separate outright: a uniform best-iterate rate is available where uniform last-iterate
and polynomial random-iterate rates are not \citep{cai2026separation}. Smooth,
constrained, or perturbed variants recover guarantees
\citep{daskalakis2019last,wei2021linear,abe2023last,meng2025lastiterate}, and
optimistic regret minimization over treeplexes recovers them in sequential games
\citep{lee2021last}.
The limit points of optimistic gradient methods are characterized by
\citet{daskalakis2018limit}, reductions carry average-iterate results to the last
iterate \citep{cai2025from}, and optimistic multiplicative weights converge in the
last iterate under smoothness \citep{orabona2026lastiterate}. The separation between
plain regret matching and \RMp{} is itself sharpening: in two-player potential games
\RMp{} converges while regret matching may need exponentially many iterations
\citep{ICLR2026_b0b932e3}, and self-bounding forms tighten the regret envelope of
\RMp{} \citep{dewasurendra2026selfboundingregretmatchingpotential}. Running-average
references restore last-iterate convergence in self-play
\citep{savci2026gariprunningaveragemovingreference}; parameter-free and
scale-invariant variants of regret matching accelerate average-iterate play
\citep{zhang2025scale,meng2026a}, and a scale-invariant extragradient variant of
regret matching converges linearly in the last iterate under a norm-saturation
condition \citep{li2026last}. We contribute a sharp
instance: on the witness family, at finite horizon, the one-sided predictive dynamics exhibits a bang--bang
vertex-visiting mode with its persistent state washed to exactly zero, while alternation
rotates with a shrinking step (\cref{sec:theory}).

Alternation entered practice with CFR+
\citep{tammelin2014solving,tammelin2015solving,bowling2015heads}, was analyzed for
correctness by \citet{burch2019revisiting}, and is a default in modern discounted
variants \citep{brown2019solving,xu2024dynamic}. Recent work scales the practice up:
parallel and real-time solvers \citep{li2026gpu,li2026real}, pruning and learned or
large language model (LLM)-generated abstraction
\citep{li2024rl,li2025efficient,li2026abstraction,li2026effective}, and tooling for
Monte Carlo sampling, certification, evaluation, and tournaments
\citep{li2026av,li2026correlated,li2026agents,li2026icm}.

\section{Preliminaries}
\label{sec:setup}

A finite two-player zero-sum game is given by a payoff matrix $A \in \mathbb{R}^{m\times
n}$; player~1 plays $\vx \in \Delta_m$ (the probability simplex), player~2 plays $\vy \in \Delta_n$, and the value
to player~1 is $\vx^{\top} A \vy$. The instantaneous regret vectors are
$g^1(\vx,\vy) = A\vy - (\vx^{\top}A\vy)\mathbf{1}$ and
$g^2(\vx,\vy) = -A^{\top}\vx + (\vx^{\top}A\vy)\mathbf{1}$. Here and below,
$[\cdot]^{+}$ is the componentwise positive part and $\|\cdot\|_1$ the $\ell_1$ norm.
\RMp{} \citep{hart2000simple,tammelin2014solving} maintains a nonnegative regret state
$\vq$ updated as $\vq \leftarrow [\vq + g]^{+}$ and plays the normalized readout
$\sigma(\vq) = \vq / \|\vq\|_1$ (uniform if $\vq = \mathbf{0}$).

Each algorithm is a deterministic map $\Phi$ on a joint state (regret states plus any
auxiliary variables), iterated from zero initialization with fixed tie-breaking. A fresh
signal can enter an update in two ways: folded into the persistent regret state,
or injected as a transient prediction term used in one readout then discarded. The
protagonists are:
\begin{align}
\Phi_{\ALTmath}:\quad
& \vq^1 \leftarrow [\vq^1 + g^1(\vx_t, \vy_t)]^{+}, \quad \vx_{t+1} = \sigma(\vq^1);
\nonumber\\
& \vq^2 \leftarrow [\vq^2 + g^2(\vx_{t+1}, \vy_t)]^{+}, \quad \vy_{t+1} = \sigma(\vq^2).
\label{eq:alt}\\[2pt]
\Phi_{\OSPRMmath}:\quad
& \vq^1 \leftarrow [\vq^1 + g^1(\vx_t, \vy_t)]^{+}, \quad \vx_{t+1} = \sigma(\vq^1);
\nonumber\\
& \vq^2 \leftarrow [\vq^2 + g^2(\vx_t, \vy_t)]^{+}, \quad
\vm = g^2(\vx_{t+1}, \vy_t);
\nonumber\\
& \vy_{t+1} = \sigma([\vq^2 + \vm]^{+}).
\label{eq:os}
\end{align}
Both dynamics feed player~2 the same information channel each round, player~1's
just-computed strategy, evaluated along each dynamics' own trajectory.
In $\Phi_{\ALTmath}$ it enters the persistent state $\vq^2$; in $\Phi_{\OSPRMmath}$ the
persistent state accumulates only stale increments $g^2(\vx_t,\vy_t)$ and the fresh
signal appears as a transient prediction term $\vm$, discarded after the readout. This
is the canonical additive one-step-prediction form of \PRM{} \citep{farina2021faster}
with a \emph{perfect} predictor: the most literal realization of ``alternation as
optimism.'' The ten further arms in our catalog (simultaneous baseline \SIM, two-sided
\PRM{}, proximal damping \GSPROX$(\lambda)$, staleness \DELAY-$d$, proximal-point \PP,
two-sided optimism \TSOPT) appear in \cref{sec:catalog,app:protocol}.

We track the \emph{profile distance} $D_t = \|\vx^{\mathrm{A}}_t - \vx^{\mathrm{B}}_t
\|_1 + \|\vy^{\mathrm{A}}_t - \vy^{\mathrm{B}}_t\|_1$ between two dynamics run in
lockstep (at iteration start) and
the \emph{final gap}, the duality gap (exploitability sum) of the linearly averaged
strategies after $T$ iterations. Speed comparisons are reported as
$\mathrm{gap}_{\mathrm{arm}} / \mathrm{gap}_{\ALTmath}$ on identical games. Throughout,
$\kappa$ denotes the condition number of the payoff matrix as constructed, the ratio of its largest to
smallest singular value (a payoff shift leaves regrets unchanged but changes $\kappa$).
\emph{Conventions:} on matrix games the gap is the best-response sum
$\max_{\vx}\vx^\top A\bar\vy - \min_{\vy}\bar\vx^\top A\vy$ of the unshifted linear
averages $(\bar\vx,\bar\vy)$; on extensive-form games, where one \RMp{} state per information
set holds that set's counterfactual regrets, we report exploitability, the best-response
sum divided by two. The anatomy of
\cref{sec:anatomy} uses the shifted average $\bar\vx^{+}$ in its first identity.

We fixed \emph{ex ante} an ordered ladder of equivalence
and separation certificates: (i) \emph{dynamical equivalence} (trajectories coincide or
$D_t \to 0$); (ii) \emph{strong trajectory separation} ($D_t \ge c$ for all large $t$);
(iii) \emph{weak trajectory separation} ($D_t \ge c$ infinitely often); (iv)
\emph{performance separation} (final-gap ratio bounded away from $1$). The registration
targeted (ii) first.

\paragraph{Evidence conventions.}
The primary empirical statistic is continuous:
$\rho_g = \log_{10}(\mathrm{gap}_{\mathrm{arm}}/\mathrm{gap}_{\ALTmath})$ per game, with
medians and interquartile ranges over the suite; an arm ``reproduces'' \ALT{} only if
$\mathrm{median}\,|\rho_g| < \log_{10} 3$. Binary ``hug'' counts (trajectory closeness
plus gap-ratio band) are reported as interpretable summaries with a three-band
sensitivity analysis. The suite comprises $47$ matrix
games (12 pilot, a 20-game size family, a 15-game conditioning family) plus Kuhn, Leduc,
and a liar's-dice witness; all deterministic (fixed seed), run in lockstep against
\ALT{}.

\section{Mechanisms}
\label{sec:catalog}

We formalize the dominant live explanation routes for the alternation speedup, in the
literature and in folklore. Each route is a mechanism family with a testable signature;
we label the four families F1 to F4 in the order they appear below.
\cref{tab:catalog} summarizes the catalog; the arms' exact update rules are given in
\cref{app:protocol}.

\begin{table}[t]
\centering
\caption{The catalog of canonical explanation routes: each family with its
representative arms, the prediction we test, the tier of the test, and the question
it leaves for future explanations. The ``comparison'' tier marks a contrast between
specified algorithms.}
\label{tab:catalog}
\small
\setlength{\tabcolsep}{4pt}
\begin{tabular}{@{}p{3.0cm}p{2.85cm}p{3.7cm}p{2.3cm}p{3.4cm}@{}}
\toprule
\textbf{Family (scope)} & \textbf{Arms} & \textbf{Tested prediction} & \textbf{Test tier} & \textbf{Open question}\\
\midrule
F1: additive one-sided prediction (canonical perfect one-step form) &
\OSPRM &
$D_t \to 0$ or decaying; same-order gap &
Theorem~\ref{thm:t1} + suite &
fresh signal enters the \emph{persistent} state\\
\addlinespace[2pt]
F2: play-level proximal damping ($\lambda$-damped alternation) &
\GSPROX$(0.25)$, \GSPROX$(1.0)$ &
the cost of damping is of the same order as the cost of staleness at equal
perturbation &
empirical (suite) &
why half the persistence is as cheap as one percent of the freshness\\
\addlinespace[2pt]
F3: staleness interpolation (\ALT{} as ``fresher \SIM'') &
\DELAY-$d$, $d\in\{1,2,4,8,16\}$ ($d{=}1$ is \SIM) &
performance interpolates smoothly between \SIM{} and \ALT{} along the staleness
axis &
empirical (staleness law) &
how a damping or persistence mechanism yields the measured staleness slope\\
\addlinespace[2pt]
F4: standard two-sided optimism / RVU stabilization (three standard forms) &
\PRM{}, \PP{}, \TSOPT &
same-shape dynamics (comparable drift) and same-order speed &
empirical (three-regime cross) &
must keep the advantage across conditioning and persist on tree witnesses\\
\bottomrule
\end{tabular}
\end{table}

The first family, F1, is the most literal reading in its \emph{canonical} additive
form: a \PRM-style transient
prediction term \citep{rakhlin2013optimization,farina2021faster} with a \emph{perfect}
one-step predictor on the alternation side, making the comparison information-matched
(\cref{eq:alt,eq:os}). Making half of the fresh signal persistent already meets the median criterion, and the
cost of removing persistence grows far more slowly with the perturbation than the cost
of staleness (\cref{sec:exp-f3,sec:signatures}).

The second family, F2, starts from the resemblance of alternation to a Gauss--Seidel
sweep, since each player's update uses the other's newest value. A natural surrogate
damps the alternating
response through proximal-style mixing
$\sigma_{\mathrm{play}} = (\sigma_{\mathrm{tgt}} + \lambda\sigma_{\mathrm{old}})/(1+\lambda)$,
recovering \ALT{} exactly at $\lambda = 0$. The measured question is the price of damping at equal perturbation. The dial
reaches \ALT{} only as $\lambda\to0$, and at the smallest tested setting
($\lambda=10^{-3}$) it already costs $1.66\times$.

The third family, F3, reads alternation as ``a less stale simultaneous update.'' If
that reading were right, performance should interpolate smoothly along the staleness
axis. The \DELAY-$d$ family (player~2 sees
player~1's strategy from $d$ steps ago; $d=0$ is \ALT{}, $d=1$ is \SIM) and the mixing
dial of \cref{sec:exp-f3} span this axis.

The fourth family, F4, is standard two-sided optimism with RVU-type stability, the
strongest established acceleration
mechanism in learning in games \citep{rakhlin2013online,syrgkanis2015fast}. We test its
three standard realizations: two-sided \PRM{} \citep{farina2021faster}, a
proximal-point/extragradient-style variant \PP{} (one inner fixed-point step,
in the spirit of the ExRM$^{+}$ form of \citealp{farina2023regret}), and two-sided perfect
one-step optimism \TSOPT{} (which degenerates to \OSPRM{} restricted to one side).

Whether a family membership could produce these regime crossings is not decided by
the crossings alone; what the suite establishes is that the canonical families as
implemented do not reproduce \ALT{} at the median criterion at any catalog setting, while the weakest settings of all three dials ($s = 0.01$, $\lambda \le 0.005$, $\mu \ge 0.5$) do
(\cref{sec:experiments}), and that their advantage over \ALT{} changes sign across
conditioning and tree depth while \ALT{}'s own advantage over \SIM{} does not.

\section{Theory}
\label{sec:theory}

This section gives the conditional trajectory certificate behind F1, and the exact
local mechanics on both sides of it.

% ---------------------------------------------------------------------------
% 5b_general_mechanism.tex — 新增小节, 置于 5_theory.tex 开头（\section 行与导语之后、"Witness family" 之前）
% 证明在 A2_proofs_ext.tex。宏 \vu \vv \vp \vf \vg \vell 在 math_commands.tex。
% ---------------------------------------------------------------------------
\label{sec:general-mechanism}
Before the $2\times 2$ analysis, we isolate what is true for \emph{every} matrix game
and \emph{every} trajectory: the two ways of feeding a signal into an \RMp{} player
retain different states after the same readout. For a state $\vq \in
\mathbb{R}^{k}_{\ge 0}$ and a signal $\vg \in \mathbb{R}^{k}$ write
$U(\vq, \vg) = [\vq + \vg]^{+}$ and $\sigma(\vq) = \vq/\|\vq\|_1$ (uniform at
$\mathbf{0}$). \emph{Persistent injection} plays $\sigma(U(\vq,\vg))$ and keeps
$U(\vq,\vg)$ as the new state (this is how \ALT{} uses the fresh signal);
\emph{transient injection} plays $\sigma(U(\vq,\vm))$ and keeps $\vq$ (this is how
\OSPRM{} and every additive optimistic scheme use a prediction $\vm$). The analytic
ingredient is a normalization bound; proofs are in \cref{app:proofs-ext}.

\begin{lemma}[Normalization is $2/\|\vv\|_1$-Lipschitz at $\vv$]
\label{lem:norm-lip}
For $\vu, \vv \in \mathbb{R}^{k}_{\ge 0}$ with $\vv \neq \mathbf{0}$,
$\|\sigma(\vu) - \sigma(\vv)\|_1 \le 2\,\|\vu - \vv\|_1 / \|\vv\|_1$.
\end{lemma}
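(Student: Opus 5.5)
Two cases, according to whether $\vu$ vanishes. The nontrivial case $\vu\neq\mathbf{0}$ is the standard add-and-subtract estimate for normalized vectors; the degenerate case $\vu=\mathbf{0}$ is forced by the convention that $\sigma(\mathbf{0})$ is uniform.

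\emph{Case $\vu\neq\mathbf{0}$.} Insert the intermediate point $\vu/\|\vv\|_1$ and apply the triangle inequality:
\[
\Bigl\|\frac{\vu}{\|\vu\|_1}-\frac{\vv}{\|\vv\|_1}\Bigr\|_1
\le \Bigl\|\frac{\vu}{\|\vu\|_1}-\frac{\vu}{\|\vv\|_1}\Bigr\|_1+\frac{\|\vu-\vv\|_1}{\|\vv\|_1}.
\]
The first term equals
\[
\|\vu\|_1\Bigl|\frac{1}{\|\vu\|_1}-\frac{1}{\|\vv\|_1}\Bigr|=\frac{\bigl|\|\vv\|_1-\|\vu\|_1\bigr|}{\|\vv\|_1}.
\]
By the reverse triangle inequality this is at most $\|\vu-\vv\|_1/\|\vv\|_1$. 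Adding the two terms gives the constant $2$. Nonnegativity of $\vu,\vv$ is not needed in this case.

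\emph{Case $\vu=\mathbf{0}$.} Here $\sigma(\vu)$ is uniform, and the right-hand side equals
\[
2\|\vu-\vv\|_1/\|\vv\|_1=2\|\vv\|_1/\|\vv\|_1=2.
\]
The left-hand side is an $\ell_1$ distance between two points of the probability simplex, hence at most $2$. This is where nonnegativity is used: it guarantees that $\sigma(\vv)$ lies in $\Delta_k$.

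\emph{Main obstacle.} The only delicate point is the $\vu=\mathbf{0}$ convention. The argument works because the diameter bound for the simplex and the right-hand side coincide at exactly $2$. There is no further difficulty, and I would write the proof as these two cases.
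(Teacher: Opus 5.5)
Your proof is correct and follows the paper's argument: the same case split on $\vu=\mathbf{0}$ (where the right-hand side equals $2$ and the left-hand side is at most the $\ell_1$ diameter of the simplex), and for $\vu\neq\mathbf{0}$ the same decomposition through $\vu/\|\vv\|_1$ followed by the reverse triangle inequality. Your remark that nonnegativity is not needed when $\vu\neq\mathbf{0}$ is also right: the paper cites $\vu\ge\mathbf{0}$ at that step, but the identity $\|c\,\vu\|_1=|c|\,\|\vu\|_1$ holds for every $\vu$.
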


\begin{lemma}[Transient injection is scale-free at small state]
\label{lem:g1}
Let $[\vm]^{+} \neq \mathbf{0}$. Then (i) $\sigma(U(\mathbf{0}, \vm)) =
\sigma([\vm]^{+}) = \sigma([\lambda \vm]^{+})$ for every $\lambda > 0$; and (ii) for
every $\vq \ge \mathbf{0}$,
$\|\sigma(U(\vq, \vm)) - \sigma([\vm]^{+})\|_1 \le 2\,\|\vq\|_1 / \|[\vm]^{+}\|_1$.
\end{lemma}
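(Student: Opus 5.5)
Part (i) is a direct computation. Since $\mathbf{0}+\vm=\vm$, we have $U(\mathbf{0},\vm)=[\vm]^{+}$, and by hypothesis this vector is nonzero. The readout is therefore the honest normalization $[\vm]^{+}/\|[\vm]^{+}\|_1$, not the uniform fallback. For $\lambda>0$ the positive part is positively homogeneous, $[\lambda\vm]^{+}=\lambda[\vm]^{+}$. The factor $\lambda$ then cancels between numerator and $\ell_1$ norm, so $\sigma([\lambda\vm]^{+})=\sigma([\vm]^{+})$. No earlier result is needed here beyond the definition of $\sigma$.

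For part (ii), I will invoke \cref{lem:norm-lip} with $\vu=U(\vq,\vm)=[\vq+\vm]^{+}$ and $\vv=[\vm]^{+}$. Both vectors are nonnegative, and $\vv\neq\mathbf{0}$ by hypothesis, so the lemma applies. This holds even when $\vu=\mathbf{0}$, provided \cref{lem:norm-lip} is read with the uniform convention at $\mathbf{0}$. That case cannot actually arise, since $\vq\ge\mathbf{0}$ gives $[\vq+\vm]^{+}\ge[\vm]^{+}\neq\mathbf{0}$ componentwise. The lemma yields
\[
\|\sigma(U(\vq,\vm))-\sigma([\vm]^{+})\|_1\le 2\,\|[\vq+\vm]^{+}-[\vm]^{+}\|_1/\|[\vm]^{+}\|_1 .
\]

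It then suffices to show $\|[\vq+\vm]^{+}-[\vm]^{+}\|_1\le\|\vq\|_1$. This follows from the scalar fact that $t\mapsto[t]^{+}$ is $1$-Lipschitz: $|[a+b]^{+}-[b]^{+}|\le|a|$ coordinatewise. Summing over the $k$ coordinates with $a=q_i$ and $b=m_i$ gives the claim. Combining the two displays proves (ii).

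The only step requiring care is confirming that the hypotheses of \cref{lem:norm-lip} hold, namely nonnegativity of both arguments and $\vv\neq\mathbf{0}$. Both follow from the definition of $U$ and the assumption $[\vm]^{+}\neq\mathbf{0}$, so I expect no real obstacle. The essential point is choosing $[\vm]^{+}$, rather than the possibly small state $\vq$, as the anchor $\vv$ in the Lipschitz bound. This places $\|[\vm]^{+}\|_1$ in the denominator and makes the bound scale-free as $\|\vq\|_1\to0$.
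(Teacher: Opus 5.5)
Your proposal is correct and matches the paper's proof step for step. Part (i) uses $U(\mathbf{0},\vm)=[\vm]^{+}$, $[\lambda\vm]^{+}=\lambda[\vm]^{+}$ and the $0$-homogeneity of $\sigma$; part (ii) uses the coordinatewise $1$-Lipschitz bound $\|[\vq+\vm]^{+}-[\vm]^{+}\|_1\le\|\vq\|_1$ and then \cref{lem:norm-lip} anchored at $\vv=[\vm]^{+}$ (your side remark that $\vu\neq\mathbf{0}$ is correct but not needed, since \cref{lem:norm-lip} already covers $\vu=\mathbf{0}$).
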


\begin{lemma}[Persistent injection has gain at most $2/\|\vq\|_1$]
\label{lem:g2}
For every $\vq \ge \mathbf{0}$ with $\vq \ne \mathbf{0}$ and every $\vg$,
$\|\sigma(U(\vq, \vg)) - \sigma(\vq)\|_1 \le 2\,\|\vg\|_1 / \|\vq\|_1$.
\end{lemma}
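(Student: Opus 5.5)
The statement to prove is \cref{lem:g2}. The plan is to reduce it to \cref{lem:norm-lip}, taking $\vv = \vq$ and $\vu = U(\vq,\vg) = [\vq+\vg]^{+}$.

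Both vectors are nonnegative and $\vq \ne \mathbf{0}$, so \cref{lem:norm-lip} applies. It gives
\[
\|\sigma(U(\vq,\vg)) - \sigma(\vq)\|_1 \le \frac{2\,\|[\vq+\vg]^{+} - \vq\|_1}{\|\vq\|_1}.
\]
One subtlety: if $U(\vq,\vg)=\mathbf{0}$, then $\sigma$ returns the uniform vector rather than $\vu/\|\vu\|_1$. The lemma as stated only requires $\vv\neq\mathbf{0}$, so if its proof covers $\vu=\mathbf{0}$ with the uniform convention, nothing more is needed. If not, the case is handled directly. Both $\sigma(\vq)$ and the uniform vector lie in the simplex, so their $\ell_1$ distance is at most $2$. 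Also, $[\vq+\vg]^{+}=\mathbf{0}$ forces $\vg \le -\vq$ componentwise, which gives $\|\vg\|_1 \ge \|\vq\|_1$. The right-hand side $2\|\vg\|_1/\|\vq\|_1$ is therefore at least $2$, and the bound holds trivially.

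The remaining step is the componentwise contraction $\bigl|[q_i+g_i]^{+} - q_i\bigr| \le |g_i|$ for each $i$. It holds because $q_i \ge 0$ gives $q_i = [q_i]^{+}$, and $t\mapsto[t]^{+}$ is $1$-Lipschitz, so
\[
\bigl|[q_i+g_i]^{+} - [q_i]^{+}\bigr| \le |g_i|.
\]
Summing over $i$ gives $\|U(\vq,\vg)-\vq\|_1 \le \|\vg\|_1$. Substituting this into the displayed bound yields the claim.

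There is no real obstacle here. The only point needing care is the degenerate readout at $U(\vq,\vg)=\mathbf{0}$ under the uniform convention, and the trivial-bound argument above settles it. All the analytic work sits in \cref{lem:norm-lip}, which we take as given.
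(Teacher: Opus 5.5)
Your proposal is correct and follows the paper's proof: you reduce to \cref{lem:norm-lip} with $\vv = \vq$ and bound $\|[\vq+\vg]^{+} - \vq\|_1 \le \|\vg\|_1$ coordinatewise, which the paper does by a two-case check and you do equivalently via the $1$-Lipschitz positive part. Your separate treatment of $U(\vq,\vg)=\mathbf{0}$ is valid but unnecessary, since the paper's proof of \cref{lem:norm-lip} already covers $\vu=\mathbf{0}$ under the uniform convention.
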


\begin{theorem}[Readout interpolation; any game, any run]
\label{thm:gain}
Let $A \in \mathbb{R}^{m \times n}$ and consider player~2's \RMp{} state $\vq^2_t$ with
state readout $\vy_t = \sigma(\vq^2_t)$ in any dynamics (the played strategy under persistent injection); let $\vf_t$ be any signal with
$\|\vf_t\|_1 \le 2n\|A\|_{\max}$ (every instantaneous regret vector satisfies this).
Both injection paths produce the same immediate readout $\sigma([\vq^2_t + \vf_t]_+)$;
they differ in which state survives. The readout interpolates between the two directions
according to the mass ratio:
\begin{enumerate}
\item[(a)] If $\vq^2_t \ne 0$, then
\[
\|\sigma([\vq^2_t+\vf_t]_+) - \vy_t\|_1 \le 4n\|A\|_{\max}/\|\vq^2_t\|_1,
\]
so a move of
size $\varepsilon$ requires $\|\vq^2_t\|_1 \le 4n\|A\|_{\max}/\varepsilon$.
\item[(b)] If $[\vf_t]_+ \ne 0$, then
\[
\|\sigma([\vq^2_t+\vf_t]_+) - \sigma([\vf_t]_+)\|_1 \le 2\|\vq^2_t\|_1/\|[\vf_t]_+\|_1,
\]
and
at $\vq^2_t = 0$ the readout is the scale-free direction $\sigma([\vf_t]_+)$ however small
$\|\vf_t\|$ is. If $n = 2$ and the two components of $\vf_t$ have opposite signs,
$\sigma([\vf_t]_+)$ is a vertex.
\end{enumerate}
Which regime a dynamics occupies is determined along its trajectory:
persistent injection retains $[\vq^2_t+\vf_t]_+$ and can accumulate mass, transient
injection retains $\vq^2_t$ and accumulates only the stale increments. The dichotomy below measures the regimes on the
witness family (\ALT{} regret mass $37$--$60$; \OSPRM{} persistent state $\le 8.5\times10^{-3}$).
\end{theorem}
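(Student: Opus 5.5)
The theorem is a direct assembly of \cref{lem:norm-lip,lem:g1,lem:g2}, applied with $k=n$ and $\vq=\vq^2_t$.

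\textbf{Common readout.} Persistent injection plays $\sigma(U(\vq^2_t,\vf_t))$, and transient injection with prediction $\vm=\vf_t$ plays $\sigma(U(\vq^2_t,\vf_t))$ as well. Both readouts are therefore $\sigma([\vq^2_t+\vf_t]_+)$ by definition. The states that survive the round are $U(\vq^2_t,\vf_t)$ and $\vq^2_t$ respectively, which is exactly the retention statement at the end of the theorem.

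\textbf{Part (a).} Apply \cref{lem:g2} with $\vq=\vq^2_t\neq\mathbf{0}$ and $\vg=\vf_t$. Since $\sigma(\vq^2_t)=\vy_t$, this gives
\[
\|\sigma([\vq^2_t+\vf_t]_+)-\vy_t\|_1\le 2\|\vf_t\|_1/\|\vq^2_t\|_1\le 4n\|A\|_{\max}/\|\vq^2_t\|_1 .
\]
The ``move of size $\varepsilon$'' clause is the contrapositive: if the left side is at least $\varepsilon$, rearranging yields $\|\vq^2_t\|_1\le 4n\|A\|_{\max}/\varepsilon$.

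\textbf{Signal bound.} I would verify the parenthetical claim that every instantaneous regret vector satisfies $\|\vf_t\|_1\le 2n\|A\|_{\max}$. Take $g^2(\vx,\vy)=-A^\top\vx+(\vx^\top A\vy)\mathbf{1}$. Each coordinate is a difference of two convex combinations of entries of $A$, so each coordinate has absolute value at most $2\|A\|_{\max}$. Summing over $n$ coordinates gives the bound. For $g^1$ the same argument gives $2m\|A\|_{\max}$, but the theorem only concerns player~2.

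\textbf{Part (b).} The first claim is \cref{lem:g1}(ii) with $\vm=\vf_t$ and $\vq=\vq^2_t$. At $\vq^2_t=0$, \cref{lem:g1}(i) gives $\sigma(U(\mathbf{0},\vf_t))=\sigma([\vf_t]_+)=\sigma([\lambda\vf_t]_+)$ for every $\lambda>0$, which is the scale-freeness. For the vertex claim, take $n=2$ with components of opposite signs. Then $[\vf_t]_+$ has exactly one positive coordinate, so its normalization is a standard basis vector, i.e.\ a vertex.

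\textbf{Expected difficulty.} None of these steps is a real obstacle; the only care needed is in the signal bound. The closing measured sentence about the regimes on the witness family is empirical and not part of the proof.
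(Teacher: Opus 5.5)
Your proposal is correct and follows the paper's proof essentially step for step. Part (a) is \cref{lem:g2} with $\vg=\vf_t$ combined with the signal bound $\|\vf_t\|_1\le 2n\|A\|_{\max}$, which the paper also derives from each coordinate of $\vf_t$ being a payoff average minus a weighted average of such. Part (b) is \cref{lem:g1}, and the $n=2$ vertex claim holds because $[\vf_t]_+$ then has exactly one nonzero coordinate.
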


The theorem has no assumption on the game, horizon, or initial state: a persistently
injected player cannot jump once its regret mass is large, while a transiently injected
player jumps to the sign pattern of the current signal whenever its persistent mass is
small \emph{relative to the signal mass}. Both paths receive the same information
channel, and what differs is whether that signal can ever be read at unbounded gain. The
local readout inequalities behind both hypotheses follow from the parts: part~(b)
at zero state with $n=2$ produces exact vertices, \cref{lem:readout} gives the domination
threshold at positive state, and part~(a) the vanishing angular step of \cref{lem:complex} when
$S^2_t \to \infty$ (\cref{lem:growth}).

The analysis runs on a one-parameter witness family.

For $a \in (1, 4]$ consider the zero-sum game
\begin{equation}
A(a) = \begin{pmatrix} a & -1 \\ -1 & 1 \end{pmatrix},
\qquad x^{*} = y^{*} = \frac{2}{a+3},
\end{equation}
whose unique Nash equilibrium is interior. Parameterizing strategies by their first
component ($\vx = (x, 1-x)$, $\vy = (y, 1-y)$) and writing
$\delta^1(y) = (a+3)y - 2$ and $\delta^2(x) = 2 - (a+3)x$ for the utility differences:

\begin{lemma}[Sign structure]
\label{lem:sign}
For all $\vx, \vy$:
\[
\begin{aligned}
g^1(\vx,\vy) &= \big((1-x)\,\delta^1(y),\, -x\,\delta^1(y)\big),\\
g^2(\vx,\vy) &= \big((1-y)\,\delta^2(x),\, -y\,\delta^2(x)\big).
\end{aligned}
\]
In particular, for $y \in (0,1)$ and $\delta^2(x) \neq 0$ the two components of $g^2$
have strictly opposite signs; at $y = 1$, $g^2 = (0, -\delta^2(x))$; at $y = 0$,
$g^2 = (\delta^2(x), 0)$.
\end{lemma}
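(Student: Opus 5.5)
The statement is a direct computation from the definitions of the instantaneous regret vectors in \cref{sec:setup}, so the plan is to compute $A\vy$, $A^{\top}\vx$ and $\vx^{\top}A\vy$ explicitly for $A(a)$ with $\vx=(x,1-x)$, $\vy=(y,1-y)$, and then read off the sign claims.

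\textbf{Player~1's regret.} First compute $A\vy = \big(ay-(1-y),\,-y+(1-y)\big) = \big((a+1)y-1,\,1-2y\big)$. The difference of its two components is $(a+1)y-1-(1-2y)=(a+3)y-2=\delta^1(y)$. I will not expand $\vx^{\top}A\vy$ fully. Instead I use the general two-action identity: if $\vu=(u_1,u_2)$ and $\vx=(x,1-x)$, then
\[
\vu-(\vx^{\top}\vu)\mathbf{1}=\big((1-x)(u_1-u_2),\,-x(u_1-u_2)\big),
\]
since $u_1-xu_1-(1-x)u_2=(1-x)(u_1-u_2)$, and similarly for the second component. Applying this with $\vu=A\vy$ gives $g^1$ at once.

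\textbf{Player~2's regret.} For $g^2$, write $g^2=\vv-(\vy^{\top}\vv)\mathbf{1}$ with $\vv=-A^{\top}\vx$, using $\vx^{\top}A\vy=\vy^{\top}A^{\top}\vx$. Since $A$ is symmetric, $A^{\top}\vx=\big((a+1)x-1,\,1-2x\big)$, so $v_1-v_2=-\big((a+3)x-2\big)=\delta^2(x)$. The same identity, with $y$ in place of $x$, yields $g^2=\big((1-y)\delta^2(x),\,-y\,\delta^2(x)\big)$.

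\textbf{Sign claims.} These follow by inspection. For $y\in(0,1)$, both factors $1-y$ and $y$ are positive, so the two components are $\delta^2(x)$ times a positive number and times a negative number, hence have strictly opposite signs when $\delta^2(x)\neq0$. Substituting $y=1$ and $y=0$ gives the boundary forms. The equilibrium value $x^*=y^*=2/(a+3)$ is consistent, because $\delta^1$ and $\delta^2$ vanish exactly there.

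The only place requiring care is bookkeeping of signs in player~2's regret, specifically $\vv=-A^{\top}\vx$ together with the $+(\vx^{\top}A\vy)\mathbf{1}$ term. Factoring through the two-action identity avoids any quadratic expansion, so I expect no genuine obstacle.
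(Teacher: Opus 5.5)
Your proposal is correct and follows essentially the same route as the paper. The paper likewise writes $g^2$ as $\vu - (y u_1 + (1-y)u_2)\mathbf{1}$ with $\vu = -A^{\top}\vx$, notes $u_1 - u_2 = \delta^2(x)$, factors out $(1-y)$ and $-y$, and treats $g^1$ identically; the only difference is that you compute $A\vy$ and $A^{\top}\vx$ explicitly (correctly, using the symmetry of $A(a)$), which the paper leaves implicit.
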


Both $\Phi_{\ALTmath}$ and $\Phi_{\OSPRMmath}$ are run from zero initialization, and the
update rules are verified against the implementation.

The next lemmas isolate how each injection path turns the same signal into
qualitatively different motion. Proofs are in \cref{app:proofs}.

\begin{lemma}[Transient readout: exact-vertex characterization]
\label{lem:readout}
In $\Phi_{\OSPRMmath}$, write $\Delta = \delta^2(x_{t+1})$ and $\vq^{2+} = (\alpha, \beta)
\ge \mathbf{0}$ for the post-update persistent state, so that $\vm = ((1-y_t)\Delta,
-y_t\Delta)$. Then
\begin{align*}
\vy_{t+1} = e_1 &\iff \beta \le y_t\Delta \ \text{ and } \ \alpha + (1-y_t)\Delta > 0,\\
\vy_{t+1} = e_2 &\iff \alpha \le -(1-y_t)\Delta \ \text{ and } \ \beta - y_t\Delta > 0,
\end{align*}
The readout vector $[\vq^{2+} + \vm]^{+}$ is $\mathbf{0}$ iff both left-hand
inequalities hold, and then $\vy_{t+1}$ is uniform by convention; if it is nonzero
and neither vertex condition holds, $\vy_{t+1}$ is interior, uniform exactly when its
two coordinates coincide. In particular an exactly
zero disfavored coordinate ($\beta = 0$ when $\Delta > 0$, $\alpha = 0$ when $\Delta < 0$)
yields a vertex for every signal of that sign with $y_t$ off the opposing boundary.
\end{lemma}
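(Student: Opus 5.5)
The proof is a direct computation on the two coordinates of the readout vector, organized around the sign structure of \cref{lem:sign}. First, apply \cref{lem:sign} to the transient term: $\vm = g^2(\vx_{t+1}, \vy_t) = ((1-y_t)\Delta, -y_t\Delta)$ with $\Delta = \delta^2(x_{t+1})$. Then write the readout argument explicitly as
\[
\vr = [\vq^{2+} + \vm]^{+} = \big([\alpha + (1-y_t)\Delta]^{+},\ [\beta - y_t\Delta]^{+}\big).
\]
By the definition of $\sigma$, $\vy_{t+1} = \sigma(\vr)$ equals $e_1$ exactly when the first coordinate of $\vr$ is strictly positive and the second is zero. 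The vertex $e_1$ cannot arise from the uniform convention at $\vr = \mathbf{0}$, since for $n=2$ uniform is $(1/2,1/2) \ne e_1$.

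For $e_1$, the second coordinate of $\vr$ vanishes iff $\beta - y_t\Delta \le 0$, i.e.\ $\beta \le y_t\Delta$. The first is positive iff $\alpha + (1-y_t)\Delta > 0$. This gives the first equivalence. The $e_2$ case is symmetric, swapping the roles of the two coordinates: the first coordinate vanishes iff $\alpha \le -(1-y_t)\Delta$, and the second is positive iff $\beta - y_t\Delta > 0$.

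Next, $\vr = \mathbf{0}$ iff both ``$\le$'' conditions hold, and then $\vy_{t+1}$ is uniform by convention. If $\vr \ne \mathbf{0}$ and neither vertex condition holds, both coordinates of $\vr$ are positive: a zero coordinate together with a nonzero $\vr$ would force one of the vertex conditions. Hence $\sigma(\vr)$ is interior, and it is uniform iff the two coordinates coincide.

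For the final claim, take $\Delta > 0$, $\beta = 0$, and $y_t < 1$ (off the opposing boundary). Then $\beta = 0 \le y_t\Delta$, because $y_t \ge 0$. Also $\alpha + (1-y_t)\Delta \ge (1-y_t)\Delta > 0$, because $\alpha \ge 0$, $\Delta > 0$, and $1 - y_t > 0$. So $\vy_{t+1} = e_1$. The case $\Delta < 0$, $\alpha = 0$, $y_t > 0$ is symmetric and gives $e_2$: here $\alpha = 0 \le -(1-y_t)\Delta$, and $\beta - y_t\Delta \ge -y_t\Delta > 0$. These conditions are independent of $|\Delta|$, which matches the scale-free behavior in \cref{lem:g1}(i).

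There is no substantive obstacle; the only care needed is the boundary bookkeeping. One must keep strict and non-strict inequalities straight, note that the uniform convention never produces a vertex when $n=2$, and read ``off the opposing boundary'' as $y_t < 1$ for $\Delta > 0$ and $y_t > 0$ for $\Delta < 0$. That reading is exactly what makes the positive coordinate strictly positive.
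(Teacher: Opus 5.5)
Your proposal is correct and takes essentially the same approach as the paper: use the sign structure to write the readout vector as $\big([\alpha+(1-y_t)\Delta]^{+},[\beta-y_t\Delta]^{+}\big)$, then read off the vertex, zero, and interior cases from which coordinates truncate. Your explicit check in the final claim that the favored coordinate is strictly positive is a small but welcome addition; the paper leaves it implicit and notes only that the truncation inequality holds.
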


\begin{lemma}[Best-response-matched suppression]
\label{lem:washout}
In $\Phi_{\OSPRMmath}$, if $\vy_t = e_1$ and $\delta^2(x_t) \ge 0$, then the persistent
state is washed down monotonically:
$\vq^{2+}_{t+1} = \big(q^2_L,\, [q^2_R - \delta^2(x_t)]^{+}\big) \le \vq^2_t$
componentwise (and symmetrically for $e_2$ with $\delta^2 \le 0$).
\end{lemma}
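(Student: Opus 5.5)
The proof is a direct unrolling of one round of $\Phi_{\OSPRMmath}$ from \cref{eq:os}, combined with the explicit form of $g^2$ from \cref{lem:sign}. The key observation is that the persistent state is updated with the \emph{stale} increment $g^2(\vx_t,\vy_t)$, which depends only on $\vy_t$ and $x_t$. So if $\vy_t = e_1$, i.e.\ $y_t = 1$, the update touches only one coordinate of $\vq^2$, and the sign condition on $\delta^2(x_t)$ makes that coordinate decrease.

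The steps, in order, are as follows.

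\begin{enumerate}
\item Substitute $y_t = 1$ into \cref{lem:sign}. This gives
\[
g^2(\vx_t,\vy_t) = \big(0,\, -\delta^2(x_t)\big),
\]
which is the $y=1$ boundary case stated in that lemma.
\item Write $\vq^2_t = (q^2_L, q^2_R) \ge \mathbf{0}$ and apply the \RMp{} update $\vq^2 \leftarrow [\vq^2 + g^2(\vx_t,\vy_t)]^{+}$ componentwise:
\[
\vq^{2+}_{t+1} = \big([q^2_L + 0]^{+},\, [q^2_R - \delta^2(x_t)]^{+}\big) = \big(q^2_L,\, [q^2_R - \delta^2(x_t)]^{+}\big).
\]
The first coordinate simplifies because $q^2_L \ge 0$.
\item Check the componentwise bound. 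The first coordinate is unchanged. For the second, $\delta^2(x_t) \ge 0$ gives $q^2_R - \delta^2(x_t) \le q^2_R$, and since $q^2_R \ge 0$ the positive part is also at most $q^2_R$. Hence $\vq^{2+}_{t+1} \le \vq^2_t$ componentwise.
\item Note that the transient term $\vm$ affects only the readout $\vy_{t+1}$ and never the persistent state. So the identity above is the full persistent update, and no term is missing.
\item Handle the symmetric case. If $\vy_t = e_2$, then $y_t = 0$ and $g^2 = (\delta^2(x_t), 0)$. With $\delta^2(x_t) \le 0$, the first coordinate becomes $[q^2_L + \delta^2(x_t)]^{+} \le q^2_L$ and the second is unchanged.
\end{enumerate}

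For the word ``monotonically'', I would read it as this one-step non-increase. Iterating it over any run of consecutive rounds on which the hypothesis holds then gives a componentwise non-increasing sequence of persistent states. Along such a run the disfavored coordinate $q^2_R$ drops by $\delta^2(x_t)$ each round until it reaches exactly zero.

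Combined with the last sentence of \cref{lem:readout}, this zero coordinate is what produces vertex play. I would point that connection out in the text but not prove it here.

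There is no real obstacle in this argument. The only point needing care is bookkeeping: the stale increment uses $\vy_t$, not $\vy_{t+1}$, so the vertex assumption on $\vy_t$ is exactly what zeroes one coordinate of $g^2$.
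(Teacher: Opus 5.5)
Your proof is correct and matches the paper's argument: substitute $y_t = 1$ into \cref{lem:sign} to get the stale increment $(0,-\delta^2(x_t))$, note that it is nonpositive when $\delta^2(x_t)\ge 0$, read off the componentwise decrease, and treat the $e_2$ case symmetrically. Your check that $\vm$ never enters the persistent state is a useful piece of bookkeeping the paper leaves implicit. One remark outside the lemma is slightly overstated: along such a run $q^2_R$ reaches exactly zero only once the accumulated $\delta^2(x_t)$ exceeds its entering value, as in \cref{prop:epoch}(i).
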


\begin{lemma}[Vertex--uniform classification]
\label{lem:classify}
In $\Phi_{\OSPRMmath}$, whenever the persistent state hits zero, the next play is an
exact vertex or exactly uniform, classified completely by
$(\mathrm{sign}\,\Delta,\, y_t)$.
\end{lemma}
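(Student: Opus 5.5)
The plan is to specialize \cref{lem:readout} to the case where the post-update persistent state is zero, $\vq^{2+} = (\alpha,\beta) = (0,0)$. Then the readout vector is $[\vm]^{+}$ with $\vm = ((1-y_t)\Delta, -y_t\Delta)$ and $\Delta = \delta^2(x_{t+1})$, so $\vy_{t+1} = \sigma([\vm]^+)$ depends only on $\Delta$ and $y_t$. \cref{lem:g1}(i) confirms that the magnitude of $\Delta$ is irrelevant; only its sign matters. I read ``whenever the persistent state hits zero'' as $\vq^{2+}=\mathbf{0}$ at the moment of readout, since that is the state entering $\sigma([\vq^{2+}+\vm]^+)$.

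The core is a case split on $\mathrm{sign}\,\Delta \in \{+,0,-\}$ and on $y_t \in \{0\}\cup(0,1)\cup\{1\}$, using the sign pattern of \cref{lem:sign}.
\begin{itemize}
\item If $\Delta > 0$ and $y_t < 1$: the first coordinate $(1-y_t)\Delta$ is positive and the second $-y_t\Delta \le 0$, so $[\vm]^+ = ((1-y_t)\Delta, 0)$ and $\vy_{t+1} = e_1$. This matches the vertex condition of \cref{lem:readout} with $\alpha=\beta=0$.
\item If $\Delta > 0$ and $y_t = 1$: $\vm = (0,-\Delta)$, so $[\vm]^+ = \mathbf{0}$ and the play is uniform by convention.
\item If $\Delta < 0$: symmetrically, $\vy_{t+1} = e_2$ for $y_t > 0$, and the play is uniform at $y_t = 0$.
\item If $\Delta = 0$: $\vm = \mathbf{0}$ and the play is uniform.
\end{itemize}
These cases exhaust all possibilities, so the next play is always an exact vertex or exactly uniform. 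The map $(\mathrm{sign}\,\Delta, y_t) \mapsto \{e_1, e_2, \text{uniform}\}$ given by the cases is the claimed complete classification.

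The only step needing care is the boundary cases $y_t \in \{0,1\}$, where one component of $\vm$ vanishes. I will check these directly against the biconditionals of \cref{lem:readout}, including its strict-inequality clause $\alpha + (1-y_t)\Delta > 0$, which fails exactly at $y_t = 1$. That check shows the uniform outcomes arise precisely there and on $\Delta = 0$.

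I do not expect a real obstacle. The main risk is the interpretation of ``hits zero''. If the intended zero is the pre-update state $\vq^2_t$ rather than $\vq^{2+}$, I would add one step using \cref{lem:washout}-type monotonicity: the update adds $g^2(\vx_t,\vy_t)$, whose sign pattern again comes from \cref{lem:sign}. I would then argue in the same way from the explicit form of $\vq^{2+}$. This is less clean, so I would state the lemma for $\vq^{2+}=\mathbf{0}$.
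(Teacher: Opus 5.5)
Your proposal is correct and is essentially the paper's proof. Both read ``hits zero'' as $\vq^{2+}_{t+1}=\mathbf{0}$, reduce the readout to $\sigma([\vm]^{+})$, and use the sign pattern of \cref{lem:sign} to get $e_1$ or $e_2$ according to $\mathrm{sign}\,\Delta$, with uniform play exactly when $\Delta=0$ or the signal points into the boundary $y_t\in\{0,1\}$.

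Drop the fallback for a zero pre-update state $\vq^2_t$, though. In that case $\vq^{2+}=[g^2(\vx_t,\vy_t)]^{+}$ is generally nonzero: for example, $y_t=\tfrac12$, $\delta^2(x_t)>0>\Delta$ and $|\Delta|<\delta^2(x_t)$ give the interior readout $\tfrac12(\delta^2(x_t)-|\Delta|,\,|\Delta|)$ before normalization. So the local classification fails under that reading.
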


\begin{proposition}[Epoch structure of \OSPRM{} on the witness]
\label{prop:epoch}
Call a maximal run of consecutive rounds on which $\operatorname{sign}\delta^{2}(x_t)$ is
constant and nonzero an \emph{epoch}; it is \emph{positive} if $\delta^{2} > 0$ (favoring
$e_1$). Let $E = \{s, \dots, e\}$ be a positive epoch, let $r = q^{2}_{R,s}$ be the
disfavored persistent coordinate entering round $s$, and note that $\delta^{2}(x_{t+1}) >
0$ for $t \in \{s, \dots, e-1\}$. Then:
\begin{enumerate}
\item[(i)] \textbf{Monotone washout.} For $t \in \{s,\dots,e\}$,
$q^{2}_{R,t+1} = [\,q^{2}_{R,t} - y_t\,\delta^{2}(x_t)\,]^{+} \le q^{2}_{R,t}$ and
$q^{2}_{L,t+1} = q^{2}_{L,t} + (1-y_t)\delta^{2}(x_t) \ge q^{2}_{L,t}$. The disfavored
coordinate is exactly zero from the first $t \in E$ with $\sum_{u=s}^{t} y_u\,\delta^{2}(x_u)
\ge r$ onward.
\item[(ii)] \textbf{Absorption.} If $q^{2}_{R,t+1} = 0$ and $q^{2}_{L,t+1} > 0$ for some
$t \in \{s, \dots, e-1\}$, then $\vy_u = e_1$ and $q^{2}_{R,u} = 0$ for every $u \in
\{t+1, \dots, e\}$: every remaining round of the epoch is a vertex time.
\item[(iii)] \textbf{Absorbed epochs are finite.} If the absorption condition of (ii) is
met at some $t \in \{s, \dots, e-1\}$, then $E$ is finite and the next epoch is
negative: on an infinite run of $\vy_u = e_1$ player~1's state has $q^1_L$ strictly
increasing and $q^1_R$ nonincreasing, so $x_u$ is strictly increasing while $x_u < 1$, and
either $x_u = 1$ or $x_u \to 1$, both forcing $\delta^2(x_u) < 0$ eventually, a
contradiction.
\item[(iv)] \textbf{Recurrence criterion.} If infinitely many epochs $E$ of length at
least two satisfy
\begin{equation}
\sum_{u \in E,\, u < e} w_u\,|\delta^{2}(x_u)| \;\ge\; r_E,
\qquad w_u = \begin{cases} y_u & E \text{ positive}\\ 1 - y_u & E \text{ negative}\end{cases}
\label{eq:washout-criterion}
\end{equation}
with the favored coordinate positive at the washout round, then vertex times are
infinite and branch~(a) of \cref{thm:t1} holds (under \cref{ass:c1}).
\end{enumerate}
The statements for negative epochs follow by exchanging the roles of $L$ and $R$.
\end{proposition}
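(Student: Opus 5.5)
The plan is to unwind the \OSPRM{} update for player~2 coordinatewise using \cref{lem:sign}, handling (i)--(iv) in order; all four parts follow from the explicit two-coordinate recursion. In $\Phi_{\OSPRMmath}$ the persistent update is $\vq^2 \leftarrow [\vq^2 + g^2(\vx_t,\vy_t)]^{+}$. By \cref{lem:sign}, $g^2(\vx_t,\vy_t) = ((1-y_t)\delta^2(x_t), -y_t\delta^2(x_t))$. On a positive epoch $\delta^2(x_t)>0$ and $y_t\in[0,1]$, so the first increment is $\ge 0$ and the second is $\le 0$. Hence the positive part is inactive on $L$ and active only on $R$, which gives the two recursions in (i) and their monotonicity. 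Summing the decrements gives $q^2_{R,t+1} = [r - \sum_{u=s}^{t} y_u\delta^2(x_u)]^{+}$. This holds because iterating $[\,\cdot - c_u]^+$ with $c_u\ge 0$ equals $[\,\cdot-\sum c_u]^+$, and once the coordinate is zero it stays zero. That yields the washout threshold.

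For (ii), I fix $t$ with $q^2_{R,t+1}=0$ and $q^2_{L,t+1}>0$ and induct on $u\in\{t+1,\dots,e\}$, proving at each step that $\vy_u=e_1$ and $q^2_{R,u}=0$. The readout at round $u-1$ uses the post-update state $(\alpha,\beta)$ with $\beta=0$ and $\alpha>0$. It also uses $\Delta=\delta^2(x_u)>0$, which holds by the stated fact $\delta^2(x_{t'+1})>0$ for $t'<e$. \cref{lem:readout} then gives $\vy_u=e_1$: the condition $\beta\le y\Delta$ holds since $\beta=0$, and $\alpha+(1-y)\Delta>0$ holds since $\alpha>0$. The base case needs $y_t$ treated generally, but the two inequalities hold for every $y_t\in[0,1]$. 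Monotonicity from (i) keeps $\beta=0$ and $\alpha$ nondecreasing, so the induction closes.

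For (iii), I assume the epoch is infinite, so $\vy_u=e_1$ forever from (ii). Then $g^1(\vx_u,e_1)=((1-x_u)\delta^1(1), -x_u\delta^1(1))$ with $\delta^1(1)=a+1>0$. This makes $q^1_L$ strictly increase whenever $x_u<1$, while $q^1_R$ is nonincreasing, so $x_u=q^1_L/(q^1_L+q^1_R)$ is nondecreasing. Moreover $q^1_L$ grows by at least $(1-x_u)(a+1)$ per step. If $x_u$ stayed at most $c<1$, then $q^1_L\to\infty$ with $q^1_R$ bounded, so $x_u\to1$, a contradiction. Hence $x_u\to 1$ or $x_u$ hits $1$. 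Since $x^*=2/(a+3)<1$, $\delta^2(x_u)=2-(a+3)x_u<0$ eventually, contradicting positivity. So $E$ is finite, and the next nonzero sign is negative. The subtle point is that a round with $\delta^2=0$ could separate epochs; I will argue that the sign at the boundary is still determined by continuity of the increasing $x_u$ past $x^*$.

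For (iv), condition \eqref{eq:washout-criterion} together with (i) forces $q^2_{R}=0$ at some round $t+1$ with $t<e$. With the favored coordinate positive there, (ii) makes round $t+1$ a vertex time. Infinitely many such epochs give infinitely many vertex times, which is the antecedent of branch~(a) of \cref{thm:t1} under \cref{ass:c1}. I expect the main obstacle to be index bookkeeping between the washout round and the readout round, specifically ensuring $t\le e-1$ so the next readout still sees $\Delta>0$; the sum restricted to $u<e$ is exactly what guarantees this. Negative epochs follow by the symmetry $L\leftrightarrow R$, $y\leftrightarrow 1-y$.
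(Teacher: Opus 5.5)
Your (i), (ii) and (iv) follow the paper's argument. Your closed form $q^2_{R,t+1}=[\,r-\sum_{u=s}^{t}y_u\delta^2(x_u)]^{+}$ is fine, and so is using the monotonicity of (i) to propagate $\beta=0<\alpha$ instead of the paper's explicit vertex increment $(0,-\delta^2)$. One small index slip: under the definition in \cref{sec:dichotomy}, $\vy_{t+1}=e_1$ makes round $t$ a vertex time, not round $t+1$.

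The gap is the last clause of (iii), that the next epoch is negative. Finiteness plus maximality gives only $\delta^2(x_{e+1})\le 0$. If this is strict you are done. If $\delta^2(x_{e+1})=0$, i.e.\ $x_{e+1}=x^*$, continuity of an increasing $x_u$ does not settle it. The dynamics is discrete, and $x_u$ is known to increase only while $\vy_u=e_1$, which (ii) guarantees only for $u\le e$. Whether $x$ moves past $x^*$ depends entirely on $\vy_{e+1}$. If $y_{e+1}<y^*$, player~1 would move back below $x^*$ and the next epoch would be positive. If $y_{e+1}=y^*$, $x$ would not move at all.

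The missing step is one more application of \cref{lem:readout}. By (i) at $t=e$, the post-update state is $(\alpha,\beta)=(q^2_{L,e+1},0)$ with $\alpha\ge q^2_{L,t+1}>0$. The readout at round $e$ uses $\Delta=\delta^2(x_{e+1})=0$, so $\vm=\mathbf{0}$ and $\vy_{e+1}=\sigma((\alpha,0))=e_1$. Player~1 then updates against $e_1$ once more. Since $x_{e+1}=x^*\in(0,1)$, $q^1_L$ strictly increases while $q^1_R$ does not, so $x_{e+2}>x^*$ and $\delta^2(x_{e+2})<0$. Round $e+1$ therefore belongs to no epoch, and the next epoch starts at $e+2$ and is negative. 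This is how the paper closes the case.
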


\begin{lemma}[Untruncated two-action \RMp{} update is a complex multiplication]
\label{lem:complex}
Let $\vq = (q_L, q_R)$ with $q_L, q_R > 0$, $S = q_L + q_R$, $x = q_L/S$, and let the
increment have the two-action form $\vg = \delta\,(1-x,\,-x)$ for a scalar $\delta$
(\cref{lem:sign}: every instantaneous regret vector of a two-action player has this
form). Write $z = q_L + \mathrm{i}\,q_R$ and $\varepsilon = \delta/S$. If
$\vq + \vg \ge \mathbf{0}$ componentwise, then the updated state is
\begin{equation}
z^{+} = z\,(1 - \mathrm{i}\,\varepsilon),\quad
|z^{+}|^{2} = |z|^{2} + \delta^{2}\varphi(x),\quad
\arg z^{+} = \arg z - \arctan\varepsilon,
\label{eq:complex}
\end{equation}
with $\varphi(u) = u^{2} + (1-u)^{2} \in [\nicefrac12, 1]$. The update leaves the open first quadrant
(some component of $\vq + \vg$ is nonpositive) if and only if
$\arg z - \arctan\varepsilon \notin (0, \nicefrac{\pi}{2})$. The played first
component is $x = \cos\alpha/(\cos\alpha + \sin\alpha)$ with $\alpha = \arg z$,
strictly decreasing in $\alpha$; hence $\delta > 0$ moves $x$ up and $\delta < 0$ moves
it down, by an angle of at most $|\delta|/S$.
\end{lemma}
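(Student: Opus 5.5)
The plan is a direct computation in complex coordinates; no earlier result is needed beyond the two-action form of the increment from \cref{lem:sign}.

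\textbf{Step 1 (multiplicative form).} Write $x = q_L/S$ and $1-x = q_R/S$. Then the increment is
\[
\vg = \delta\,(q_R/S,\, -q_L/S) = \varepsilon\,(q_R, -q_L).
\]
If $\vq+\vg\ge\mathbf 0$, the positive part acts as the identity, so
\[
z^{+} = (q_L + \varepsilon q_R) + \mathrm{i}\,(q_R - \varepsilon q_L).
\]
Expanding $z(1-\mathrm{i}\varepsilon) = (q_L + \mathrm{i} q_R)(1 - \mathrm{i}\varepsilon)$ gives $q_L + \varepsilon q_R + \mathrm{i}(q_R - \varepsilon q_L)$. The two expressions agree, which proves the first identity in \cref{eq:complex}.

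\textbf{Step 2 (modulus).} Take moduli:
\[
|z^{+}|^2 = |z|^2(1+\varepsilon^2) = |z|^2 + \delta^2\,(q_L^2+q_R^2)/S^2 = |z|^2 + \delta^2\varphi(x).
\]
The range $\varphi\in[\nicefrac12,1]$ follows from elementary convexity of $\varphi$ on $[0,1]$, with its minimum at $u=\nicefrac12$ and maximum at the endpoints.

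\textbf{Step 3 (argument).} We have $\arg(1-\mathrm{i}\varepsilon) = -\arctan\varepsilon$, because $1-\mathrm{i}\varepsilon$ lies in the right half-plane. Arguments add under multiplication modulo $2\pi$. Here $\arg z\in(0,\pi/2)$ and $\arctan\varepsilon\in(-\pi/2,\pi/2)$, so the sum $\theta = \arg z - \arctan\varepsilon$ lies in $(-\pi/2,\pi)$. On that interval no wrap-around occurs, so $\theta$ is a valid argument of $z^{+}$.

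\textbf{Step 4 (leaving the quadrant).} A point $z^{+} \ne 0$ with representative argument $\theta\in(-\pi/2,\pi)$ lies in the open first quadrant iff $\theta\in(0,\pi/2)$. Note that $z^{+}\neq 0$ always, since $|z^{+}| \ge |z| > 0$. Hence some component of $\vq+\vg$ is nonpositive iff $\theta\notin(0,\pi/2)$. The only point needing care here is that the representative of the argument stays inside the window $(-\pi/2,\pi)$, which Step 3 established.

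\textbf{Step 5 (played component and monotonicity).} Write $q_L = |z|\cos\alpha$ and $q_R = |z|\sin\alpha$. Then
\[
x = \frac{\cos\alpha}{\cos\alpha+\sin\alpha} = \frac{1}{1+\tan\alpha},
\]
which is strictly decreasing on $(0,\pi/2)$. If $\delta>0$, then $\varepsilon>0$, so the angle decreases and $x$ increases; if $\delta<0$, the reverse holds. The size of the angular step is $|\arctan\varepsilon|\le|\varepsilon| = |\delta|/S$.

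The main obstacle is the branch bookkeeping in Steps 3 and 4, namely justifying the additive argument formula without a $2\pi$ ambiguity. The window bound above handles it; everything else is routine algebra.
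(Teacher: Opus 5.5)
Your proposal is correct and follows the paper's proof essentially line for line. Both arguments write the increment as $\vg = \varepsilon(q_R,-q_L)$ to get $z^{+} = z(1-\mathrm{i}\varepsilon)$, compute the modulus from $|1-\mathrm{i}\varepsilon|^2 = 1+\varepsilon^2$ with $q_L = xS$ and $q_R = (1-x)S$, identify the first quadrant with positivity of both components, use that $x$ is monotone in $\alpha$, and bound the step by $|\arctan\varepsilon| \le |\varepsilon|$. Your check that $\arg z - \arctan\varepsilon \in (-\pi/2,\pi)$, so no $2\pi$ ambiguity arises, is a detail the paper leaves implicit; Step 4 really concerns the untruncated sum $\vq+\vg$, which your Step 1 algebra identifies with $z(1-\mathrm{i}\varepsilon)$ without the nonnegativity hypothesis.
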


\begin{assumption}[C0: interior play]
\label{ass:c0}
There is $T_0$ with $x^{\ALTmath}_t, y^{\ALTmath}_t \in (0,1)$ for all $t > T_0$.
\end{assumption}
Both \ALT{} states are nonzero from the first update on, so a zero
component of $\vq^i$ means vertex play, and \cref{ass:c0} says that \ALT{} truncates
neither state after $T_0$; \cref{ass:c1} implies it.

\begin{lemma}[Normalizer growth under interior play]
\label{lem:growth}
Under \cref{ass:c0}, for every $T > T_0$,
\begin{equation}
\begin{aligned}
|z^{1}_{T}|^{2} &= |z^{1}_{T_0}|^{2} + \sum_{t=T_0}^{T-1} \delta^{1}(y_t)^{2}\varphi(x_t),\\
|z^{2}_{T}|^{2} &= |z^{2}_{T_0}|^{2} + \sum_{t=T_0}^{T-1} \delta^{2}(x_{t+1})^{2}\varphi(y_t),
\end{aligned}
\label{eq:growth}
\end{equation}
and $S^{i}_{T} \ge |z^{i}_{T}|$. Consequently, for player~1 exactly one of the following
holds: (i) $\sum_t \delta^{1}(y_t)^{2} = \infty$, and then $S^{1}_t \to \infty$ and the
angular step $\arctan(|\delta^{1}(y_t)|/S^{1}_t) \to 0$; (ii) $\sum_t \delta^{1}(y_t)^{2}
< \infty$, and then $y_t \to y^{*}$. The same holds for player~2 with
$\delta^{2}(x_{t+1})$ and $x_t \to x^{*}$. In particular, under \cref{ass:c0} at least
one of the following holds: \ALT{} converges to the equilibrium, in which case
\cref{ass:c1} holds with every $\eta < \min(x^{*}, 1-x^{*})$, or at least one normalizer
diverges.
\end{lemma}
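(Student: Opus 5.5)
The plan is to chain \cref{lem:complex} along the untruncated \ALT{} trajectory, then use the resulting growth identity to split into the two cases.

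\emph{Step 1 (the identity).} Under \cref{ass:c0}, for $t > T_0$ both players play interior strategies, so every component of $\vq^1_t$ and $\vq^2_t$ is strictly positive. The updated states $\vq^1_{t+1}$ and $\vq^2_{t+1}$ are also interior, so the positive part in the update is inactive: $\vq+\vg \ge \mathbf{0}$, and in fact $\vq+\vg>\mathbf{0}$. By \cref{lem:sign}, player~1's increment at round $t$ is $\delta^1(y_t)(1-x_t,-x_t)$, with $x_t = q^1_L/S^1$ exactly the readout. \cref{lem:complex} then gives
\[
|z^1_{t+1}|^2 = |z^1_t|^2 + \delta^1(y_t)^2\varphi(x_t).
\]
For player~2, the \ALT{} increment is $g^2(\vx_{t+1},\vy_t) = \delta^2(x_{t+1})(1-y_t,-y_t)$, so the same lemma gives the corresponding step with $\delta^2(x_{t+1})$ and $\varphi(y_t)$. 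Telescoping from $T_0$ to $T$ yields \eqref{eq:growth}. One boundary point needs care: the first step starts at $T_0$, so I either need interiority at $T_0$ itself or should start the sum at $T_0+1$. I would handle this by taking $T_0$ large enough, for instance shifting it by one index. The bound $S \ge |z|$ is just $\|\cdot\|_1 \ge \|\cdot\|_2$ on the nonnegative orthant.

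\emph{Step 2 (the dichotomy for player~1).} Since $\varphi \ge \nicefrac12$, the sum $\sum_t \delta^1(y_t)^2\varphi(x_t)$ diverges if and only if $\sum_t\delta^1(y_t)^2$ does.
\begin{itemize}
\item In case (i), $|z^1_T| \to \infty$, hence $S^1_T \to \infty$. Since $|\delta^1| \le a+3$ is bounded, $\arctan(|\delta^1|/S^1_t) \to 0$.
\item In case (ii), the terms of a convergent series tend to zero, so $\delta^1(y_t) \to 0$. Because $\delta^1(y) = (a+3)(y-y^*)$ is affine with nonzero slope, this gives $y_t \to y^*$.
\end{itemize}
Exactly one case holds because the two conditions are complementary. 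Player~2 is handled identically with $\delta^2(x_{t+1}) = -(a+3)(x_{t+1}-x^*)$, giving $x_{t+1} \to x^*$ and hence $x_t \to x^*$.

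\emph{Step 3 (the final claim).} If neither normalizer diverges, both players are in case (ii), so $(x_t,y_t) \to (x^*,y^*)$ and \ALT{} converges. Since $x^* = y^* = 2/(a+3) \in (0,1)$, for any $\eta < \min(x^*,1-x^*)$ the iterates eventually stay in $[\eta,1-\eta]$. Without the statement of \cref{ass:c1} at hand I assume it is an interior-band condition of this type, so it holds.

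I expect no step to be a real obstacle. The only delicate points are the index bookkeeping at $T_0$ and checking that \cref{ass:c0} truly excludes truncation. The latter holds because an interior readout requires both state coordinates to be positive, which forces $\vq+\vg>\mathbf{0}$ and hence no clipping.
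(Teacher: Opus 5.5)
Your route is the paper's: chain \cref{lem:complex} over the truncation-free rounds, use $S\ge|z|$ and $\varphi\ge\nicefrac12$ for the dichotomy, and conclude from the affine form of $\delta^i$. Two bookkeeping points need the paper's extra observation.

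\emph{First}, interior play does not by itself force both state coordinates to be positive. The zero state reads out as uniform, which is interior. The paper closes this by noting that both \ALT{} states are nonzero from the first update on, with $\vq^1_2=((a-1)/4,0)$ and $\vq^2_2=(0,(a+1)/2)$. It then shows that a nonzero \RMp{} state never returns to zero, since $\langle\vq_t,\vg_t\rangle=0$ is incompatible with $\vg_t\le-\vq_t$ when $\vq_t\ne\mathbf{0}$.

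\emph{Second}, shifting $T_0$ proves the identity from $T_0+1$, which is not the stated one; no shift is needed. At a nonzero state, round $t$ is truncation-free exactly when the output $x_{t+1}$ (resp.\ $y_{t+1}$) is interior, and \cref{ass:c0} guarantees this for every $t\ge T_0$. The computation behind \cref{lem:complex} uses only $S_t>0$ and $\vq_t+\vg_t\ge\mathbf{0}$, not positivity of both coordinates of $\vq_t$. So round $T_0$ is covered even if $\vq_{T_0}$ sits on a boundary. Note also that $T_0\ge 2$ is forced, because $x_2=1$, so $\vq^i_{T_0}\neq\mathbf{0}$.

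With these two remarks your Steps~2 and~3 go through as written.
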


Together, \cref{lem:readout,lem:washout,lem:classify,lem:complex} give the mechanism
picture of \cref{fig:phase}. In $\Phi_{\OSPRMmath}$, playing a vertex aligned with the
stale signal washes the disfavored persistent component down (\cref{lem:washout}). A vertex is forced exactly when the disfavored persistent coordinate does not exceed
the disfavored signal mass ($y_t\Delta$ or $(1-y_t)|\Delta|$), a condition an exactly
zero coordinate satisfies for every signal of that sign (\cref{lem:readout}); so the
transient path is an unbounded-gain bang--bang readout
that switches between exact vertices, with the
oscillation concentrated on the injected side (measured $|\delta^2(x_t)|$ tail p99 $\le
0.026$). In $\Phi_{\ALTmath}$ the same channel feeds a growing normalizer. Under \cref{ass:c0},
either the normalizer diverges, with the angular step vanishing
(\cref{lem:complex,lem:growth}), or $y_t$ converges to the equilibrium. Measured, the
regret mass reaches $37$--$60$ on the tail, the readout gain $2/S$ falls accordingly, and
the measured tail-maximum first-coordinate deviation from equilibrium is $\le 0.04$ at
$T = 10^6$ (versus $0.52$--$0.71$ for \OSPRM; these are half the full profile $\ell_1$
distances). The failure of F1 lies in the injection path.

\begin{figure}[t]
\centering
\includegraphics[width=\linewidth]{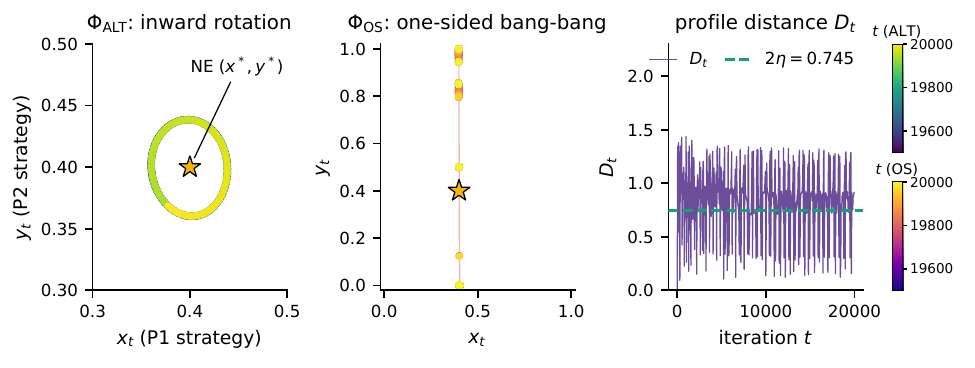}
\caption{Witness $a=2$, tail window $t \in [19501, 20000]$.
Left: \ALT{} executes an
inward rotation (measured) around $(x^{*}, y^{*}) = (0.4, 0.4)$. Middle: \OSPRM{}, updated on
the same information channel, sticks near $x^{*}$ on the un-injected side while $y$ jumps
between exact vertices. Right: $D_t$ remains of order one on the window, at least $2\eta$
at every vertex time.}
\label{fig:phase}
\Description{Three-panel figure for the two-by-two witness at $a=2$. Left: the ALT dynamics rotate inward around the interior equilibrium at (0.4, 0.4). Middle: the OS-PRM+ dynamics, updated on the same information channel each round, keep the un-injected coordinate near its equilibrium value while the injected coordinate jumps between two exact vertices. Right: the profile distance remains of order one on the window, at least the constant at every vertex time.}
\end{figure}

\Cref{prop:epoch} turns the recurrence of vertex times into a per-epoch inequality
\eqref{eq:washout-criterion} between the washing signal mass and the disfavored persistent
coordinate. On epochs of length one the sum in \eqref{eq:washout-criterion} is empty and
the readout uses the next epoch's sign; the vertex condition there is the sign-flip
inequality of \cref{lem:flip}, which the audit checks directly. On the witness grid
$54$--$93\%$ of tail epochs have length one; on the remaining epochs the inequality
holds on $43$--$100\%$ (range over $a$), length-one
epoch exits account for $26$--$86\%$ of vertex times, and the
disfavored coordinate is exactly zero at $60$--$93\%$ of vertex times (\cref{tab:epoch});
branch~(b) of \cref{thm:t1} is the event that from some round on no round meets either
vertex condition of \cref{lem:readout}: at every round with $\Delta \ne 0$ and $y_t$ off
the opposing boundary, the disfavored persistent coordinate exceeds the disfavored
signal mass.

\label{sec:dichotomy}
Call $t$ a \emph{vertex time} of $\Phi_{\OSPRMmath}$ if $\vy^{\mathrm{OS}}_{t+1}$ is an
exact vertex, i.e.\ exactly one coordinate of $[\vq^{2+}_{t+1} + \vm_t]_+$ is nonzero.
\Cref{lem:readout} gives the sufficient condition (domination: $\delta^2(x^{\mathrm{OS}}_{t+1})>0$,
$y^{\mathrm{OS}}_t<1$, $q^{2+}_{t+1,R} \le y^{\mathrm{OS}}_t\,\delta^2(x^{\mathrm{OS}}_{t+1})$, or
the symmetric one); a vertex is also retained when $\vy^{\mathrm{OS}}_t$ is already a
vertex aligned with the signal and its favored persistent coordinate is positive (\cref{lem:washout}). On the witness grid the two cases
occur at tail frequencies $0.30$ and $0.20$ (median; \cref{tab:prec}).

\begin{assumption}[C1: interior band, finite-horizon verified]
\label{ass:c1}
There exist $T_0$ and $\eta \in (0, \nicefrac12)$ such that $y^{\ALTmath}_t \in
[\eta, 1-\eta]$ and $x^{\ALTmath}_t \in (0,1)$ for all $t \ge T_0$.
\emph{Evidence:} on all 60 grid points $a \in [1.05, 4]$, over the tail half
of $T = 2\times 10^5$ iterations, $y^{\ALTmath}_t$ stays in a band with $\eta(a) \in
[0.265, 0.453]$ and $x^{\ALTmath}_t$ within $0.04$ of $x^{*} \in [0.29, 0.49]$
(\cref{tab:lastiter}), and all \ALT{} truncations are inactive, so both coordinates are
interior on the tail. The band is identical to
$2\times10^{-12}$ relative across four arithmetic precisions and persists at
$T = 10^6$ on five points. Whether it holds for all $t$ is open (\cref{conj:cinf}).
\end{assumption}

\begin{theorem}[Information-matched separation dichotomy]
\label{thm:t1}
Fix $a \in (1,4]$ and run $\Phi_{\ALTmath}$ and $\Phi_{\OSPRMmath}$ from zero
initialization. Under \cref{ass:c1}, exactly one of the following holds:
\begin{enumerate}
\item[(a)] \textbf{Weak trajectory separation.} There are infinitely many vertex times,
and at each such $t \ge T_0$,
\begin{equation}
D_{t+1} \;\ge\; \|\vy^{\ALTmath}_{t+1} - \vy^{\mathrm{OS}}_{t+1}\|_1 \;\ge\; 2\eta .
\end{equation}
\item[(b)] \textbf{Vertex visits terminate.} Only finitely many vertex times exist.
\end{enumerate}
\end{theorem}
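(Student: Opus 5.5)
The dichotomy itself is a tautology on the set of vertex times: either that set is infinite or it is finite, and the two cases are mutually exclusive. Branch~(b) is exactly the finite case, so it needs no proof. The whole content is the quantitative bound in branch~(a), and I will prove it pointwise: at \emph{every} vertex time $t \ge T_0$, whether or not there are infinitely many, $D_{t+1} \ge 2\eta$. The infinite-recurrence clause then follows immediately. Lemmas~\ref{lem:readout}, \ref{lem:washout} and Proposition~\ref{prop:epoch} only characterize when vertex times occur; I will not need them here.

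\textbf{Step 1 (drop the $x$-coordinates).} By definition of the profile distance, $D_{t+1} = \|\vx^{\ALTmath}_{t+1}-\vx^{\mathrm{OS}}_{t+1}\|_1 + \|\vy^{\ALTmath}_{t+1}-\vy^{\mathrm{OS}}_{t+1}\|_1$. Both terms are nonnegative, so $D_{t+1} \ge \|\vy^{\ALTmath}_{t+1}-\vy^{\mathrm{OS}}_{t+1}\|_1$, which is the first inequality.

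\textbf{Step 2 (the $y$-bound).} Let $t \ge T_0$ be a vertex time. Then $\vy^{\mathrm{OS}}_{t+1} \in \{e_1, e_2\}$, i.e.\ its first coordinate is $y^{\mathrm{OS}}_{t+1} \in \{0,1\}$. By Assumption~\ref{ass:c1} applied at index $t+1 \ge T_0$, $y^{\ALTmath}_{t+1} \in [\eta, 1-\eta]$. On the two-point simplex, $\|(y,1-y)-(y',1-y')\|_1 = 2|y-y'|$. Hence the $y$-distance equals $2\,|y^{\ALTmath}_{t+1} - y^{\mathrm{OS}}_{t+1}|$, and the first-coordinate gap is at least $\eta$ in either case: at least $1-(1-\eta)$ if the vertex is $e_1$, and at least $\eta - 0$ if it is $e_2$. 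This gives the bound $2\eta$.

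\textbf{Step 3 (bookkeeping).} Two points need a sentence each.
\begin{itemize}
\item \emph{Indexing.} ``Vertex time'' refers to the \OSPRM{} readout $\vy^{\mathrm{OS}}_{t+1}$, while the band constraint must hold for the \ALT{} iterate at the same index $t+1$. Because the two dynamics run in lockstep from zero initialization, these are the same iteration, and $t \ge T_0$ puts $t+1$ inside the band.
\item \emph{Uniform fallback.} A readout of $\mathbf{0}$ gives the uniform strategy by convention. That case is not a vertex time, since a vertex time requires exactly one nonzero coordinate, so it is excluded by definition.
\end{itemize}

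I expect no real obstacle: the argument is an immediate consequence of Assumption~\ref{ass:c1}. The nontrivial question, which branch actually holds, is exactly what the theorem leaves open.
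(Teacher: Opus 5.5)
Your proposal is correct and follows the paper's proof essentially line for line. The dichotomy is the finite/infinite split of the vertex-time set. At each vertex time $t \ge T_0$, \cref{ass:c1} at index $t+1$ gives $\|\vy^{\ALTmath}_{t+1} - e\|_1 = 2|y^{\ALTmath}_{t+1} - e_{(1)}| \ge 2\eta$ for either vertex $e$, and $D_{t+1}$ dominates the $y$-block by definition.
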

The theorem reduces trajectory separation to a recurrence property of \OSPRM{}'s vertex
visits. The proof is in \cref{app:proofs}.

On every grid point the exact-vertex frequency of $\vy^{\mathrm{OS}}$ over the tail half
of $T=2\times10^5$ iterations is $33$--$95\%$ (median $52\%$), stable to $0.8\%$ across
four precisions and persistent at $T=10^6$; the
persistent state satisfies $\bar\varepsilon = \max_t\|\vq^{2,\mathrm{OS}}\|_\infty \le
8.5\times10^{-3}$ with the non-signal component exactly zero on $90\%$ of tail rounds
(median), the washout of \cref{lem:washout}, all at finite horizon. If
vertex times are infinite, \cref{cor:lastiter} makes the \OSPRM{} last iterate
$\Omega(1)$-exploitable infinitely often.

Two consequences hold without \cref{ass:c1}. The first is the corollary below.

\begin{corollary}[Last-iterate exploitability of \OSPRM{} on the witness; no
\cref{ass:c1}]
\label{cor:lastiter}
Fix $a \in (1,4]$. At every vertex time $t$ of $\Phi_{\OSPRMmath}$
(\cref{sec:dichotomy}), the exploitability of the last iterate
$\vy^{\mathrm{OS}}_{t+1}$ against a best response is
\[
\max_{\vx} \vx^{\top} A(a)\, \vy^{\mathrm{OS}}_{t+1} - v(a) \;\in\;
\Big\{ \tfrac{4}{a+3},\ \tfrac{(a+1)^2}{a+3} \Big\} \;\ge\; \tfrac{4}{a+3} \;\ge\;
\tfrac{4}{7},
\]
with $v(a) = \tfrac{a-1}{a+3}$.
Hence whenever vertex times are infinite (branch~(a) of \cref{thm:t1}), the
\OSPRM{} last iterate is $\Omega(1)$-exploitable infinitely often; the statement uses
only the vertex-time definition of \cref{sec:dichotomy}, not the interior-band
assumption.
\end{corollary}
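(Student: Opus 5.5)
The proof is a direct computation at the two exact vertices. Nothing about the trajectory is needed beyond the definition of a vertex time. By the definition in \cref{sec:dichotomy}, at a vertex time $t$ exactly one coordinate of $[\vq^{2+}_{t+1}+\vm_t]_+$ is nonzero, so $\vy^{\mathrm{OS}}_{t+1}\in\{e_1,e_2\}$. The plan is to compute the game value $v(a)$, then evaluate the best-response payoff $\max_{\vx}\vx^\top A(a)\vy$ at $\vy=e_1$ and at $\vy=e_2$, and finally compare the two resulting numbers. \Cref{ass:c1} is never invoked, which is why the statement holds without it.

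\emph{Step 1: the value.} Since the equilibrium is interior with $x^*=y^*=2/(a+3)$, every pure row earns the same payoff against $\vy^*$. Using the first row, $v(a)=a y^*-(1-y^*)=(a+1)y^*-1=\tfrac{2(a+1)}{a+3}-1=\tfrac{a-1}{a+3}$. As a consistency check, the second row gives $-y^*+(1-y^*)=1-\tfrac{4}{a+3}$, which is the same number.

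\emph{Step 2: vertex payoffs.} A linear function on $\Delta_2$ is maximized at a vertex, so the best response value is the larger entry of $A(a)\vy$.
\begin{itemize}
\item For $\vy=e_1$, $A(a)e_1=(a,-1)^\top$, so the maximum is $a$. The exploitability is $a-\tfrac{a-1}{a+3}=\tfrac{a^2+2a+1}{a+3}=\tfrac{(a+1)^2}{a+3}$.
\item For $\vy=e_2$, $A(a)e_2=(-1,1)^\top$, so the maximum is $1$. The exploitability is $1-\tfrac{a-1}{a+3}=\tfrac{4}{a+3}$.
\end{itemize}
Together these give the stated two-element set.

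\emph{Step 3: the bounds.} For $a>1$ we have $(a+1)^2>4$, so the minimum of the set is $\tfrac{4}{a+3}$. Because this quantity is decreasing in $a$ and $a\le 4$, it is at least $\tfrac47$.

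\emph{Step 4: the ``hence'' clause.} Branch~(a) of \cref{thm:t1} supplies infinitely many vertex times. At each of them the last iterate $\vy^{\mathrm{OS}}_{t+1}$ has exploitability at least $4/7$, which is the claimed $\Omega(1)$ bound holding infinitely often.

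There is no genuine obstacle here. The only points needing care are to use the correct sign convention (player~2 minimizes $\vx^\top A\vy$, so its exploitability is $\max_{\vx}\vx^\top A\vy-v$) and to confirm that the vertex-time definition forces $\vy^{\mathrm{OS}}_{t+1}$ to be exactly $e_1$ or $e_2$.
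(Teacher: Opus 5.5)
Your proposal is correct and follows the paper's proof essentially step for step. Both identify $\vy^{\mathrm{OS}}_{t+1}\in\{e_1,e_2\}$ from the vertex-time definition, compute $v(a)=\tfrac{a-1}{a+3}$ from the indifference at $\vy^*$, read off the best-response payoffs $a$ and $1$, and conclude with $(a+1)^2\ge 4$ and monotonicity of $4/(a+3)$ in $a$.
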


The proof is in \cref{app:proofs-ext}.

The second consequence measures the gap between predicting the opponent and predicting
one's own regrets.

\begin{remark}[A perfect opponent prediction is not a perfect regret prediction]
\label{rem:shift}
\OSPRM{} predicts player~1's next strategy exactly, yet the quantity that optimistic
regret bounds control is the error between the predicted and the realized
\emph{regret vector}. Writing $\vell_{t+1} = g^2(\vx_{t+1}, \vy_{t+1})$ for the
increment that the persistent state will actually receive next,
\[
\vm_t - \vell_{t+1} \;=\; \big[u(\vx_{t+1}, \vy_t) - u(\vx_{t+1}, \vy_{t+1})\big]
\,\mathbf{1},
\]
a shift of size equal to the utility change caused by player~2's \emph{own} move.
Positive-part updates are not shift-invariant, so this error is real. The transient path
therefore carries a prediction error of this size at every vertex jump: the RVU-type machinery
\citep{syrgkanis2015fast,farina2021faster} offers no obstruction to \OSPRM{} being
fast, and the obstruction we observe is dynamical, not statistical.
\end{remark}
On the witness the error is $|y_t - y_{t+1}|\,|\delta^2(x_{t+1})|$,
small because $x$ sticks near $x^*$
($|\delta^2|$ tail p99 $\le 0.026$).

\begin{conjecture}[Unconditional separation]
\label{conj:cinf}
For every $a \in (1,4]$, \cref{ass:c1} holds and branch (b) fails from zero
initialization; hence weak trajectory separation holds unconditionally with
$c = 2\eta(a)$.
\end{conjecture}

The constant is nearly saturated in measurement.
At every one of the 60 grid points the measured minimum of $D_t$ over vertex times exceeds
$2\eta(a)$ by less than $1.2\times10^{-3}$ relative across all four precisions and never
falls below it (\cref{fig:scan}, right; at $a=2$: $0.74484$ vs.\ $0.74464$). The $y$-gap
carries essentially all of $D_t$ at vertex times.

\begin{figure}[t]
\centering
\includegraphics[width=\linewidth]{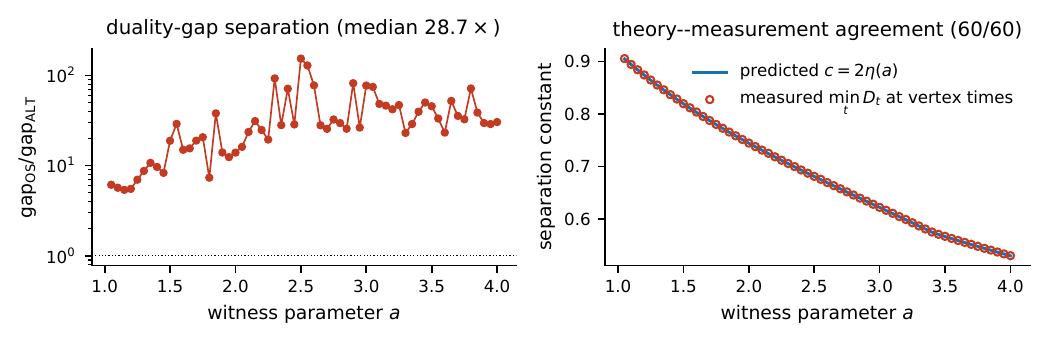}
\caption{The 60-point witness scan $a \in [1.05, 4]$. Left: duality-gap separation
$\mathrm{gap}_{\mathrm{OS}}/\mathrm{gap}_{\ALTmath}$ spans $5$--$154\times$ (median
$28.7\times$). Right: the predicted separation constant $c = 2\eta(a)$ (line) against the
measured $\min_t D_t$ over vertex times (circles); maximum relative deviation
$9.0\times 10^{-4}$ across all 60 points.}
\label{fig:scan}
\Description{Two-panel figure over the 60-point witness scan in $a$ from 1.05 to 4. Left: the ratio of the OS-PRM+ duality gap to the ALT duality gap, spanning a factor of five to a factor of 154 across the scan. Right: the predicted separation constant as a line against the measured minimum vertex-time distance as circles; the two agree up to a maximum relative deviation below one thousandth.}
\end{figure}

The registered ladder targeted strong separation first (\cref{sec:setup}); the 60-point
scan supported only weak separation, and \cref{conj:cinf} records the remaining gap
(\cref{app:ladder}). A robust variant of
\cref{lem:readout} (approximate domination yields $\rho$-near-vertex readout and constant
$2(\eta - \rho)$) is given in \cref{app:proofs}.

% ---------------------------------------------------------------------------
% 5c_anatomy.tex — 置于 5_theory.tex 末尾（dichotomy 之后）。
% 2026-10-06 R3: prop:closed / prop:order / cor:channels 的陈述按页数需要
% 从 sections/A2_proofs_ext.tex 回迁正文, 见 r3_pairs_fills.py。
% 2026-10-08 R4: prop:closed / prop:order 的证明移回 A2（见 r4_trim.py）。
% 2026-10-08 R4b: prop:closed / cor:channels 陈述、machine-check 段、
% cor:lastiter / rem:shift 的证明移回 A2（见 r4_trim2.py）。
% ---------------------------------------------------------------------------
\label{sec:anatomy}

The folk theorem bounds the duality gap of averaged play by the sum of the players'
regrets; for alternating updates the bound acquires a third term, first written down by
\citet{burch2019revisiting}. We observe that the relation is an \emph{identity} and that
the extra term has a closed form in the \RMp{} state, bounded below by a fraction of
the increase of the potential that bounds the regret. This turns ``why is alternation
fast'' into a quantity measurable along every trajectory.

Fix weights $w_t \ge 0$ with $W = \sum_{t=1}^{T} w_t > 0$ (the paper uses $w_t = t$,
the linear averaging of \cref{sec:setup}). For sequences $(\vx_t)_{t \le T+1}$,
$(\vy_t)_{t \le T}$ write $u(\vx,\vy) = \vx^{\top} A \vy$, $\bar\vx = \sum_t w_t
\vx_t / W$, $\bar\vx^{+} = \sum_t w_t \vx_{t+1}/W$, $\bar\vy = \sum_t w_t \vy_t/W$, and
\begin{align*}
R^1 &= \max_{\vx} \textstyle\sum_t w_t\,[u(\vx, \vy_t) - u(\vx_t, \vy_t)],\\
R^2_{\mathrm{stale}} &= \max_{\vy} \textstyle\sum_t w_t\,[u(\vx_t, \vy_t) - u(\vx_t, \vy)],\\
\Delta &= \textstyle\sum_t w_t\,[u(\vx_{t+1}, \vy_t) - u(\vx_t, \vy_t)],\\
R^2_{\mathrm{fresh}} &= \max_{\vy} \textstyle\sum_t w_t\,[u(\vx_{t+1}, \vy_t) - u(\vx_{t+1}, \vy)].
\end{align*}
$R^2_{\mathrm{fresh}}$ is the regret \ALT's second player actually controls (it updates
against $\vx_{t+1}$); $R^2_{\mathrm{stale}}$ is the one \SIM{} and \OSPRM{} control.

\begin{proposition}[Exact decomposition]
\label{prop:identity}
For any sequences and weights,
\begin{equation}
W \cdot \mathrm{gap}(\bar\vx^{+}, \bar\vy) = R^1 + R^2_{\mathrm{fresh}} - \Delta,
\qquad
W \cdot \mathrm{gap}(\bar\vx, \bar\vy) = R^1 + R^2_{\mathrm{stale}}.
\label{eq:identity}
\end{equation}
\end{proposition}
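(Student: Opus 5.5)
Both identities come from writing the duality gap as a difference of two best responses and then using bilinearity of $u$ to pull the weighted averages inside. The plan is to expand each best response in $\bar\vx$, $\bar\vy$ (or $\bar\vx^{+}$) and add and subtract the realized payoffs.

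\emph{Convention.} By \cref{sec:setup}, the gap of a pair $(\bar\vx',\bar\vy)$ is
\[
\mathrm{gap}(\bar\vx',\bar\vy) = \max_{\vx} u(\vx,\bar\vy) - \min_{\vy} u(\bar\vx',\vy).
\]
Since $u$ is linear in each argument, we have $W\,u(\vx,\bar\vy) = \sum_t w_t\,u(\vx,\vy_t)$, $W\,u(\bar\vx,\vy) = \sum_t w_t\,u(\vx_t,\vy)$, and $W\,u(\bar\vx^{+},\vy) = \sum_t w_t\,u(\vx_{t+1},\vy)$.

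\emph{Second identity.} Write
\[
W\max_{\vx} u(\vx,\bar\vy) = \max_{\vx}\textstyle\sum_t w_t[u(\vx,\vy_t) - u(\vx_t,\vy_t)] + \sum_t w_t\,u(\vx_t,\vy_t) = R^1 + P_{\mathrm{stale}},
\]
where $P_{\mathrm{stale}} = \sum_t w_t\,u(\vx_t,\vy_t)$. Similarly,
\[
-W\min_{\vy} u(\bar\vx,\vy) = \max_{\vy}\textstyle\sum_t w_t[u(\vx_t,\vy_t) - u(\vx_t,\vy)] - P_{\mathrm{stale}} = R^2_{\mathrm{stale}} - P_{\mathrm{stale}}.
\]
Adding the two lines cancels $P_{\mathrm{stale}}$ and gives $W\cdot\mathrm{gap}(\bar\vx,\bar\vy) = R^1 + R^2_{\mathrm{stale}}$.

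\emph{First identity.} Player~1's term is unchanged and still equals $R^1 + P_{\mathrm{stale}}$. For player~2 we now add and subtract the fresh payoffs $P_{\mathrm{fresh}} = \sum_t w_t\,u(\vx_{t+1},\vy_t)$:
\[
-W\min_{\vy} u(\bar\vx^{+},\vy) = R^2_{\mathrm{fresh}} - P_{\mathrm{fresh}}.
\]
Summing the two contributions gives
\[
W\cdot\mathrm{gap}(\bar\vx^{+},\bar\vy) = R^1 + R^2_{\mathrm{fresh}} + P_{\mathrm{stale}} - P_{\mathrm{fresh}}.
\]
By definition $P_{\mathrm{fresh}} - P_{\mathrm{stale}} = \Delta$, so the right-hand side is $R^1 + R^2_{\mathrm{fresh}} - \Delta$, as claimed.

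\emph{Points to check.} No step is a real obstacle. Two pieces of bookkeeping need care. First, every additive constant $P$ moves outside the max or min, which uses only linearity of the sum and $\max_{\vy}(c - f(\vy)) = c - \min_{\vy} f(\vy)$. Second, the index ranges must match: $\bar\vx^{+}$ pairs $\vx_{t+1}$ with weight $w_t$ for $t \le T$, which is why $\vx$ is defined up to $T+1$. The identities use no property of \RMp{}, so they hold for arbitrary sequences and weights, as stated.
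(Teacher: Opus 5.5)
Your proof is correct and matches the paper's argument essentially line for line. Bilinearity pulls the weighted averages inside each best response, and adding and subtracting the realized payoffs $P_{\mathrm{stale}}$ and $P_{\mathrm{fresh}}$ turns the two best responses into $R^1$ and $R^2_{\mathrm{fresh}}$ (or $R^2_{\mathrm{stale}}$), with the leftover cross terms equal to exactly $-\Delta$ in the fresh case and cancelling in the stale case.
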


The uniform-weight case of the first identity is Theorem~1 of
\citet{burch2019revisiting}; both equalities are two lines of algebra
(\cref{app:proofs-ext}). Three consequences quantify the content of the relation.

The first gives the closed form of $\Delta$ under plain \RMp.

\begin{proposition}[Closed form of the improvement term under \RMp]
\label{prop:closed}
Let player~1 be plain \RMp: $\vx_t = \sigma(\vq_t)$, $\vg_t = A\vy_t - u(\vx_t,\vy_t)
\mathbf{1}$, $\vq_{t+1} = [\vq_t + \vg_t]^{+}$, $\vx_{t+1} = \sigma(\vq_{t+1})$, and let
$\mathcal{A}_t = \{i : q_{t,i} + g_{t,i} \ge 0\}$ be the untruncated coordinates. If
$S_{t+1} = \|\vq_{t+1}\|_1 > 0$ then
\begin{equation}
\begin{aligned}
(\vx_{t+1} - \vx_t)^{\top} A \vy_t
&= \frac{\langle \vq_{t+1} - \vq_t,\ \vg_t\rangle}{S_{t+1}}\\
&= \frac{\sum_{i \in \mathcal{A}_t} g_{t,i}^{2}
+ \sum_{i \notin \mathcal{A}_t} q_{t,i}\,|g_{t,i}|}{S_{t+1}}
\;\ge\; 0,
\end{aligned}
\label{eq:closed}
\end{equation}
and if $S_{t+1} = 0$ the left-hand side is $0$. Hence $\Delta \ge 0$ for every
weight sequence, recovering Theorem~5 of \citet{burch2019revisiting}, with each
summand equal to the squared untruncated regret increment plus the truncated remainder
$\sum_{i\notin\mathcal{A}_t} q_{t,i}|g_{t,i}|$, divided by the new regret mass.
\end{proposition}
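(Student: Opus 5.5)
The plan is to reduce the left-hand side of \eqref{eq:closed} to an inner product with the regret increment $\vg_t$. Then I expand it coordinatewise according to whether the positive part truncates.

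\textbf{Step 1: rewrite the left-hand side.} Since $\vx_{t+1},\vx_t\in\Delta_m$, the difference $\vx_{t+1}-\vx_t$ has coordinates summing to zero. Subtracting the scalar $u(\vx_t,\vy_t)\mathbf{1}$ from $A\vy_t$ therefore changes nothing:
\[
(\vx_{t+1}-\vx_t)^\top A\vy_t=(\vx_{t+1}-\vx_t)^\top \vg_t .
\]
Next I use the defining property of the instantaneous regret, $\vx_t^\top\vg_t=u(\vx_t,\vy_t)-u(\vx_t,\vy_t)=0$. This gives $(\vx_{t+1}-\vx_t)^\top\vg_t=\vx_{t+1}^\top\vg_t=\langle\vq_{t+1},\vg_t\rangle/S_{t+1}$ when $S_{t+1}>0$.

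I also claim $\langle\vq_t,\vg_t\rangle=0$. If $\vq_t=\mathbf 0$ this is trivial. Otherwise $\vq_t=S_t\vx_t$, and the claim follows from $\vx_t^\top\vg_t=0$. Subtracting it yields the first equality $\langle\vq_{t+1}-\vq_t,\vg_t\rangle/S_{t+1}$.

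\textbf{Step 2: coordinatewise form.} I split the coordinates using $\mathcal A_t$.
\begin{itemize}
\item For $i\in\mathcal A_t$ we have $q_{t+1,i}=q_{t,i}+g_{t,i}$, so the summand is $g_{t,i}^2$.
\item For $i\notin\mathcal A_t$ we have $q_{t+1,i}=0$ and $g_{t,i}<-q_{t,i}\le 0$. The summand is then $-q_{t,i}g_{t,i}=q_{t,i}|g_{t,i}|\ge0$.
\end{itemize}
Every term is nonnegative and $S_{t+1}>0$, so the right-hand side of \eqref{eq:closed} is $\ge0$.

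\textbf{Step 3: the degenerate case $S_{t+1}=0$.} This is the step I expect to need care, because the uniform-readout convention breaks the identity $\vx_{t+1}=\vq_{t+1}/S_{t+1}$.

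Here $q_{t,i}+g_{t,i}\le0$ for all $i$. Suppose $\vq_t\neq\mathbf 0$. Weighting by $\vx_t$ gives $\vx_t^\top(\vq_t+\vg_t)=\vx_t^\top\vq_t=\|\vq_t\|_2^2/S_t>0$, a contradiction. Hence $\vq_t=\mathbf 0$ and $\vx_t$ is uniform.

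Then $\vg_t\le\mathbf0$ componentwise, and $\tfrac1m\sum_i g_{t,i}=\vx_t^\top\vg_t=0$, so $\vg_t=\mathbf 0$. Moreover $\vx_{t+1}$ is also uniform, so $\vx_{t+1}=\vx_t$ and the left-hand side is $0$.

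\textbf{Step 4: conclusion.} Each summand $u(\vx_{t+1},\vy_t)-u(\vx_t,\vy_t)$ of $\Delta$ is exactly the quantity just shown to be nonnegative in both cases. Since the weights satisfy $w_t\ge0$, it follows that $\Delta\ge0$ for every weight sequence. This recovers Theorem~5 of \citet{burch2019revisiting}, with the stated interpretation of each summand.
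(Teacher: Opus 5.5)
Your proof is correct and follows the paper's argument step for step: the zero-sum reduction to $(\vx_{t+1}-\vx_t)^\top\vg_t$, the orthogonality $\langle\vq_t,\vg_t\rangle=0$, the coordinatewise split over $\mathcal{A}_t$, and the reduction of the degenerate case to $\vq_t=\mathbf{0}$, $\vg_t=\mathbf{0}$. The only difference is cosmetic: when $S_{t+1}=0$ you exclude $\vq_t\neq\mathbf{0}$ directly via $0\ge\vx_t^\top(\vq_t+\vg_t)=\|\vq_t\|_2^2/S_t>0$, whereas the paper first forces $g_{t,i}=0$ on the support of $\vx_t$ and then notes that $q_{t,i}+g_{t,i}=q_{t,i}>0$ there.
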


The second bounds the improvement term from below.

\begin{proposition}[The improvement term is a fixed fraction of the regret
potential's increase]
\label{prop:order}
In the setting of \cref{prop:closed} with $\vq_1 = \mathbf{0}$ and uniform weights, let
$G_t^2 = \sum_{s<t}
\|\vg_s\|_2^2$. Then
\begin{equation}
\Delta_T \;\ge\; \frac{1}{\sqrt{m}} \sum_{t=1}^{T} \frac{\|\vg_t|_{\mathcal{A}_t}\|_2^2}{G_{t+1}},
\qquad
R^1_T \;\le\; \max_i q_{T+1,i} \;\le\; G_{T+1}.
\label{eq:order}
\end{equation}
On any window $[T_0, T]$ on which no truncation occurs ($\mathcal{A}_t = [m]$), the sum
telescopes: $\sum_{t=T_0}^{T} \|\vg_t\|_2^2 / G_{t+1} \ge G_{T+1} - G_{T_0}$. So on a truncation-free window $[T_0,T]$ the improvement accumulated on the window is
at least $(G_{T+1}-G_{T_0})/\sqrt{m}$, a $1/\sqrt{m}$ fraction of the \emph{increase}
over the window of the potential that upper-bounds player~1's regret; summands with
$G_{t+1}=0$ are zero.
\end{proposition}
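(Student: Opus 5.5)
The plan is to work summand by summand in $\Delta_T=\sum_{t=1}^{T}(\vx_{t+1}-\vx_t)^{\top}A\vy_t$ (uniform weights), using the closed form of \cref{prop:closed}. Each summand with $S_{t+1}>0$ is
\[
\frac{\sum_{i\in\mathcal{A}_t}g_{t,i}^2+\sum_{i\notin\mathcal{A}_t}q_{t,i}|g_{t,i}|}{S_{t+1}}\;\ge\;\frac{\|\vg_t|_{\mathcal{A}_t}\|_2^2}{S_{t+1}},
\]
because the truncated remainder is nonnegative and can be dropped. Summands with $S_{t+1}=0$ vanish by \cref{prop:closed}. So the first inequality in \eqref{eq:order} reduces to the mass bound $S_{t+1}\le\sqrt{m}\,G_{t+1}$. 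Note that this bound also forces $\|\vg_t|_{\mathcal{A}_t}\|_2=0$ whenever $G_{t+1}=0$, which covers the stated convention for zero summands.

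For the mass bound I would use Cauchy--Schwarz, $S_{t+1}=\|\vq_{t+1}\|_1\le\sqrt{m}\,\|\vq_{t+1}\|_2$, together with the standard \RMp{} potential recursion. Since $\|[\vv]^{+}\|_2\le\|\vv\|_2$,
\[
\|\vq_{t+1}\|_2^2\le\|\vq_t+\vg_t\|_2^2=\|\vq_t\|_2^2+2\langle\vq_t,\vg_t\rangle+\|\vg_t\|_2^2 .
\]
The cross term vanishes. If $\vq_t=\mathbf{0}$ this is trivial. Otherwise $\vq_t=S_t\vx_t$ and $\langle\vx_t,\vg_t\rangle=\vx_t^{\top}A\vy_t-u(\vx_t,\vy_t)=0$. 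Inducting from $\vq_1=\mathbf{0}$ gives $\|\vq_{t+1}\|_2^2\le\sum_{s\le t}\|\vg_s\|_2^2=G_{t+1}^2$, which yields the first inequality.

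For the regret bound, induction gives $\vq_{T+1}\ge\sum_{s\le T}\vg_s$ componentwise: since $[a]^+\ge a$, we have $q_{t+1,i}\ge q_{t,i}+g_{t,i}$. With uniform weights, $R^1_T=\max_{\vx}\sum_t[u(\vx,\vy_t)-u(\vx_t,\vy_t)]$ is linear in $\vx$, so the maximum is attained at a vertex. This gives $R^1_T=\max_i\sum_t g_{t,i}\le\max_i q_{T+1,i}$. Finally $\max_i q_{T+1,i}\le\|\vq_{T+1}\|_2\le G_{T+1}$ by the previous step.

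For the telescoping claim on a truncation-free window, $\mathcal{A}_t=[m]$, so $\|\vg_t|_{\mathcal{A}_t}\|_2^2=\|\vg_t\|_2^2=G_{t+1}^2-G_t^2$. Whenever $G_{t+1}>0$,
\[
\frac{G_{t+1}^2-G_t^2}{G_{t+1}}=(G_{t+1}-G_t)\,\frac{G_{t+1}+G_t}{G_{t+1}}\ge G_{t+1}-G_t ,
\]
using $G_t\le G_{t+1}$. Summands with $G_{t+1}=0$ have $G_t=G_{t+1}=0$, so both sides are zero. Summing over $t\in[T_0,T]$ telescopes to $G_{T+1}-G_{T_0}$, which gives the window statement.

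None of these steps is deep. The main care point is the cross-term cancellation $\langle\vq_t,\vg_t\rangle=0$. It relies on $\vx_t$ being exactly the readout $\sigma(\vq_t)$ and on $\vg_t$ being evaluated at that same $\vx_t$. The $\vq_t=\mathbf{0}$ uniform-play case is harmless there. I also need to check that the bound $S_{t+1}\le\sqrt{m}\,G_{t+1}$ matches the indexing of $G$: $G_{t+1}$ includes $\vg_t$, which is exactly the increment already absorbed into $\vq_{t+1}$.
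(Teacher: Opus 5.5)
Your proposal is correct and matches the paper's proof step for step. The shared steps are the potential recursion $\|\vq_{t+1}\|_2^2\le\|\vq_t\|_2^2+\|\vg_t\|_2^2$ from the vanishing cross term $\langle\vq_t,\vg_t\rangle=0$, and the mass bound $S_{t+1}\le\sqrt{m}\,\|\vq_{t+1}\|_2\le\sqrt{m}\,G_{t+1}$ after dropping the nonnegative truncated remainder of \cref{prop:closed}. They continue with the telescoping via $(G_{t+1}+G_t)/G_{t+1}\ge 1$ and the regret bound from $[a]^+\ge a$ with the maximum over $\vx$ attained at a vertex. The degenerate summands you flag are harmless, since $S_{t+1}=0$ and $G_{t+1}=0$ each force $\vg_t=\mathbf{0}$, so both sides of the summand inequality vanish there.
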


The proof is in \cref{app:proofs-ext}.

On truncation-free windows $\Delta$ subtracts at least a $1/\sqrt{m}$ fraction of the increase of a potential
that is $O(\sqrt{T})$ for bounded payoffs, while the faster
local decay exponent of \ALT{} (\cref{sec:exp-rate}) remains open. The third
decomposes the \SIM/\ALT{} gap ratio (with the \ALT{} gap read under the
shifted-average convention) into a regret channel plus a
nonnegative improvement channel.

\begin{corollary}[Two channels of the alternation speedup]
\label{cor:channels}
For \ALT{} and \SIM{} run on the same game, horizon and weights, whenever all four
bracketed sums are positive,
\begin{equation}
\begin{aligned}
\log \frac{\mathrm{gap}_{\SIM}(\bar\vx,\bar\vy)}{\mathrm{gap}_{\ALTmath}(\bar\vx^{+},\bar\vy)}
&= \underbrace{\log \frac{(R^1 + R^2_{\mathrm{stale}})_{\SIM}}{(R^1 + R^2_{\mathrm{fresh}})_{\ALTmath}}}_{\text{regret channel}}\\
&\quad + \underbrace{\log \frac{(R^1 + R^2_{\mathrm{fresh}})_{\ALTmath}}{(R^1 + R^2_{\mathrm{fresh}} - \Delta)_{\ALTmath}}}_{\text{improvement channel}}.
\end{aligned}
\label{eq:channels}
\end{equation}
The improvement channel is nonnegative by \cref{prop:closed}; the regret channel has
no sign guarantee.
\end{corollary}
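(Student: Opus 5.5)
The plan is to obtain \cref{eq:channels} as a telescoping identity of logarithms, and to supply the two identities it needs from \cref{prop:identity} applied once per dynamics. I will apply the second identity of \cref{eq:identity} to the \SIM{} trajectory, giving $W\cdot\mathrm{gap}_{\SIM}(\bar\vx,\bar\vy) = (R^1+R^2_{\mathrm{stale}})_{\SIM}$. I will apply the first identity to the \ALT{} trajectory, with $\vx_{t+1}$ the strategy player~2 actually responded to, giving $W\cdot\mathrm{gap}_{\ALTmath}(\bar\vx^{+},\bar\vy) = (R^1+R^2_{\mathrm{fresh}}-\Delta)_{\ALTmath}$. Since both runs share the horizon and the weights, the common factor $W$ cancels in the ratio of gaps. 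The left-hand side of \cref{eq:channels} therefore equals $\log\big[(R^1+R^2_{\mathrm{stale}})_{\SIM}/(R^1+R^2_{\mathrm{fresh}}-\Delta)_{\ALTmath}\big]$.

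Next I will multiply and divide inside the logarithm by $(R^1+R^2_{\mathrm{fresh}})_{\ALTmath}$ and split the log of the product into a sum of two logs. This yields exactly the regret channel plus the improvement channel. The hypothesis that all four bracketed sums are positive is what makes each ratio positive, and hence makes each logarithm well defined. Positivity of the two gaps then comes for free, because they equal the bracketed sums divided by $W>0$. This step is pure algebra; the only point needing care is to state that it uses the positivity hypothesis and nothing else.

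For the signs, I will use \cref{prop:closed}. With player~1 running plain \RMp{} in \ALT, each summand $(\vx_{t+1}-\vx_t)^\top A\vy_t$ is nonnegative, and $w_t\ge 0$, so $\Delta_{\ALTmath}\ge 0$. Combined with positivity of $(R^1+R^2_{\mathrm{fresh}}-\Delta)_{\ALTmath}$, this gives $0<(R^1+R^2_{\mathrm{fresh}}-\Delta)_{\ALTmath}\le (R^1+R^2_{\mathrm{fresh}})_{\ALTmath}$. The improvement ratio is therefore at least $1$, and its logarithm is nonnegative.

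For the claim that the regret channel has no sign guarantee, it suffices to note that its numerator and denominator are regrets of two different trajectories, tied together by no inequality. I would cite the measured suite statistics, where the regret-channel share varies, or exhibit one game where it is negative, rather than prove anything.

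I do not expect a real obstacle here. The only subtle point is the convention check: the \ALT{} gap must be the one evaluated at the shifted average $\bar\vx^{+}$, and the $\Delta$ in \cref{prop:closed} must coincide with the $\Delta$ of \cref{prop:identity} under the same weights and the same indexing of $\vx_{t+1}$.
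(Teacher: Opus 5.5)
Your proposal is correct and follows the paper's own proof. That proof substitutes the stale identity of \cref{prop:identity} for \SIM{} and the shifted identity for \ALT{}, cancels $W$, multiplies and divides by $(R^1+R^2_{\mathrm{fresh}})_{\ALTmath}$, and gets the sign of the improvement channel from $\Delta_{\ALTmath}\ge 0$ via \cref{prop:closed}.
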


\OSPRM{} obeys the second identity in \cref{eq:identity} with no improvement term, so
its slowness must appear as inflated regrets; on $42/47$ games the inflation sits in $R^1$, where
player~1 faces a bang--bang opponent.
\cref{sec:exp-anatomy} measures all terms on the full suite.

\section{Experiments}
\label{sec:experiments}

All runs are deterministic (fixed seed, $T = 2\times 10^4$ for matrix games,
$T = 10^4$ for extensive-form games), lockstep against \ALT{}, on the 47-game matrix
suite plus three extensive-form witnesses (\cref{sec:setup}); per-game tables for the primary arms are
in \cref{app:tables}. We report each family's verdict against its pre-registered gate.

\label{sec:exp-continuous}
\cref{tab:rho} reports the primary statistic. All 11 non-\ALT{} arms have
$\mathrm{median}\,|\rho_g| \ge \log_{10} 3 \approx 0.477$; the closest is two-sided
\PRM{} at $1.158$ (a factor of $14\times$ in gap). Under the binary summary, no arm
reproduces \ALT{} on $\ge 90\%$ of games in any of three tolerance bands that
stretch the reproduction threshold two-, three-, and five-fold; the largest combined count
is \TSOPT{} at $15/47$ in the widest band. \GSPROX$(1.0)$ has zero hugs in the two
baseline bands but $6/47$ in the widest; hug
counts are heterogeneous across arms (\PP{} $13/47$, \PRM{} $11/47$, baseline band)
and concentrated in conditioning.

\begin{table}[t]
\centering
\caption{Primary continuous statistic across the 47-game matrix suite. The signed
median gives the direction, positive meaning slower than \ALT{}. No arm meets the
reproduction criterion of \cref{sec:setup}.}
\label{tab:rho}
\footnotesize
\begin{tabular}{@{}lcc@{}}
\toprule
\textbf{Arm} & $\mathrm{median}\,|\rho_g|$ & $\mathrm{median}\,\rho_g$ (signed)\\
\midrule
\SIM{} & 1.616 & $+1.616$\\
\PRM{} (two-sided) & \textbf{1.158} & $-0.886$\\
\OSPRM{} & 1.517 & $+1.517$\\
\GSPROX$(0.25)$ & 1.297 & $+1.297$\\
\GSPROX$(1.0)$ & 1.752 & $+1.752$\\
\PP{} & 2.451 & $-2.451$\\
\DELAY-1 (= \SIM) & 1.616 & $+1.616$\\
\DELAY-2 & 1.897 & $+1.897$\\
\DELAY-4 & 2.221 & $+2.221$\\
\DELAY-8 & 2.673 & $+2.673$\\
\TSOPT{} & 1.615 & $-1.615$\\
\bottomrule
\end{tabular}
\end{table}

\label{sec:exp-f1}
Across the 47-game suite, \OSPRM{} is slower than \ALT{} by a median $33\times$ in final
gap, with drift $66\times$ and oscillation $263\times$ larger; the profile
distance stays bounded away from zero on average (median final $L_1$ $0.58$): same fresh information,
different injection path.

\label{sec:exp-f3}
Both axes of F3 fail. \cref{fig:delay} shows the gap ratio to \ALT{} against the stale
fraction $s$ on both: the integer delay family ($s=d$ for $d$ in $\{1,2,4,8,16\}$,
medians $41.3$ to $1408$, per-game ratios in \cref{tab:delayfull}) and the
mixing dial ($s$ from $0.01$ to $1$, medians rising from $2.62$ to $41.3$); a log-space least-squares fit gives $\mathrm{ratio}-1\approx 36\,s^{0.67}$
(largest deviation $25\%$; the two largest-$s$ medians coincide to two decimals in $\log_{10}$, $17/47$ games
have $r(0.9)>r(1)$). Both branches trace one rising curve, the delay branch with median
per-game per-doubling ratios $1.9, 2.1, 2.3, 2.4$; Kuhn and Leduc give the same shape
($5.3\times$ and $5.6\times$
at $s=0.01$; $119\times$ and $321\times$ at $s=1$). On 50 independent random games the
$s=1$ median is $39.4$. At the smallest tested
$s=0.01$ the penalty is already $2.6\times$ ($26/47$ games within $3\times$); below
$s=0.01$ is untested.

\begin{figure}[t]
\centering
\begin{minipage}[t]{0.48\linewidth}
\centering
\includegraphics[width=\linewidth]{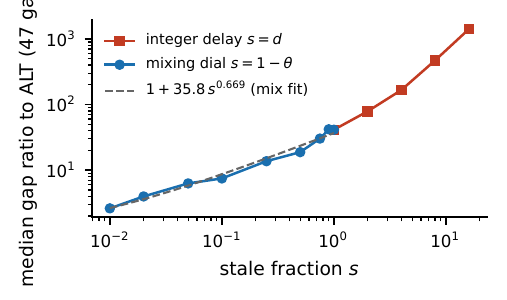}
\caption{Gap ratio to \ALT{} against the stale fraction $s$ on log--log axes; the
delay and mixing branches trace one curve.}
\label{fig:delay}
\end{minipage}\hfill
\begin{minipage}[t]{0.48\linewidth}
\centering
\includegraphics[width=\linewidth]{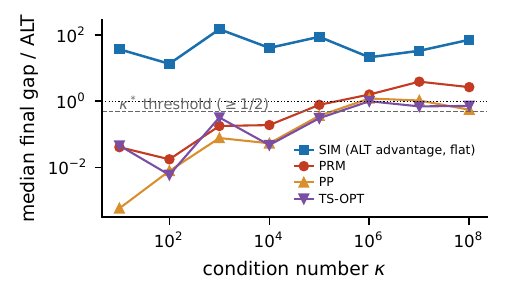}
\caption{Median gap ratio to \ALT{} against the conditioning value $\kappa$, for
\ALT{}/\SIM{} and for the optimism family; 5 instances per $\kappa$.}
\label{fig:kappa}
\end{minipage}
\Description{Two side-by-side single-column panels. Left: gap ratio against the stale fraction on log-log axes; the mixing dial and the integer-delay family lie on one rising curve down to the smallest tested stale fraction. Right: median advantage ratios against the conditioning parameter on a logarithmic horizontal axis; the ALT-versus-SIM advantage stays large across all conditioning values while the optimism family's advantage closes at strong conditioning and one arm reverses to slower than ALT.}
\end{figure}

\label{sec:exp-f2}
As $\lambda\to0$, \GSPROX$(\lambda)$ converges to \ALT{} on any finite horizon
(the update is continuous in $\lambda$), so the question is the shape of the approach. On a
grid $\lambda\in\{10^{-3},\dots,4\}$ the median gap ratio rises from $1.7\times$ at
$\lambda=10^{-3}$ to $152\times$ at $\lambda=4$, fit by
$\mathrm{ratio}-1 \approx 52\,\lambda^{0.64}$
(log-space least squares). A damping-stabilized explanation predicts a range of $\lambda$
on which damped alternation matches \ALT{}; at the smallest tested $\lambda=10^{-3}$,
$39/47$ games are within $3\times$ of \ALT{} and $1\%$ damping already costs
$4\times$ ($15/47$). Below $\lambda=10^{-3}$ is untested.

\label{sec:exp-f4}
On well-conditioned matrix games, the F4 arms are \emph{faster} than \ALT{} (medians:
\PRM{} $7.7\times$, \TSOPT{} $41\times$, \PP{} $280\times$), which alone rules out
equivalence in one direction; their drift statistics differ from \ALT's by factors of
$0.03$--$0.47$, so the dynamics are not the same shape (a Spearman speedup--drift
correlation, $n{=}12$, is sign-unstable across horizons and never significant;
auxiliary only).

The discriminating evidence is a cross of three regimes. First, conditioning (\cref{fig:kappa}): sweeping
$\kappa \in \{10^1, \dots, 10^8\}$ (5 instances each), the \ALT/\SIM{} ratio stays
$\ge 13.5\times$ in every band with no trend, while the optimism family's advantage
over \ALT{} closes at $\kappa^{*} = 10^5$--$10^6$ and \PRM{} \emph{reverses} to
$3.9\times$ and $2.7\times$ slower than \ALT{} at
$\kappa = 10^7, 10^8$. The closure is
operationalization-dependent: it does not trigger within $\kappa \le 10^8$ under a
vs-\SIM{} reading, and the hypothesis under test is equivalence to \ALT. Second,
tree witnesses (\cref{sec:exp-efg}).

\label{sec:exp-efg}
All arms cost one tree pass per player per iteration except \PP{} (two).
On Kuhn poker the matrix-level picture persists qualitatively for the slower arms
(\OSPRM{} $75\times$ slower; one-step staleness $119\times$; \TSOPT{} the only arm
within the $3\times$ band), while two-sided \PRM{} and \PP{} are
$14\times$ and $520\times$ \emph{faster} than \ALT{} here. On Leduc poker the picture sharpens into a reversal: \ALT{} attains the
lowest exploitability of all 12 arms, with the two-sided optimism arms $19$--$70\times$
slower (\cref{tab:efgfull}) and \OSPRM{} $1.5\times 10^{3}$ slower, consistent with
reports that predictive CFR variants underperform CFR$^{+}$ on poker
\citep{farina2021faster,farina2023regret,li2026pokerskill}.

The third witness, a liar's-dice game \citep{neller2011approximating} with 8177 nodes
and 1024 information sets, was pre-registered with the gate ``the Leduc-level
separation persists,'' and this gate failed: the game is easy for every
dynamics (nine of the twelve arms reach exploitability $10^{-8}$--$10^{-6}$ at
$T = 10^4$), compressing the separation band (four arms inside the $3\times$ gap band
of \ALT{}). Two structures survive at this size: \ALT{} is the fastest of all 12 arms,
and no arm hugs \ALT{} (final $L_1$ distances $10.1$--$23.3$, flat). The nominal ordering is not scale-stable: on two larger trees ($51{,}176$ and
$294{,}877$ nodes) \OSPRM{} and \textsc{alt-prm}$^{+}$ go from edging \ALT{} out by up
to $2\%$ to trailing it by $9$--$13\%$, while the trajectory separation holds at both
scales in raw totals. The tree-persistence claim is
therefore witness-level (Kuhn/Leduc), with liar's dice a boundary game; the staleness law is unaffected
($8\times$ at $d{=}1$ rising to $4.4\times 10^3$ at $d{=}8$).

\label{sec:exp-big}
On the $8177$-node witness the arms inside the gap band sit at \PP{} $1.07\times$,
\TSOPT{} $1.37\times$, \GSPROX(0.25) $1.57\times$ and \OSPRM{} $1.88\times$ \ALT's
gap, and \DELAY-2/4/8 stop at $2.1\times 10^{-6}$, $2.4\times 10^{-5}$ and
$2.4\times 10^{-4}$ exploitability. The two larger trees were pre-registered with no
PASS/FAIL; the ordering itself is the reading. Per-information-set distances fall
(\textsc{alt-prm}$^{+}$ from $0.00276$ to $0.00170$), every non-\ALT{} arm has
tail-mean $L_1 \ge 14$ ($\ge 41$ on the larger tree) from \ALT, and the decomposition
of \cref{prop:identity} gives \OSPRM's $R^1$ at $1.07\times$ \ALT's on the smaller
tree and $1.34\times$ on the larger, while its stale-sequence regret is driven to
zero.

\begin{table}[htbp]
\centering
\caption{Scale test on the $51{,}176$-node liar's-dice tree ($5{,}120$ information
sets, $T = 10^4$): gap of
the average strategy, ratio to \ALT, tail-mean $L_1$ distance to \ALT's profile, and
the exact decomposition of \cref{eq:identity} per unit weight.}
\label{tab:big5}
\small
\begin{tabular}{@{}lccccc@{}}
\toprule
& \SIM & \ALT & \OSPRM & \PRM & \textsc{alt-prm}$^{+}$\\
\midrule
gap$(\bar\vx,\bar\vy)/2$ ($10^{-7}$) & 3.420 & 0.675 & 0.661 & 1.327 & 0.669\\
gap$(\bar\vx^{+},\bar\vy)/2$ ($10^{-7}$) & 3.324 & 0.577 & 0.565 & 1.230 & 0.564\\
ratio to \ALT & 5.07 & 1 & 0.98 & 1.97 & 0.99\\
tail $L_1$ to \ALT & 167.2 & 0 & 121.9 & 168.5 & 14.1\\
tail $L_1$ per info set & 0.0327 & 0 & 0.0238 & 0.0329 & 0.00276\\
$R^1/W$ ($10^{-7}$) & 3.781 & 1.163 & 1.246 & 1.336 & 1.247\\
$R^2_{\mathrm{fresh}}/W$ ($10^{-7}$) & 6.710 & 0.917 & 1.100 & 3.437 & 1.023\\
$R^2_{\mathrm{stale}}/W$ ($10^{-7}$) & 3.059 & 0.188 & 0.077 & 1.318 & 0.090\\
$\Delta/W$ ($10^{-7}$) & 3.843 & 0.926 & 1.216 & 2.312 & 1.142\\
\bottomrule
\end{tabular}
\end{table}

\begin{table}[htbp]
\centering
\caption{Same quantities on the $294{,}877$-node liar's-dice tree ($24{,}576$
information sets, $T = 10^4$).}
\label{tab:big6}
\small
\begin{tabular}{@{}lccccc@{}}
\toprule
& \SIM & \ALT & \OSPRM & \PRM & \textsc{alt-prm}$^{+}$\\
\midrule
gap$(\bar\vx,\bar\vy)/2$ ($10^{-7}$) & 2.476 & 0.789 & 0.863 & 1.261 & 0.895\\
gap$(\bar\vx^{+},\bar\vy)/2$ ($10^{-7}$) & 2.365 & 0.669 & 0.750 & 1.150 & 0.770\\
ratio to \ALT & 3.14 & 1 & 1.09 & 1.60 & 1.13\\
tail $L_1$ to \ALT & 1551.7 & 0 & 1403.6 & 1546.9 & 41.9\\
tail $L_1$ per info set & 0.0631 & 0 & 0.0571 & 0.0629 & 0.00170\\
$R^1/W$ ($10^{-7}$) & 2.695 & 1.311 & 1.753 & 1.331 & 1.857\\
$R^2_{\mathrm{fresh}}/W$ ($10^{-7}$) & 4.807 & 1.079 & 1.437 & 3.039 & 1.501\\
$R^2_{\mathrm{stale}}/W$ ($10^{-7}$) & 2.257 & 0.267 & $-0.028$ & 1.191 & $-0.068$\\
$\Delta/W$ ($10^{-7}$) & 2.774 & 1.052 & 1.691 & 2.070 & 1.818\\
\bottomrule
\end{tabular}
\end{table}

% ---------------------------------------------------------------------------
% 6b_experiments_ext.tex — 扩展实验小节（接在 6_experiments.tex 的 6.6 之后, 同一 \section 内）。
% 2026-10-03 页数裁剪: 原 6.8（injection dial）与 6.9/6.10/6.11 曾移入补充册 sections/F_ext_results.tex。
% 2026-10-14 arXiv 版: 已全部搬回正文 sections/6_experiments.tex（F_ext_results.tex 已删除）。
% 数字全部取自 ext/results/SUMMARY_EXT.md; 判读句照录 PREREG_EXT.md。
% ---------------------------------------------------------------------------
\label{sec:exp-anatomy}
\cref{prop:identity} lets us read every final gap as a sum of measured terms. Both
channels contribute (pre-registered reading). On the 47
games, the \ALT{}/\SIM{} log-speedup has median $1.60$ in $\log_{10}$, of which the
regret channel contributes $0.91$ (IQR $0.67$--$1.17$) and the improvement
channel $0.58$ (IQR $0.42$--$0.73$); the regret channel's share is $0.60$ (median),
above one half on $31/47$ games.

For \OSPRM{}, the second identity has no improvement term, and
the inflation sits mainly in $R^1$: player~1's regret is $23.0\times$
\ALT's (median), while player~2's stale regret is $-8.5\times$ \ALT's fresh
regret (median), negative on $40/47$ games, where player~1 faces a bang--bang opponent.
The transient prediction minimizes
the stale-sequence regret well, at the opponent's expense.
\cref{tab:anatomy} tabulates both gap conventions on the witness games and
\cref{fig:anatomy} breaks the suite down game by game.

\begin{figure}[t]
\centering
\includegraphics[width=\linewidth]{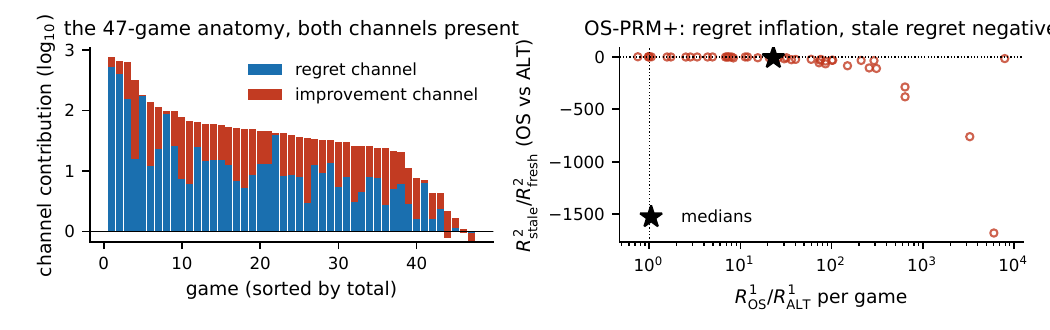}
\caption{Per-game anatomy of the \ALT/\SIM{} log-speedup on the 47-game suite
($T = 2\times 10^{4}$). Left: regret-channel and improvement-channel contributions,
stacked and sorted by total; both channels contribute on $44/47$ games (median
share of the regret channel $0.60$; the improvement channel is nonnegative on every
game, as \cref{prop:closed} requires). Right: the \OSPRM{} regret anatomy, one point
per game; the player-1 regret ratio exceeds one on $42/47$ games (median
$23.0\times$) while the stale-regret ratio is negative on $40/47$ (below the dotted
line); medians are marked.}
\label{fig:anatomy}
\Description{Two panels. Left: stacked bar chart of the two channel contributions for each of 47 games, sorted by total; most bars contain both a regret-channel and an improvement-channel segment. Right: scatter of per-game player-1 regret ratio on a logarithmic horizontal axis against the stale-regret ratio on a linear vertical axis; most points lie right of one and below zero, with medians marked by a star.}
\end{figure}

% 2026-10-14 arXiv 版: 6.8/6.9/6.10/6.11（injection dial、可加性、视界、敏感性/末迭代/初始化）
% 已搬回正文 sections/6_experiments.tex，紧接本文件之后。

\label{sec:exp-inj}
\textsc{inj}$(\mu)$ keeps the fresh signal's total readout weight fixed at one and moves
a fraction $\mu$ of it into the persistent state (\cref{app:protocol}); $\mu = 0$ is
\OSPRM{} and $\mu = 1$ is \ALT{}, bit-exactly. Median $\rho_g$ against \ALT{} is
$+1.52$ at $\mu=0$, $+1.61$ at $\mu=0.1$, $+1.51$ at $\mu=0.25$, $+0.22$ at $\mu=0.5$,
$-0.03$ at $\mu=0.75$, $-0.02$ at $\mu=0.9$, and exactly $0$ at $\mu=1$; over the same
range the trajectory distance to \ALT{} (tail mean) falls from $0.69$ to $0.07$ and the
exact-vertex frequency from $0.12$ to $0$. The pre-registered reading is ``persistence
at $\mu=0.5$ already reproduces \ALT{} (median within $3\times$)'' (within $3\times$ on
$27/47$ games, against $8/47$ at $\mu=0$): making half of the fresh signal persistent
suffices. The full per-$\mu$ grid of both quantities is \cref{tab:inj} and
\cref{fig:inj}.
On Kuhn and Leduc the dial is more forgiving: the median falls from $+2.53$ at $\mu=0$
to $+0.66$ at $\mu=0.5$ and $\le +0.36$ for $\mu \ge 0.75$; the pre-registered reading
there is ``gradual: reproduces \ALT{} only for $\mu > 0.5$''.

\begin{table}[tbp]
\centering
\caption{The injection dial, by persistent fraction $\mu$. The grid $\{0, 0.1, 0.25, 0.5, 0.75, 0.9, 1\}$ was pre-registered; intermediate values were added after a pilot run. $\mu=0$ is \OSPRM{}, $\mu=1$ is \ALT{}.}
\label{tab:inj}
\footnotesize
\setlength{\tabcolsep}{3.5pt}
\begin{tabular}{@{}lcccc@{}}
\toprule
$\mu$ & median $\rho_g$ [IQR] & within $3\times$ & $D$-tail & vertex\\
\midrule
0.00 & $+1.52$ [$0.96$, $2.08$] & 8/47 & $0.686$ & $0.119$\\
0.10 & $+1.61$ [$0.98$, $2.08$] & 9/47 & $0.669$ & $0.137$\\
0.25 & $+1.51$ [$1.09$, $2.10$] & 9/47 & $0.686$ & $0.251$\\
0.30 & $+1.46$ [$1.06$, $2.04$] & 9/47 & $0.650$ & $0.250$\\
0.35 & $+1.32$ [$0.51$, $1.81$] & 12/47 & $0.588$ & $0.233$\\
0.40 & $+1.20$ [$0.14$, $1.67$] & 14/47 & $0.508$ & $0.099$\\
0.45 & $+0.58$ [$-0.02$, $1.05$] & 21/47 & $0.320$ & $0.000$\\
0.50 & $+0.22$ [$-0.08$, $0.57$] & 27/47 & $0.072$ & $0.000$\\
0.75 & $-0.03$ [$-0.11$, $0.08$] & 43/47 & $0.054$ & $0.000$\\
0.90 & $-0.02$ [$-0.13$, $0.02$] & 46/47 & $0.044$ & $0.000$\\
1.00 & $+0.00$ [$0.00$, $0.00$] & 47/47 & $0.000$ & $0.000$\\
\bottomrule
\end{tabular}
\end{table}

\begin{figure}[t]
\centering
\includegraphics[width=\linewidth]{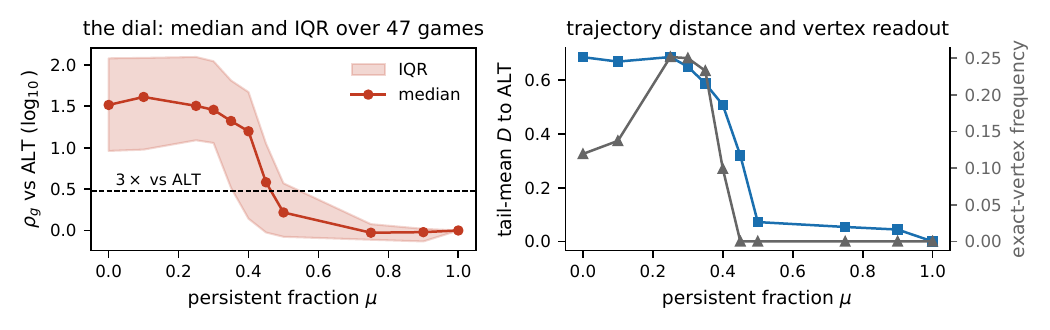}
\caption{The injection dial on the 47-game suite. Left: median $\rho_g$ against \ALT{}
with IQR band, per persistent fraction $\mu$; the dashed line is the $3\times$
reproduction threshold. Right: tail-mean trajectory distance to \ALT{} (squares, left
axis) and exact-vertex frequency (triangles, right axis); the tail distance falls from
$0.69$ to $0.07$ at $\mu = 0.5$ and the vertex frequency reaches $0$ there.}
\label{fig:inj}
\Description{Two panels against the persistent fraction mu from zero to one. Left: a median line with a shaded interquartile band, starting near one and a half, crossing the dashed reproduction threshold at mu of about 0.45, and reaching zero at mu of one. Right: two curves, a tail-mean trajectory distance and an exact-vertex frequency, both falling from their maxima at mu of zero to zero at mu of one half.}
\end{figure}

\label{sec:exp-add}
Composing the two updates measures whether their gains add. \textsc{alt-prm}$^{+}$ (alternating \PRM{}, the schedule of
PCFR$^{+}$ \citep{farina2021faster}) attains a median log-gain over \SIM{} of $3.92$
against $1.62$ for \ALT{} and $2.41$ for \PRM{}; the gain over the better single
component is $0.50$ in $\log_{10}$ (median, positive on $37/47$ games). The
pre-registered reading is ``combination beats the better component by $\ge 2\times$'':
the two updates compose.

\label{sec:exp-rate}
All suite statistics above are at $T = 2\times 10^{4}$. Running to $2\times 10^{5}$,
the median $\rho_g$ of \OSPRM{} moves from $1.52$ to $1.71$ (per-game growth median
$+0.24$ in $\log_{10}$), of \SIM{} from $1.62$ to $1.71$, and of \PRM{} from $-0.89$
to $-2.31$. The local decay exponent between the two horizons is $-0.82$ for \ALT{}
against $-0.50$ for \OSPRM{} (\ALT{} steeper on $35/47$ games, sign test
$p = 5.4\times 10^{-4}$; \cref{fig:rate}). Both pre-registered rate readings fire on
this comparison: the separation grows with $T$ (rate-level, empirical), and \ALT{}
decays with a faster local exponent.
On the witness, the local exponents of \ALT{} at $a=2$ are $(-0.58, -1.08, -0.57)$
across the three decades up to $2\times 10^{6}$, against $(-0.49, -0.51, -0.51)$ for
\OSPRM{} and $(-0.49, -0.49, -1.16)$ for \SIM{}. Whether alternating \RMp{} enjoys an
$O(1/T)$ guarantee of the kind proved for the algorithm class of
\citet{tao2026optimalalternatingregretonline} is open; our exponents are local
measurements, not rates.

\begin{figure}[t]
\centering
\includegraphics[width=\linewidth]{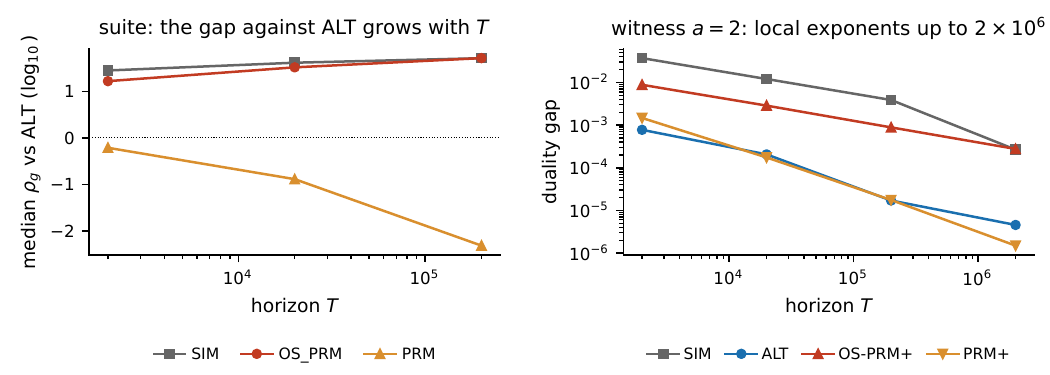}
\caption{Horizon dependence. Left: median $\rho_g$ against \ALT{} on the suite at
$T = 2\times 10^{3}, 2\times 10^{4}, 2\times 10^{5}$; the separation against \OSPRM{}
grows with $T$ while \PRM{} becomes relatively faster. Right: duality gaps on witness
$a = 2$ up to $T = 2\times 10^{6}$, the range over which the local exponents are
measured.}
\label{fig:rate}
\Description{Two panels. Left: three lines against horizon on a logarithmic horizontal axis showing the median log-ratio to ALT: the OS-PRM+ line rises, the SIM line is nearly flat, and the PRM+ line falls. Right: log-log plot of duality gap versus horizon up to two million iterations for four arms on the witness game, with all gaps decaying steadily across the six decades.}
\end{figure}

\label{sec:exp-sens}
\cref{thm:gain} bounds the readout gain of a persistently injected player and leaves a
transiently injected player's gain unbounded at zero state; the trajectory-level
sensitivity below is measured, not a consequence of the theorem. Perturbing the payoff matrix by a relative $10^{-13}$ (which
leaves exact \RMp{} trajectories unchanged by scale invariance) moves \ALT{}
trajectories by at most $2.8\times 10^{-11}$ in $L_1$ over $2\times 10^{4}$ rounds, and
\OSPRM{} trajectories by a median of $2.00$, with macroscopic divergence ($L_1 > 0.1$)
on $33/47$ of the games, first reached after a median of $182$ rounds; \SIM{} and \PRM{}
give $1/47$ and $5/47$ (\cref{tab:sens}). The same perturbation applied to the initial
regrets gives $31/47$ for \OSPRM{} (first at a median of $97$ rounds) and $0/47$ for
\ALT{} (largest $L_1$ deviation $1.0\times 10^{-6}$).
On the witness, the \OSPRM{} last iterate is an exact vertex on $52$--$82\%$ of tail
rounds at $T = 10^{6}$ (per $a$), hence at least $4/(a+3) \ge 4/7 \approx 0.57$
exploitable against a best response (\cref{cor:lastiter}); its tail-maximum
first-coordinate deviation from equilibrium is $0.52$--$0.71$, i.e.\ a full profile
$\ell_1$ distance of $1.05$--$1.43$ (\cref{fig:lastiter,tab:lastiter}), while \ALT's
last iterate is
within $0.04$ and never a vertex: last-iterate separation on all four witness points.
Finally, from $20$ random initializations $\vq \sim U[0,1]$, the F1 verdict (median
$|\rho_g| \ge \log_{10} 3$) holds on all $20$ seeds (minimum per-seed median $0.76$),
and the interior band of \cref{ass:c1} has $\eta \ge 0.357$ with \OSPRM{} vertex
frequency $\ge 0.65$ on every seed.

\begin{table}[tbp]
\centering
\caption{Tail-maximum last-iterate deviation on the witness family, $T=10^{6}$
(\cref{app:lastiter-metric}): profile $\ell_1$ is twice the coordinate deviation.}
\label{tab:lastiter}
\footnotesize
\begin{tabular}{@{}lcccc@{}}
\toprule
$a$ & $y^{*}$ & \ALT{} coord.\ dev. & \OSPRM{} coord.\ dev. & \OSPRM{} profile $\ell_1$\\
\midrule
1.2 & 0.4762 & 0.039 & 0.5245 & 1.049\\
2.0 & 0.4000 & 0.027 & 0.6005 & 1.201\\
3.0 & 0.3333 & 0.022 & 0.6671 & 1.334\\
4.0 & 0.2857 & 0.020 & 0.7148 & 1.430\\
\bottomrule
\end{tabular}
\end{table}

\begin{figure}[t]
\centering
\includegraphics[width=0.48\linewidth]{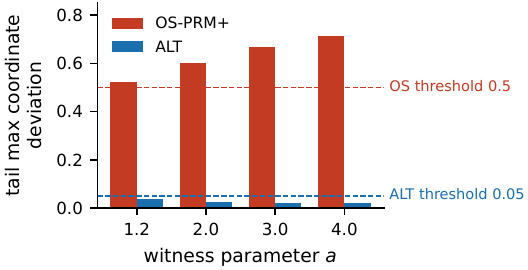}
\caption{Last-iterate separation on the witness family: maximum first-coordinate
deviation from equilibrium; the full profile $\ell_1$ distance is twice this value.
Both are measured over the tail half of $10^{6}$ iterations, per witness point. Dashed
lines mark the pre-registered thresholds (\OSPRM{} $\ge 0.5$, \ALT{} $\le 0.05$), met
at all four $a$.}
\label{fig:lastiter}
\Description{Grouped bar chart at four witness parameter values. At each value the OS-PRM+ bar is above one half while the ALT bar is below one twentieth; dashed horizontal lines mark those two thresholds.}
\end{figure}

\section{Findings}
\label{sec:signatures}

The catalog's failures leave measured regularities, scoped to the tested self-play regimes
(47-game suite; Kuhn, Leduc, liar's dice).

Staleness is expensive down to the smallest tested fraction. Let $s$ be the
stale fraction of the information player~2 uses, so $s = 1-\theta$ for the mixing dial
$\textsc{mix}(\theta)$ and $s=d$ for the integer delay family.
On the mixing branch the
median gap ratio to \ALT{} is fit by $\mathrm{gap}/\mathrm{gap}_{\ALTmath}-1 \approx
36\,s^{0.67}$ over $s\in[0.01,1]$ (log-space least squares, largest deviation $25\%$;
\cref{fig:delay}); on the delay branch it grows slightly faster than linearly
(median per-game per-doubling ratio $1.9\to2.4$ from $d=2$ to $16$). One percent of staleness already
costs $2.6\times$ (median; within $3\times$ of \ALT{} on $26/47$ games), and $100\%$
costs $41\times$. Any
theory of alternation as ``a slightly fresher
simultaneous update'' must reproduce this cost at $s=0.01$.

The speedup survives payoff conditioning. Across seven orders of
magnitude of condition number the \ALT{} advantage over \SIM{} never falls below
$13.5\times$ and shows no monotone trend (medians $13.5$ to $149.1$ across the eight
bands), while optimism-family acceleration closes at
$\kappa^{*} \approx 10^5$--$10^6$ and partially reverses.

Optimism's advantage comes and goes with the regime, while alternation's holds
throughout. At the
suite-median level the \ALT/\SIM{} ratio is $\ge 13.5\times$ at every $\kappa$,
$119\times$ on Kuhn, $321\times$ on Leduc, and $8$/$5.1$/$3.1\times$ on the three
liar's-dice trees. The two-sided optimism arms beat \ALT{} on well-conditioned matrix
games and on Kuhn (\PRM{} $14\times$, \PP{} $520\times$ faster), lose to it on Leduc
($19$--$70\times$ slower), \PRM{} reverses to $3.9\times$ and $2.7\times$ slower at $\kappa=10^7,10^8$,
\PP{} is within $1.22\times$ of \ALT{} at $\kappa=10^6,10^7$, and \TSOPT{} never
falls behind, and \OSPRM{} straddles it on liar's dice ($0.98\times$ at
$51{,}176$ nodes, $1.09\times$ at $294{,}877$). An arm whose dynamics coincided with \ALT's could not change sign relative to it
across these regimes.

Weakening the fresh signal costs less when the injection path is degraded than
when freshness is. Under the median criterion every dial
reaches \ALT{} at its limit, and the cost at the smallest setting is
$2.6\times$ for $1\%$ staleness, $1.66\times$ for $0.1\%$ damping, and
$1.65\times$ for a $50\%$ loss of persistence. Removing half the persistence is as
cheap as removing one percent of the freshness; short of the \ALT{} endpoint, only the injection dial puts $90\%$ of games within $3\times$, from $\mu=0.75$ on
(\cref{tab:dials}). Freshness is the expensive resource, and any
explanation must reproduce this cost curve.

With the conditional certificate of \cref{thm:t1}, these four regularities
form a checklist for explanations of the speedup. Positions untouched
by the four comparisons include persistent auxiliary prediction states outside F1,
non-RVU stabilization, and per-round fresh-information analyses,
as the staleness cost supports.

\section{Conclusion}
\label{sec:conclusion}

The folk explanation of the alternation speedup, seeing the opponent's fresh
strategy as optimism, fails an information-matched test. On the $2\times2$
witness family, under an interior-band assumption on the alternating iterate, either
exact-vertex times of the predictive dynamics recur and the profile distance is at least
a constant between $0.53$ and $0.91$ infinitely often, or they terminate; the
antecedents hold on all 60 grid points at $2\times10^5$ iterations across four
precisions. The mechanism is a readout dichotomy: transient injection at small persistent state produces an
unbounded-gain bang--bang readout, persistent injection a rotation whose angular step
vanishes as the measured regret mass grows. Three further
routes (play-level damping, staleness interpolation, two-sided optimism) are not
reproduced in canonical implementations on the 47-game suite and tree witnesses,
and weakening them costs more when freshness is lost than when the injection path
is degraded. Within the tested regimes, alternation is not optimism.

\FloatBarrier
\bibliography{references}

@article{hart2000simple,
  title={A simple adaptive procedure leading to correlated equilibrium},
  author={Hart, Sergiu and Mas-Colell, Andreu},
  journal={Econometrica},
  volume={68},
  number={5},
  pages={1127--1150},
  year={2000},
  publisher={Wiley Online Library}
}

@inproceedings{farina2021faster,
  title={Faster game solving via predictive blackwell approachability: Connecting regret matching and mirror descent},
  author={Farina, Gabriele and Kroer, Christian and Sandholm, Tuomas},
  booktitle={Proceedings of the AAAI Conference on Artificial Intelligence},
  volume={35},
  number={6},
  pages={5363--5371},
  year={2021}
}

@inproceedings{farina2019stable,
  title={Stable-predictive optimistic counterfactual regret minimization},
  author={Farina, Gabriele and Kroer, Christian and Brown, Noam and Sandholm, Tuomas},
  booktitle={International conference on machine learning},
  pages={1853--1862},
  year={2019},
  organization={PMLR}
}

@article{zinkevich2007regret,
  title={Regret minimization in games with incomplete information},
  author={Zinkevich, Martin and Johanson, Michael and Bowling, Michael and Piccione, Carmelo},
  journal={Advances in neural information processing systems},
  volume={20},
  year={2007}
}

@article{tammelin2014solving,
  title={Solving large imperfect information games using {CFR+}},
  author={Tammelin, Oskari},
  journal={arXiv preprint arXiv:1407.5042},
  year={2014}
}

@article{bowling2015heads,
  title={Heads-up limit hold'em poker is solved},
  author={Bowling, Michael and Burch, Neil and Johanson, Michael and Tammelin, Oskari},
  journal={Science},
  volume={347},
  number={6218},
  pages={145--149},
  year={2015},
  publisher={American Association for the Advancement of Science}
}

@inproceedings{tammelin2015solving,
  title={Solving Heads-Up Limit Texas Hold'em.},
  author={Tammelin, Oskari and Burch, Neil and Johanson, Michael and Bowling, Michael},
  booktitle={IJCAI},
  volume={15},
  pages={645--652},
  year={2015}
}

@article{burch2019revisiting,
  title={Revisiting {CFR+} and alternating updates},
  author={Burch, Neil and Moravcik, Matej and Schmid, Martin},
  journal={Journal of Artificial Intelligence Research},
  volume={64},
  pages={429--443},
  year={2019}
}

@article{cevher2023alternation,
  title={Alternation makes the adversary weaker in two-player games},
  author={Cevher, Volkan and Cutkosky, Ashok and Kavis, Ali and Piliouras, Georgios and Skoulakis, Stratis and Viano, Luca},
  journal={Advances in Neural Information Processing Systems},
  volume={36},
  pages={18263--18290},
  year={2023}
}

@inproceedings{lee2024fundamental,
  title={Fundamental Benefit of Alternating Updates in Minimax Optimization},
  author={Lee, Jaewook and Cho, Hanseul and Yun, Chulhee},
  booktitle={International Conference on Machine Learning},
  pages={26439--26514},
  year={2024},
  organization={PMLR}
}

@inproceedings{nan2026on,
  title={On the {$O(1/T)$} Convergence of Alternating Gradient Descent{\textendash}Ascent in Bilinear Games},
  author={Tianlong Nan and Shuvomoy Das Gupta and Garud Iyengar and Christian Kroer},
  booktitle={The Fourteenth International Conference on Learning Representations},
  year={2026},
  url={https://openreview.net/forum?id=iBC8CGNGAS}
}

@inproceedings{zhang2022near,
  title={Near-optimal local convergence of alternating gradient descent-ascent for minimax optimization},
  author={Zhang, Guodong and Wang, Yuanhao and Lessard, Laurent and Grosse, Roger B},
  booktitle={International Conference on Artificial Intelligence and Statistics},
  pages={7659--7679},
  year={2022},
  organization={PMLR}
}

@article{syrgkanis2015fast,
  title={Fast convergence of regularized learning in games},
  author={Syrgkanis, Vasilis and Agarwal, Alekh and Luo, Haipeng and Schapire, Robert E},
  journal={Advances in Neural Information Processing Systems},
  volume={28},
  year={2015}
}

@inproceedings{rakhlin2013online,
  title={Online learning with predictable sequences},
  author={Rakhlin, Alexander and Sridharan, Karthik},
  booktitle={Conference on Learning Theory},
  pages={993--1019},
  year={2013},
  organization={PMLR}
}

@article{rakhlin2013optimization,
  title={Optimization, learning, and games with predictable sequences},
  author={Rakhlin, Sasha and Sridharan, Karthik},
  journal={Advances in Neural Information Processing Systems},
  volume={26},
  year={2013}
}

@article{farina2023regret,
  title={Regret matching{$^+$}: (In)stability and fast convergence in games},
  author={Farina, Gabriele and Grand-Cl{\'e}ment, Julien and Kroer, Christian and Lee, Chung-Wei and Luo, Haipeng},
  journal={Advances in Neural Information Processing Systems},
  volume={36},
  pages={61546--61572},
  year={2023}
}

@inproceedings{brown2019solving,
  title={Solving imperfect-information games via discounted regret minimization},
  author={Brown, Noam and Sandholm, Tuomas},
  booktitle={Proceedings of the AAAI Conference on Artificial Intelligence},
  volume={33},
  number={01},
  pages={1829--1836},
  year={2019}
}

@article{kuhn2016simplified,
  title={A simplified two-person poker},
  author={Kuhn, Harold W},
  journal={Contributions to the Theory of Games},
  volume={1},
  pages={97--103},
  year={1950}
}

@inproceedings{southey2005bayes,
  title={Bayes' bluff: opponent modelling in poker},
  author={Southey, Finnegan and Bowling, Michael and Larson, Bryce and Piccione, Carmelo and Burch, Neil and Billings, Darse and Rayner, Chris},
  booktitle={Proceedings of the Twenty-First Conference on Uncertainty in Artificial Intelligence},
  pages={550--558},
  year={2005}
}

@incollection{neller2011approximating,
  title={Approximating optimal Dudo play with fixed-strategy iteration counterfactual regret minimization},
  author={Neller, Todd W and Hnath, Steven},
  booktitle={Advances in Computer Games},
  pages={170--183},
  year={2011},
  publisher={Springer}
}

@inproceedings{hait2025alternating,
  title={Alternating Regret for Online Convex Optimization},
  author={Hait, Soumita and Li, Ping and Luo, Haipeng and Zhang, Mengxiao},
  booktitle={The Thirty Eighth Annual Conference on Learning Theory},
  pages={2632--2633},
  year={2025},
  organization={PMLR}
}

@article{wibisono2022alternating,
  title={Alternating mirror descent for constrained min-max games},
  author={Wibisono, Andre and Tao, Molei and Piliouras, Georgios},
  journal={Advances in Neural Information Processing Systems},
  volume={35},
  pages={35201--35212},
  year={2022}
}

@inproceedings{bailey2020finite,
  title={Finite regret and cycles with fixed step-size via alternating gradient descent-ascent},
  author={Bailey, James P and Gidel, Gauthier and Piliouras, Georgios},
  booktitle={Conference on Learning Theory},
  pages={391--407},
  year={2020},
  organization={PMLR}
}

@inproceedings{daskalakis2019last,
  title={Last-Iterate Convergence: Zero-Sum Games and Constrained Min-Max Optimization},
  author={Daskalakis, C and Panageas, Ioannis},
  booktitle={10th Innovations in Theoretical Computer Science (ITCS) conference, ITCS 2019},
  year={2019}
}

@article{lee2021last,
  title={Last-iterate convergence in extensive-form games},
  author={Lee, Chung-Wei and Kroer, Christian and Luo, Haipeng},
  journal={Advances in Neural Information Processing Systems},
  volume={34},
  pages={14293--14305},
  year={2021}
}

@inproceedings{cai2025lastiterate,
  title={Last-Iterate Convergence Properties of Regret-Matching Algorithms in Games},
  author={Yang Cai and Gabriele Farina and Julien Grand-Cl{\'e}ment and Christian Kroer and Chung-Wei Lee and Haipeng Luo and Weiqiang Zheng},
  booktitle={The Thirteenth International Conference on Learning Representations},
  year={2025},
  url={https://openreview.net/forum?id=LWeVVPuIx0}
}

@inproceedings{meng2025lastiterate,
  title={Last-Iterate Convergence of Smooth Regret Matching{$^+$} Variants in Learning {Nash} Equilibria},
  author={Linjian Meng and Youzhi Zhang and Zhenxing Ge and Tianyu Ding and Shangdong Yang and Zheng Xu and Wenbin Li and Yang Gao},
  booktitle={The Thirty-ninth Annual Conference on Neural Information Processing Systems},
  year={2025},
  url={https://openreview.net/forum?id=v771lscnlS}
}

@article{brown2018superhuman,
  title={Superhuman {AI} for heads-up no-limit poker: {Libratus} beats top professionals},
  author={Brown, Noam and Sandholm, Tuomas},
  journal={Science},
  volume={359},
  number={6374},
  pages={418--424},
  year={2018},
  publisher={American Association for the Advancement of Science}
}

@article{moravvcik2017deepstack,
  title={{DeepStack}: Expert-level artificial intelligence in heads-up no-limit poker},
  author={Morav{\v{c}}{\'\i}k, Matej and Schmid, Martin and Burch, Neil and Lis{\`y}, Viliam and Morrill, Dustin and Bard, Nolan and Davis, Trevor and Waugh, Kevin and Johanson, Michael and Bowling, Michael},
  journal={Science},
  volume={356},
  number={6337},
  pages={508--513},
  year={2017},
  publisher={American Association for the Advancement of Science}
}

@article{blackwell1956analog,
  title={An analog of the minimax theorem for vector payoffs},
  author={Blackwell, David},
  journal={Pacific Journal of Mathematics},
  volume={6},
  number={1},
  pages={1--8},
  year={1956},
  publisher={Mathematical Sciences Publishers}
}

@misc{tao2026optimalalternatingregretonline, title={Optimal Alternating Regret for Online Learning and Games}, author={Yixin Tao and Weiqiang Zheng}, year={2026}, eprint={2608.24731}, archivePrefix={arXiv}, primaryClass={cs.LG}, url={https://arxiv.org/abs/2608.24731}}

@misc{liu2026constantindividualregretgeneral,
      title={Constant Individual Regret in General Games},
      author={Mingyang Liu and Gabriele Farina and Asuman Ozdaglar},
      year={2026},
      eprint={2608.31166},
      archivePrefix={arXiv},
      primaryClass={cs.LG},
      url={https://arxiv.org/abs/2608.31166},
}

@inproceedings{xu2024dynamic,   title={Dynamic discounted counterfactual regret minimization},   author={Xu, Hang and Li, Kai and Fu, Haobo and Fu, Qiang and Xing, Junliang and Cheng, Jian},   booktitle={The Twelfth International Conference on Learning Representations},   year={2024} }

@inproceedings{li2025efficient,   title={Efficient online pruning and abstraction for imperfect information extensive-form games},   author={Li, Boning and Huang, Longbo},   booktitle={The Thirteenth International Conference on Learning Representations},   year={2025} }

@article{li2026real,   title={Real-Time Parallel Counterfactual Regret Minimization},   author={Li, Boning and Huang, Longbo},   journal={arXiv preprint arXiv:2605.19928},   year={2026} }

@inproceedings{ meng2026a, title={A Faster Parameter-Free Regret Matching Algorithm}, author={Linjian Meng and Youzhi Zhang and Shangdong Yang and Wenbin Li and Tianyu Ding and Yang Gao}, booktitle={The Fourteenth International Conference on Learning Representations}, year={2026}, url={https://openreview.net/forum?id=JLllvi7dsg} }

@article{zhang2025scale,   title={Scale-invariant regret matching and online learning with optimal convergence: Bridging theory and practice in zero-sum games},   author={Zhang, Brian Hu and Anagnostides, Ioannis and Sandholm, Tuomas},   journal={arXiv preprint arXiv:2510.04407},   year={2025} }

@article{li2026gpu, title={GPU-CFR: 80x Faster Counterfactual Regret Minimization by Compiling the Game to Static Dataflow and CUDA Graph Replay}, author={Li, Boning and Huang, Longbo}, journal={arXiv preprint arXiv:2609.11923}, year={2026}}

@article{lazarsfeld2026optimism,   title={Optimism Without Regularization: Constant Regret in Zero-Sum Games},   author={Lazarsfeld, John and Piliouras, Georgios and Sim, Ryann and Skoulakis, Stratis},   journal={Advances in Neural Information Processing Systems},   volume={38},   pages={121723--121758},   year={2026} }

@inproceedings{li2024rl,   title={RL-CFR: improving action abstraction for imperfect information extensive-form games with reinforcement learning},   author={Li, Boning and Fang, Zhixuan and Huang, Longbo},   booktitle={Proceedings of the 41st International Conference on Machine Learning},   pages={27752--27770},   year={2024} }

@article{li2026effective,   title={Effective, Efficient, and General Information Abstraction for Imperfect-Information Extensive-Form Games},   author={Li, Boning and Huang, Longbo},   journal={arXiv preprint arXiv:2605.10900},   year={2026} }

@article{li2026pokerskill,   title={PokerSkill: LLMs Can Play Expert-Level Poker without Training or Solvers},   author={Li, Boning and Wang, Baoxiang and Huang, Longbo},   journal={arXiv preprint arXiv:2605.30094},   year={2026} }

@inproceedings{li2026abstraction, title={Abstraction Agent: LLM-Guided Abstraction for Imperfect-Information Games}, author={Li, Boning and Huang, Longbo}, booktitle={Findings of the Empirical Methods in Natural Language Processing}, year={2026}}

@article{li2026correlated, title={Correlated Chance Sampling for Monte Carlo Counterfactual Regret Minimization}, author={Li, Boning and Chen, Yu and Huang, Longbo}, journal={arXiv preprint arXiv:2607.27035}, year={2026}}

@article{li2026agents, title={Agents That Certify Their Own Exploits: Confidence-Scheduled Restricted Responses for Safe Opponent Exploitation}, author={Li, Boning and Huang, Longbo}, journal={arXiv preprint arXiv:2607.28520}, year={2026}}

@article{li2026av, title={AV-AIVAT: 74x Cheaper Agent Evaluation with Certified Anytime-Valid Stopping in Imperfect-Information Games}, author={Li, Boning and Chen, Yu and Huang, Longbo}, journal={arXiv preprint arXiv:2608.06362}, year={2026}}

@article{li2026last,
  title={Why Last-Iterate Scale-Invariant Regret Matching Converges Linearly?},
  author={Li, Boning and Huang, Longbo},
  journal={arXiv preprint arXiv:2609.38162},
  year={2026}
}

@article{daskalakis2018limit,   title={The limit points of (optimistic) gradient descent in min-max optimization},   author={Daskalakis, Constantinos and Panageas, Ioannis},   journal={Advances in neural information processing systems},   volume={31},   year={2018} }

@article{farina2019optimistic,   title={Optimistic regret minimization for extensive-form games via dilated distance-generating functions},   author={Farina, Gabriele and Kroer, Christian and Sandholm, Tuomas},   journal={Advances in neural information processing systems},   volume={32},   year={2019} }

@inproceedings{wei2021linear, title={Linear Last-iterate Convergence in Constrained Saddle-point Optimization}, author={Chen-Yu Wei and Chung-Wei Lee and Mengxiao Zhang and Haipeng Luo}, booktitle={International Conference on Learning Representations}, year={2021}, url={https://openreview.net/forum?id=dx11_7vm5_r} }

@inproceedings{abe2023last,   title={Last-Iterate Convergence with Full and Noisy Feedback in Two-Player Zero-Sum Games},   author={Abe, Kenshi and Ariu, Kaito and Sakamoto, Mitsuki and Toyoshima, Kentaro and Iwasaki, Atsushi},   booktitle={International Conference on Artificial Intelligence and Statistics},   pages={7999--8028},   year={2023},   organization={PMLR} }

@article{cai2024fast,   title={Fast last-iterate convergence of learning in games requires forgetful algorithms},   author={Cai, Yang and Farina, Gabriele and Grand-Cl{\'e}ment, Julien and Kroer, Christian and Lee, Chung-Wei and Luo, Haipeng and Zheng, Weiqiang},   journal={Advances in Neural Information Processing Systems},   volume={37},   pages={23406--23434},   year={2024} }

@inproceedings{xu2024minimizing,   title={Minimizing weighted counterfactual regret with optimistic online mirror descent},   author={Xu, Hang and Li, Kai and Liu, Bingyun and Fu, Haobo and Fu, Qiang and Xing, Junliang and Cheng, Jian},   booktitle={Proceedings of the Thirty-Third International Joint Conference on Artificial Intelligence},   pages={5272--5280},   year={2024} }

@inproceedings{fiegel2025harder,   title={The Harder Path: Last Iterate Convergence for Uncoupled Learning in Zero-Sum Games with Bandit Feedback},   author={Fiegel, C{\^o}me and Menard, Pierre and Kozuno, Tadashi and Valko, Michal and Perchet, Vianney},   booktitle={Forty-second International Conference on Machine Learning},   year={2025} }

@inproceedings{soleymani2025faster,
  title={Faster rates for no-regret learning in general games via cautious optimism},
  author={Soleymani, Ashkan and Piliouras, Georgios and Farina, Gabriele},
  booktitle={Proceedings of the 57th Annual ACM Symposium on Theory of Computing},
  pages={518--529},
  year={2025}
}

@inproceedings{meng2026faster,   title={Faster Game Solving via Asymmetry of Step Sizes},   author={Meng, Linjian and Yang, Tianpei and Zhang, Youzhi and Ge, Zhenxing and Gao, Yang},   booktitle={Proceedings of the AAAI Conference on Artificial Intelligence},   volume={40},   number={20},   pages={17161--17169},   year={2026} }

@inproceedings{ICLR2026_b0b932e3,
 author = {Anagnostides, Ioannis and Tewolde, Emanuel and Zhang, Brian and Panageas, Ioannis and Conitzer, Vincent and Sandholm, Tuomas},
 booktitle = {International Conference on Learning Representations},
 editor = {C. Vondrick and B. Hariharan and C. Raffel and L. Pinto and D. Yang and A. Faust},
 pages = {108681--108707},
 title = {Convergence of Regret Matching in Potential Games and Constrained Optimization},
 url = {https://proceedings.iclr.cc/paper_files/paper/2026/file/b0b932e3b588a8e25c395433dbff769f-Paper-Conference.pdf},
 volume = {2026},
 year = {2026}
}

@misc{orabona2026lastiterate, title={Last-Iterate Convergence of Optimistic Multiplicative Weight Update}, author={Orabona, Francesco}, year={2026}, eprint={2606.11773}, archivePrefix={arXiv}, primaryClass={math.OC}, url={https://arxiv.org/abs/2606.11773} }

@article{li2026icm, title={ICM Out! Better Tournament Strategy from Computed Continuations, vs. Solvers and LLMs}, author={Li, Boning and Huang, Longbo}, journal={arXiv preprint arXiv:2608.09586}, year={2026}}

@misc{dewasurendra2026selfboundingregretmatchingpotential, title={Self-Bounding Regret Matching+ in Potential Games and Product-Simplex Optimization}, author={Pahan Dewasurendra and Subhashini Jayawardhana}, year={2026}, eprint={2608.17417}, archivePrefix={arXiv}, primaryClass={cs.GT}, url={https://arxiv.org/abs/2608.17417}}

@inproceedings{cai2025from, title={From Average-Iterate to Last-Iterate Convergence in Games: A Reduction and Its Applications}, author={Yang Cai and Haipeng Luo and Chen-Yu Wei and Weiqiang Zheng}, booktitle={The Thirty-ninth Annual Conference on Neural Information Processing Systems}, year={2025}, url={https://openreview.net/forum?id=mG9xItYI0D} }

@misc{savci2026gariprunningaveragemovingreference, title={GARIP: A Running-Average Moving Reference for Last-Iterate Self-Play in Two-Player Zero-Sum Games}, author={Can Savc{\i}}, year={2026}, eprint={2606.22688}, archivePrefix={arXiv}, primaryClass={cs.MA}, url={https://arxiv.org/abs/2606.22688} }

@article{cai2026separation, title={Separation of Nonergodic Uniform Convergence Rates for Regularized Learning in Games}, author={Cai, Yang and Farina, Gabriele and Grand-Cl{\'e}ment, Julien and Kroer, Christian and Lee, Chung-Wei and Luo, Haipeng and Zheng, Weiqiang}, journal={Operations Research}, year={2026}, publisher={INFORMS}}
\bibliographystyle{dilab_ref}
\makeappendixtoc

\appendix
\section{Proofs}
\label{app:proofs}

Throughout, strategies are parameterized by their first component and the witness game
is $A(a)$, $a \in (1,4]$, with $\delta^1(y) = (a+3)y - 2$, $\delta^2(x) = 2 - (a+3)x$,
and $x^{*} = y^{*} = 2/(a+3)$. All algebraic identities below are additionally verified
by symbolic computation (15/15 checks; \cref{tab:sympy}).

\subsection{Proof of \cref{lem:sign} (sign structure)}

For player 2, $u^2(\vx) = -A^{\top}\vx$, so $u^2_1 - u^2_2 = \delta^2(x)$ and
\[
\begin{aligned}
g^2_1 &= u^2_1 - \big(y\,u^2_1 + (1-y)\,u^2_2\big) = (1-y)\,\delta^2(x),\\
g^2_2 &= u^2_2 - \big(y\,u^2_1 + (1-y)\,u^2_2\big) = -y\,\delta^2(x).
\end{aligned}
\]
The formula for $g^1$ is identical with $\delta^1(y)$, and the sign statements follow by
inspection of the products. \hfill$\square$

\subsection{Proof of \cref{lem:readout} (exact-vertex characterization)}

By \cref{lem:sign}, $\vm = g^2(\vx_{t+1}, \vy_t) = ((1-y_t)\Delta, -y_t\Delta)$ with
$\Delta = \delta^2(x_{t+1})$, and the readout is
$\vy_{t+1} = \sigma\big([\alpha + (1-y_t)\Delta]^{+}, [\beta - y_t\Delta]^{+}\big)$.
It equals $e_1$ exactly when the second entry truncates to zero ($\beta \le y_t\Delta$)
and the first entry is positive; it equals $e_2$ exactly in the symmetric case
$\alpha \le -(1-y_t)\Delta$ with $\beta - y_t\Delta > 0$. If both entries truncate,
the readout vector is $\mathbf{0}$ and the convention applies. Otherwise at least one
entry is positive; the readout is a vertex in the two cases above, uniform when
$[\alpha+(1-y_t)\Delta]^{+} = [\beta-y_t\Delta]^{+} > 0$ (for instance $\alpha =
\beta$ and $\Delta = 0$), and interior otherwise. In
particular a zero disfavored coordinate ($\beta = 0$, $\Delta > 0$) satisfies the first
inequality for every positive signal with $y_t < 1$. \hfill$\square$

\subsection{Robust variant (\cref{lem:readout}$'$, $\rho$-near-vertex readout)}
\label{app:lem2p}

\begin{lemma}[$\rho$-near-vertex readout]
\label{lem:readout-robust}
Suppose $\max(\alpha, \beta) \le \bar\varepsilon$, $\Delta \ge \Delta_0 > 0$, and
$y_t \le 1 - \theta$ for some $\theta > 0$. Then
\[
y_{t+1} \;\ge\; \frac{\theta\Delta_0}{\theta\Delta_0 + \bar\varepsilon},
\qquad\text{i.e.}\qquad
\|\vy_{t+1} - e_1\|_1 \le 2\rho, \quad
\rho := \frac{\bar\varepsilon}{\theta\Delta_0 + \bar\varepsilon},
\]
and symmetrically for $\Delta \le -\Delta_0$, $y_t \ge \theta$. Consequently, if
domination holds only approximately (up to an $\bar\varepsilon$-sized violation), the
separation constant in \cref{thm:t1} degrades from $2\eta$ to $2(\eta - \rho)$.
\end{lemma}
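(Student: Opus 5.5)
The plan is to write the readout coordinatewise, exactly as in the proof of \cref{lem:readout}, and bound the two entries separately. With $\vm = ((1-y_t)\Delta, -y_t\Delta)$ and $\vq^{2+} = (\alpha,\beta)$, the readout is
\[
y_{t+1} = \frac{P}{P+N}, \qquad P = [\alpha + (1-y_t)\Delta]^{+}, \qquad N = [\beta - y_t\Delta]^{+}.
\]
We need a lower bound on $P$ and an upper bound on $N$. For $P$, use $\alpha \ge 0$, $1-y_t \ge \theta$ and $\Delta \ge \Delta_0$ to get $P \ge \theta\Delta_0 > 0$. This also shows the readout vector is nonzero, so the uniform convention never triggers. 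For $N$, use $y_t\Delta \ge 0$ to get $N \le [\beta]^{+} \le \bar\varepsilon$.

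Since $p \mapsto p/(p+N)$ is increasing in $p$ and decreasing in $N$, these two bounds give
\[
y_{t+1} \ge \frac{\theta\Delta_0}{\theta\Delta_0 + \bar\varepsilon}.
\]
Then $\|\vy_{t+1} - e_1\|_1 = 2(1-y_{t+1}) \le 2\bar\varepsilon/(\theta\Delta_0 + \bar\varepsilon) = 2\rho$. The symmetric case $\Delta \le -\Delta_0$, $y_t \ge \theta$ follows by swapping the coordinates, with $N \ge \theta\Delta_0$ and $P \le \bar\varepsilon$.

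For the consequence, I would mirror the proof of branch~(a) of \cref{thm:t1}. At a time where the hypotheses hold, \cref{ass:c1} gives $y^{\ALTmath}_{t+1} \in [\eta, 1-\eta]$, and we have shown $y^{\mathrm{OS}}_{t+1} \ge 1-\rho$. In the two-action parameterization the $\ell_1$ distance is twice the coordinate gap, so
\[
\|\vy^{\ALTmath}_{t+1} - \vy^{\mathrm{OS}}_{t+1}\|_1 = 2\,|y^{\ALTmath}_{t+1} - y^{\mathrm{OS}}_{t+1}| \ge 2\big((1-\rho) - (1-\eta)\big) = 2(\eta-\rho).
\]
The $e_2$ side is symmetric, using $y^{\ALTmath}_{t+1} \ge \eta$. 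Finally $D_{t+1}$ dominates the $y$-part of the distance, which gives the degraded constant. The bound is only meaningful when $\rho < \eta$; I would state that as the regime of interest.

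The main obstacle is interpretive, not computational: making precise what "domination holds only approximately" means. I would formalize it as the hypothesis set of the lemma, namely a small persistent state (at most $\bar\varepsilon$), a signal bounded away from zero, and $y_t$ bounded away from the opposing boundary. I would replace "vertex time" by "$\rho$-near-vertex time" in the dichotomy, noting that infinitely many such times yield weak separation with constant $2(\eta-\rho)$. The algebra itself is routine.
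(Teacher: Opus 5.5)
Your proposal is correct and follows the paper's proof. The favored entry is at least $(1-y_t)\Delta \ge \theta\Delta_0$, the disfavored entry is at most $\beta \le \bar\varepsilon$, and monotonicity of $P/(P+N)$ gives the bound; your explicit derivation of the degraded constant $2(\eta-\rho)$ from the interior band (for $t \ge T_0$, meaningful when $\rho < \eta$) fills in a step the paper only asserts.
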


\begin{proof}
The favored component of $[\vq^{2+} + \vm]^{+}$ is at least $(1-y_t)\Delta \ge
\theta\Delta_0$; the disfavored component is at most $\beta \le \bar\varepsilon$. The
readout $y_{t+1} = L/(L+R)$ is monotone in both, giving the bound.
\end{proof}

The lemma is useful only when $\rho \ll \eta$, i.e.\ when the persistent state is small
\emph{relative to the favored signal mass} $\theta\Delta_0$. On the witness this is not the
case: at jump times the ratio $\varepsilon_t/(\theta_t\Delta_t + \varepsilon_t)$ has median
$0.21$ and reaches $1$ (\cref{tab:prec}), because the mechanism operates with signals of the
same order as the state. The lemma therefore gives no floating-point margin. The robustness
evidence is the four-precision agreement of every certificate statistic.

\subsection{Proof of \cref{lem:washout} (best-response-matched suppression)}

With $\vy_t = e_1$, \cref{lem:sign} gives $g^2(\vx_t, e_1) = (0, -\delta^2(x_t))$.
If $\delta^2(x_t) \ge 0$ the update adds a nonpositive vector, and the state
decreases componentwise, $\vq^{2+}_{t+1} = (q^2_L, [q^2_R - \delta^2(x_t)]^{+}) \le \vq^2_t$.
The $e_2$ case is symmetric. \hfill$\square$

The mechanism is self-reinforcing: vertex play aligned with the stale signal washes the
disfavored persistent component down (the favored one is unchanged); a small persistent
state makes the next domination easier
(\cref{lem:readout}). Empirically the tail persistent state satisfies
$\|\vq^{2,\mathrm{OS}}\|_\infty \le 8.5\times 10^{-3}$, versus component sums of
$37$--$60$ for \ALT's $\vq^2$ on the same games.

\subsection{Proof of \cref{lem:classify} (vertex--uniform classification)}

If $\vq^{2+}_{t+1} = \mathbf{0}$, then $[\vq^{2+} + \vm]^{+} = [\vm]^{+}$, and by
\cref{lem:sign} the vector $\vm$ has opposite-signed components (or a zero component at
the boundary), so $[\vm]^{+}$ is either a single-coordinate positive vector (readout:
exact vertex, selected by $\mathrm{sign}\,\Delta$) or zero (readout: uniform, occurring
when $\Delta = 0$ or when the sign points into the boundary $y_t \in \{0,1\}$ from
outside). \hfill$\square$

\subsection{Proof of \cref{lem:complex,lem:growth} (complex update and normalizer
growth)}

Write $z = q_L + \mathrm{i}q_R$ and $\varepsilon = \delta/S$. For the two-action
increment $\vg = \delta(1-x, -x)$, $g_L = (1-x)\delta = (q_R/S)\delta = \varepsilon
q_R$ and $g_R = -x\delta = -\varepsilon q_L$, so $z^{+} = (q_L + \varepsilon q_R) +
\mathrm{i}(q_R - \varepsilon q_L) = z(1 - \mathrm{i}\varepsilon)$. Then
$|z^{+}|^{2} = |z|^{2}(1+\varepsilon^{2})$ and $|z|^{2}\varepsilon^{2} =
\delta^{2}|z|^{2}/S^{2} = \delta^{2}(x^{2} + (1-x)^{2})$ because $q_L = xS$ and
$q_R = (1-x)S$, giving \eqref{eq:complex}; both components of $\vq + \vg$ are positive
exactly when $z^{+}$ lies in the open first quadrant. Monotonicity of $x$ in $\alpha$
follows from its definition, and $|\arctan\varepsilon| \le |\varepsilon|$.

On a truncation-free window, $q^1_{1,t} = x_t S^1_t$ and the update gives
\[
x_{t+1} - x_t
= \frac{g^1_1 - x_t\,\Sigma g^1}{S^1_{t+1}},
\qquad
\Sigma g^1 = (1 - 2x_t)\,\delta^1(y_t),
\]
and by \cref{lem:sign},
$g^1_1 - x_t \Sigma g^1 = \delta^1(y_t)\,[(1-x_t) - x_t(1 - 2x_t)] =
\delta^1(y_t)\,\varphi(x_t)$ with $\varphi(u) = u^{2} + (1-u)^{2} = 2u^2 - 2u + 1$
attaining its minimum $\nicefrac12$ at $u = \nicefrac12$; the $y$-update is identical
with $\delta^2(x_{t+1})$ and $S^2$. The sign statements follow since $\varphi > 0$ and
$S^i > 0$, and $\delta^1(y) \gtrless 0$ iff $y \gtrless y^{*}$,
$\delta^2(x) \gtrless 0$ iff $x \lessgtr x^{*}$.

Both \ALT{} states are nonzero from the first update on. From zero initialization
player~1 updates first against the uniform $\vy_1$ with $\delta^1(\nicefrac12) =
(a-1)/2 > 0$, so $\vq^1_2 = ((a-1)/4,\,0) \neq \mathbf{0}$; player~2 then updates
against $\vx_2 = e_1$ with $\delta^2(1) = -(a+1) < 0$, so $\vq^2_2 = (0,\,(a+1)/2)
\neq \mathbf{0}$. A nonzero state never returns to zero: $\vq_{t+1} = \mathbf{0}$
requires $\vg_t \le -\vq_t$ componentwise, while $\langle \vq_t, \vg_t\rangle =
S_t\,\vx_t^{\top}\vg_t = 0$ forces $g_{t,i} = 0$ on the support of $\vq_t$, where
then $q_{t,i} + g_{t,i} = q_{t,i} > 0$. Hence at a nonzero state, interior play
$x_{t+1} \in (0,1)$ means both coordinates of $\vq^1_{t+1}$ are positive, i.e.\ round
$t$ is truncation-free; \cref{ass:c0} therefore covers exactly the rounds $t \ge T_0$
summed in \eqref{eq:growth}.

Under \cref{ass:c0}, iterating \eqref{eq:complex} over the truncation-free rounds gives
\eqref{eq:growth}, and $S = |z|(\cos\alpha + \sin\alpha) \ge |z|$ on the first
quadrant. Since $\varphi \ge \nicefrac12$, $|z^1_T|^{2} \ge \tfrac12 \sum_{t=T_0}^{T-1}
\delta^{1}(y_t)^{2}$, so a divergent sum forces $S^{1}_t \to \infty$ and the angular
step $\arctan(|\delta^{1}(y_t)|/S^{1}_t) \to 0$; if instead the sum converges then
$\delta^{1}(y_t) \to 0$, and $\delta^{1}$ is affine with nonzero slope, so $y_t \to
y^{*}$. The claims for player~2 are symmetric. \hfill$\square$

\subsection{Proof of \cref{prop:epoch} (epoch structure)}

(i) The persistent increment is $g^{2}(\vx_t, \vy_t) = (1-y_t, -y_t)\,\delta^{2}(x_t)$
by \cref{lem:sign}, with $\delta^{2}(x_t) > 0$ on $E$; the first
coordinate never truncates (it only grows), the second decreases until it truncates to
zero and then stays zero as long as the increment is nonpositive, i.e.\ throughout $E$.
(ii) With $\beta = q^{2}_{R,t+1} = 0$, $\alpha = q^{2}_{L,t+1} > 0$ and $\Delta =
\delta^{2}(x_{t+1}) > 0$, \cref{lem:readout} gives $\vy_{t+1} = e_1$. At $y_{t+1} = 1$ the
next persistent increment is $(0, -\delta^{2}(x_{t+1}))$, so $q^{2}_{R,t+2} = 0$ and
$q^{2}_{L,t+2} = \alpha > 0$; induction along $E$ repeats the argument while $\Delta > 0$.
(iii) By (ii) the run $\vy_u = e_1$ covers $\{t+1, \dots, e\}$; if $E$ were infinite
the run would be infinite. Against $\vy = e_1$ player~1 receives
$g^{1}(\vx, e_1) = (a+1)(1-x, -x)$, so $q^{1}_L$ increases by $(a+1)(1-x_u) > 0$
whenever $x_u < 1$ and $q^{1}_R$ is nonincreasing; hence $x_u = q^{1}_L/(q^{1}_L +
q^{1}_R)$ is strictly increasing while $x_u < 1$. If $x_u \le 1-c$ for all $u$ then
$q^{1}_L \to \infty$ with $q^{1}_R$ bounded, forcing $x_u \to 1$, a contradiction; so
$x_u \to 1 > x^{*}$ (or $x_u = 1$ at some round), and $\delta^{2}(x_u) < 0$ eventually,
contradicting $\delta^2 > 0$ on $E$.
The next epoch is negative. By maximality of $E$, $\delta^{2}(x_{e+1}) \le 0$. If it is
negative, the next epoch starts at $e+1$. If $\delta^{2}(x_{e+1}) = 0$, then $x_{e+1}
= x^{*}$, and the readout at round $e$ has $\Delta = 0$ with $\beta = 0 < \alpha$, so
$\vy_{e+1} = e_1$ by \cref{lem:readout}; player~1 updates against $e_1$ once more and
$x_{e+2} > x_{e+1} = x^{*}$, so $\delta^{2}(x_{e+2}) < 0$ and the next epoch starts at
$e+2$, the round $e+1$ belonging to no epoch.
(iv) For an epoch of length at least two, by (i) the criterion produces an exactly zero
disfavored coordinate inside $E$ before
its last round; by (ii) the epoch then contains a vertex time; infinitely many such epochs
give infinitely many vertex times, which is branch~(a). \hfill$\square$

\subsection{Proof of \cref{lem:flip} (sign-flip readout)}
\label{app:lem-flip}

\begin{lemma}[Sign-flip readout]
\label{lem:flip}
\emergencystretch=2em
In $\Phi_{\OSPRMmath}$, let round $s$ satisfy $\delta^2(x_s) > 0$ and
$\delta^2(x_{s+1}) < 0$ (round $s$ is the last round of a positive epoch; the epoch has
length one iff also $\delta^2(x_{s-1}) \le 0$). Then
$\vy_{s+1} = e_2$ if and only if
\[
\begin{aligned}
&q^2_{L,s} + (1-y_s)\,\delta^2(x_s) \;\le\; (1-y_s)\,|\delta^2(x_{s+1})| \\
&\quad\text{and}\quad \big(y_s > 0 \ \text{ or } \ q^2_{R,s} > y_s\,\delta^2(x_s)\big),
\end{aligned}
\]
so a vertex at a sign flip requires the coordinate favored by the finished epoch to be at
most the growth of the signal magnitude across the flip,
$q^2_{L,s} \le (1-y_s)\big(|\delta^2(x_{s+1})| - \delta^2(x_s)\big)$; an entering zero
disfavored coordinate $q^2_{R,s} = 0$ is neither necessary nor sufficient. The symmetric
statement holds for a negative epoch of length one with $L$ and $R$ exchanged.
\end{lemma}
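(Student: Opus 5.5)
The plan is a direct specialization of \cref{lem:readout} to the round at which the sign flips. The only work is to compute the post-update persistent state $\vq^{2+}_{s+1} = (\alpha,\beta)$ explicitly, using the sign of the stale increment, and then substitute it into the $e_2$ condition.

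\emph{Step 1 (persistent update at round $s$).} By \cref{lem:sign}, the stale increment is $g^2(\vx_s,\vy_s) = \big((1-y_s)\delta^2(x_s),\, -y_s\delta^2(x_s)\big)$. Since $\delta^2(x_s) > 0$, the first coordinate only grows and never truncates, so $\alpha = q^2_{L,s} + (1-y_s)\delta^2(x_s)$. The second coordinate is $\beta = [\,q^2_{R,s} - y_s\delta^2(x_s)\,]^{+}$, exactly as in \cref{prop:epoch}(i).

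\emph{Step 2 (apply the vertex characterization).} The signal is $\Delta = \delta^2(x_{s+1}) < 0$. \cref{lem:readout} gives $\vy_{s+1} = e_2$ iff two conditions hold:
\begin{itemize}
\item $\alpha \le -(1-y_s)\Delta = (1-y_s)|\delta^2(x_{s+1})|$, which after substituting $\alpha$ is literally the first displayed inequality;
\item $\beta - y_s\Delta = \beta + y_s|\delta^2(x_{s+1})| > 0$.
\end{itemize}
For the second condition I split on $y_s$. If $y_s > 0$ it holds automatically, because $\beta \ge 0$ and $y_s|\Delta| > 0$. If $y_s = 0$ it reduces to $\beta > 0$, i.e.\ $q^2_{R,s} > 0 = y_s\delta^2(x_s)$. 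In both cases it is equivalent to ``$y_s > 0$ or $q^2_{R,s} > y_s\delta^2(x_s)$''. This is the only place where care is needed: one must keep the positive part in $\beta$ and check that the disjunction matches exactly at the boundary $y_s = 0$.

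\emph{Step 3 (consequences).} Moving $(1-y_s)\delta^2(x_s)$ to the right-hand side of the first inequality gives the necessary condition $q^2_{L,s} \le (1-y_s)\big(|\delta^2(x_{s+1})| - \delta^2(x_s)\big)$. For the claim that an entering zero $q^2_{R,s}=0$ is neither necessary nor sufficient, I exhibit two admissible configurations.
\begin{itemize}
\item \emph{Not sufficient:} take $q^2_{R,s} = 0$ with $q^2_{L,s}$ larger than $(1-y_s)|\delta^2(x_{s+1})|$. The first inequality fails, so no vertex results.
\item \emph{Not necessary:} take $q^2_{R,s} > 0$, $q^2_{L,s} = 0$, $0 < y_s < 1$ and $|\delta^2(x_{s+1})| \ge \delta^2(x_s)$. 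Both conditions hold, so $\vy_{s+1} = e_2$ even though $q^2_{R,s} \neq 0$.
\end{itemize}
Since these concern only the update and readout maps at a single round, such local states suffice to witness the logical independence.

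\emph{Step 4 (length-one epochs and symmetry).} The parenthetical about length-one epochs follows from the definition of an epoch as a maximal run of constant sign. The negative-epoch case is obtained by exchanging $L$ and $R$, $y_s$ and $1-y_s$, and the sign of $\delta^2$ throughout, as in \cref{lem:readout}.
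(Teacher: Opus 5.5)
Your proposal is correct and follows the paper's proof essentially line by line. You compute $\alpha = q^2_{L,s} + (1-y_s)\delta^2(x_s)$ and $\beta = [q^2_{R,s} - y_s\delta^2(x_s)]^{+}$ from the positive stale increment, substitute them into the $e_2$ case of \cref{lem:readout}, and reduce $\beta + y_s|\Delta| > 0$ to ``$y_s > 0$ or $\beta > 0$''; the only addition is your explicit witness configurations for the ``neither necessary nor sufficient'' clause, which the paper dispatches by noting that $q^2_{R,s}$ does not appear in the first condition.
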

\begin{proof}
The persistent update at round $s$ adds $(1-y_s)\delta^2(x_s) \ge 0$ to $q^2_L$ (no
truncation) and gives $\beta = [q^2_{R,s} - y_s\,\delta^2(x_s)]^{+}$. The readout uses
$\Delta = \delta^2(x_{s+1}) < 0$, and \cref{lem:readout} gives $\vy_{s+1} = e_2$ iff
$\alpha \le -(1-y_s)\Delta = (1-y_s)|\Delta|$ and
$\beta - y_s\Delta = \beta + y_s|\Delta| > 0$, with
$\alpha = q^2_{L,s} + (1-y_s)\delta^2(x_s)$. The second condition holds iff $y_s > 0$ or
$\beta > 0$. The displayed consequence is the first condition rearranged; the last clause
follows because $q^2_{R,s}$ does not appear in it. \hfill$\square$
\end{proof}

\subsection{Proof of \cref{thm:t1}}

If (b) fails there are infinitely many vertex times, i.e.\
rounds with $\vy^{\mathrm{OS}}_{t+1} \in \{e_1, e_2\}$ exactly. At each such $t \ge T_0$,
\cref{ass:c1} places $y^{\ALTmath}_{t+1} \in [\eta, 1-\eta]$ (the $x$-part of
\cref{ass:c1} only secures \cref{ass:c0}), and for a vertex
$e \in \{e_1, e_2\}$,
$\|\vy^{\ALTmath}_{t+1} - e\|_1 = 2|y^{\ALTmath}_{t+1} - e_{(1)}| \ge 2\eta$ for both
choices of $e$, and $D_{t+1}$ dominates the $y$-block of the profile distance by
definition. \hfill$\square$

\subsection{Machine-checked identities}

\begin{table}[H]
\centering
\caption{The 15 symbolically verified identities. Each check treats
$(a, x, y, \alpha, \beta)$ as free symbols.}
\label{tab:sympy}
\small
\begin{tabularx}{\linewidth}{@{}l X@{}}
\toprule
\textbf{Group} & \textbf{Checks}\\
\midrule
\cref{lem:sign} & $g^1, g^2$ componentwise decomposition; opposite-sign structure (3)\\
\cref{lem:readout} & readout components under domination, both signs; zero case (3)\\
\cref{lem:readout-robust} & the $\rho$ identity and monotonicity bound (2)\\
\cref{lem:complex} & update identity $g^1_1 - x\Sigma g^1 = \delta^1\varphi(x)$;
$S$ recursion; $\varphi \in [\nicefrac12, 1]$ (4)\\
\cref{thm:t1} & vertex-distance inequality at both vertices; triangle structure of $D_t$ (3)\\
\bottomrule
\end{tabularx}
\end{table}

% ---------------------------------------------------------------------------
% A2_proofs_ext.tex — 新增附录: 5b/5c 的完整证明。\section 级, 放在 A_proofs.tex 之后。
% 2026-10-03 页数裁剪: 5c 的 prop:closed / prop:order / cor:channels 陈述
% 连同证明一并移入本文件 (正文只留摘要句), 故改名为 Statements and Proofs。
% ---------------------------------------------------------------------------
\section{Derivations}
\label{app:proofs-ext}

Throughout, $\|\cdot\|_1$ is the $\ell_1$ norm, $[\cdot]^{+}$ the componentwise
positive part, $\sigma(\vq) = \vq/\|\vq\|_1$ with $\sigma(\mathbf{0})$ uniform.

\begin{proof}[Proof of \cref{lem:norm-lip}]
If $\vu = \mathbf{0}$, then $\sigma(\vu)$ is uniform and
$\|\sigma(\vu) - \sigma(\vv)\|_1 \le 2$, the bound being $2\|\vu - \vv\|_1/\|\vv\|_1$.
Otherwise
\[
\begin{aligned}
\sigma(\vu) - \sigma(\vv) &= \frac{\vu - \vv}{\|\vv\|_1}
+ \vu\Big(\frac{1}{\|\vu\|_1} - \frac{1}{\|\vv\|_1}\Big),\\
\Big\|\vu\Big(\frac{1}{\|\vu\|_1} - \frac{1}{\|\vv\|_1}\Big)\Big\|_1
&= \frac{\big|\,\|\vv\|_1 - \|\vu\|_1\big|}{\|\vv\|_1}
\le \frac{\|\vu-\vv\|_1}{\|\vv\|_1},
\end{aligned}
\]
using $\vu \ge \mathbf{0}$ (so $\|\vu\|_1 = \sum_i u_i$) and the reverse triangle
inequality. Summing the two terms gives the claim.
\end{proof}

\begin{proof}[Proof of \cref{lem:g1}]
(i) $U(\mathbf{0}, \vm) = [\vm]^{+}$ and $[\lambda\vm]^{+} = \lambda[\vm]^{+}$ for
$\lambda > 0$; $\sigma$ is $0$-homogeneous. (ii) The map $z \mapsto [z]^{+}$ is
$1$-Lipschitz coordinatewise, so $\|[\vq + \vm]^{+} - [\vm]^{+}\|_1 \le \|\vq\|_1$.
Apply \cref{lem:norm-lip} with $\vu = [\vq+\vm]^{+}$, $\vv = [\vm]^{+}$.
\end{proof}

\begin{proof}[Proof of \cref{lem:g2}]
For $q_i \ge 0$: if $q_i + g_i \ge 0$ then $[q_i + g_i]^{+} - q_i = g_i$; otherwise
$[q_i+g_i]^{+} - q_i = -q_i$ and $q_i < -g_i = |g_i|$. Hence $\|[\vq+\vg]^{+} -
\vq\|_1 \le \|\vg\|_1$, and \cref{lem:norm-lip} with $\vv = \vq$ gives the bound.
\end{proof}

\begin{proof}[Proof of \cref{thm:gain}]
Every instantaneous regret vector has the form $\vg = \vu - \langle \vp, \vu\rangle
\mathbf{1}$ with $\vp$ in the simplex and $|u_i| \le \|A\|_{\max}$, so $\|\vg\|_1 \le
2n\|A\|_{\max}$. Part (a) (which assumes $\vq^2_t \ne 0$) is \cref{lem:g2} with
$\vg = \vf_t$; part (b) (which assumes $[\vf_t]_+ \ne 0$) is \cref{lem:g1}(ii) with
$\vm = \vf_t$, and the
$n=2$ statement is \cref{lem:sign}: $[\vf]^{+}$ has exactly one nonzero coordinate
when the two coordinates of $\vf$ have opposite signs.
\end{proof}

% R3-fill(F3a): cor:lastiter 陈述 已移入正文 (见 r3_pairs_fills.py)。

% R4-trim(F3b): cor:lastiter 证明 自正文移回 (见 r4_trim2.py)。

\begin{proof}[Proof of \cref{cor:lastiter}]
At a vertex time, $\vy^{\mathrm{OS}}_{t+1} \in \{e_1, e_2\}$ by definition.
The value of $A(a)$ is $v(a) = u(\vx^{*},\vy^{*})$; with $y^{*} = 2/(a+3)$,
$A(a)\vy^{*} = \big(\tfrac{a-1}{a+3}, \tfrac{a-1}{a+3}\big)$, so $v(a) =
\tfrac{a-1}{a+3}$. Against $e_1$ player~1's best response earns $\max(a, -1) = a$,
against $e_2$ it earns $\max(-1, 1) = 1$. Thus the exploitability of $e_1$ is $a -
v(a) = (a+1)^2/(a+3)$ and that of $e_2$ is $1 - v(a) = 4/(a+3)$; since $(a+1)^2 \ge 4$
for $a \ge 1$ the minimum is $4/(a+3)$, which is decreasing in $a$ and equals $4/7$ at
$a = 4$.
\end{proof}

\begin{proof}[Proof of \cref{rem:shift}]
$g^2(\vx,\vy) = -A^{\top}\vx + u(\vx,\vy)\mathbf{1}$, so
\[
\begin{aligned}
g^2(\vx_{t+1},\vy_t) - g^2(\vx_{t+1},\vy_{t+1})
&= [u(\vx_{t+1},\vy_t) - u(\vx_{t+1},\vy_{t+1})]\mathbf{1}.
\end{aligned}
\]
On the witness,
\[
|u(\vx_{t+1},\vy_t)-u(\vx_{t+1},\vy_{t+1})| = |y_t-y_{t+1}|\,|\delta^2(x_{t+1})|,
\]
so a macroscopic jump in $y$ carries a small regret-prediction error whenever
$x_{t+1}$ is near $x^*$ (measured p99 $\le 0.026$).
\end{proof}

\begin{proof}[Proof of \cref{prop:identity}]
Since $u$ is bilinear,
\begin{align*}
W\max_{\vx} u(\vx,\bar\vy)
&= \textstyle\max_{\vx} \sum_t w_t u(\vx,\vy_t)\\
&= R^1 + \textstyle\sum_t w_t u(\vx_t,\vy_t),\\
-W\min_{\vy} u(\bar\vx^{+},\vy)
&= \textstyle\max_{\vy} \sum_t w_t[-u(\vx_{t+1},\vy)]\\
&= R^2_{\mathrm{fresh}} - \textstyle\sum_t w_t u(\vx_{t+1},\vy_t).
\end{align*}
Adding,
\[
\begin{aligned}
W\,\mathrm{gap}(\bar\vx^{+},\bar\vy) &= R^1 + R^2_{\mathrm{fresh}} - \sum_t
w_t[u(\vx_{t+1},\vy_t) - u(\vx_t,\vy_t)]\\
&= R^1 + R^2_{\mathrm{fresh}} - \Delta.
\end{aligned}
\]
The
second identity is the same computation with $\vx_t$ in place of $\vx_{t+1}$, for which
the cross terms cancel exactly.
\end{proof}

\begin{proof}[Proof of \cref{prop:closed}]
$\vx_{t+1}$ and $\vx_t$ sum to one, so $(\vx_{t+1} - \vx_t)^{\top} A\vy_t =
(\vx_{t+1} - \vx_t)^{\top}\vg_t$. By definition of $\vg_t$, $\vx_t^{\top}\vg_t =
u(\vx_t,\vy_t) - u(\vx_t,\vy_t) = 0$, hence also $\langle \vq_t, \vg_t\rangle =
\|\vq_t\|_1\,\vx_t^{\top}\vg_t = 0$ (trivially if $\vq_t = \mathbf{0}$). If $S_{t+1} >
0$ then $\vx_{t+1} = \vq_{t+1}/S_{t+1}$ and
\[
(\vx_{t+1} - \vx_t)^{\top}\vg_t = \frac{\langle \vq_{t+1}, \vg_t\rangle}{S_{t+1}} =
\frac{\langle \vq_{t+1} - \vq_t, \vg_t\rangle}{S_{t+1}}.
\]
Coordinatewise, for $i \in \mathcal{A}_t$: $(q_{t,i} + g_{t,i} - q_{t,i})\,g_{t,i} =
g_{t,i}^2$; for $i \notin \mathcal{A}_t$: $(0 - q_{t,i})\,g_{t,i} = q_{t,i}|g_{t,i}|$
because $g_{t,i} < -q_{t,i} \le 0$. Both are nonnegative. If $S_{t+1} = 0$ then
$g_{t,i} \le -q_{t,i} \le 0$ for all $i$; since $\vx_t^{\top}\vg_t = 0$ with $\vx_t \ge
\mathbf{0}$ supported on $\{i : q_{t,i} > 0\}$ (or on all of $[m]$ if $\vq_t =
\mathbf{0}$), $g_{t,i} = 0$ on that support. If $\vq_t \ne \mathbf{0}$ this gives
$q_{t,i} + g_{t,i} = q_{t,i} > 0$ on the support, contradicting $S_{t+1} = 0$; so
$\vq_t = \mathbf{0}$, $\vx_t$ is uniform, $\vg_t \le \mathbf{0}$ has zero mean, hence
$\vg_t = \mathbf{0}$ and $\vx_{t+1} = \vx_t$, so the left-hand side is $0$.
\end{proof}

% R3-fill(F2a): prop:order 陈述 已移入正文 (见 r3_pairs_fills.py)。

% R4-trim(F2b): prop:order 证明 自正文移回 (见 r4_trim.py)。

\begin{proof}[Proof of \cref{prop:order}]
\emph{Potential.} $\|\vq_{t+1}\|_2^2 = \|[\vq_t + \vg_t]^{+}\|_2^2 \le \|\vq_t +
\vg_t\|_2^2 = \|\vq_t\|_2^2 + 2\langle \vq_t,\vg_t\rangle + \|\vg_t\|_2^2 =
\|\vq_t\|_2^2 + \|\vg_t\|_2^2$, using $\langle\vq_t,\vg_t\rangle = 0$ from the previous
proof; with $\vq_1 = \mathbf{0}$, induction gives $\|\vq_{t+1}\|_2^2 \le G_{t+1}^2$.
\emph{Lower bound.} By \cref{prop:closed}, each summand of $\Delta_T$ is at least
$\|\vg_t|_{\mathcal{A}_t}\|_2^2 / S_{t+1}$, and $S_{t+1} = \|\vq_{t+1}\|_1 \le
\sqrt{m}\,\|\vq_{t+1}\|_2 \le \sqrt{m}\,G_{t+1}$ (if $G_{t+1} = 0$ then $\vg_t =
\mathbf{0}$ and the summand is $0$). \emph{Telescoping.} If $\mathcal{A}_t = [m]$ then
$\|\vg_t\|_2^2 = G_{t+1}^2 - G_t^2 = (G_{t+1} - G_t)(G_{t+1} + G_t)$ and $(G_{t+1} +
G_t)/G_{t+1} \ge 1$, so $\|\vg_t\|_2^2/G_{t+1} \ge G_{t+1} - G_t$; sum over the window.
\emph{Regret.} $q_{t+1,i} = [q_{t,i} + g_{t,i}]^{+} \ge q_{t,i} + g_{t,i}$, so
$q_{T+1,i} \ge \sum_{t \le T} g_{t,i}$ and $R^1_T = \max_i \sum_t g_{t,i} \le \max_i
q_{T+1,i} \le \|\vq_{T+1}\|_2 \le G_{T+1}$.
\end{proof}

\begin{proof}[Proof of \cref{cor:channels}]
Insert both identities of \cref{eq:identity} into the log-ratio and cancel
$R^1 + R^2_{\mathrm{fresh}}$.
\end{proof}

% R4-trim(G1): machine-check 披露 自正文移回 (见 r4_trim2.py)。

Identities \eqref{eq:identity} and \eqref{eq:closed} and the bound \eqref{eq:order} are
verified numerically on every suite trajectory whose player~1 is plain \RMp{}
(\cref{app:protocol} lists the eligible arms; residual below $10^{-9}$ relative to the
largest term, per-step closed form to $10^{-10}$, and the lower bound, checked against
$\Delta$ recomputed with uniform weights, holds on all $1271$ eligible matrix-suite
trajectories), and
\cref{lem:g2} is verified step by step on \ALT{} (measured jump divided by the bound
never exceeds $1$).

\section{Registration}
\label{app:ladder}

This appendix documents the pre-registered claim ladder, the empirical failure of
its top rung, and the three proof routes attempted before the dichotomy theorem closed.
We include it because negative-result papers are only as credible as their disclosure
discipline.

\subsection{The ladder and its gates}

The ladder of \cref{sec:setup} was frozen before any theorem-directed experiment ran,
together with operational gates: the strong-separation rung (ii) required, on the
60-point witness scan, a nonempty parameter interval where
$\min_t D_t$ over the tail half stays $\ge 0.2$ while the windowed mean does not decay.
The frozen pre-registration file and the dated experiment tracker will be released
with the code upon acceptance.

\paragraph{Gate outcome (rung (ii) fails).}
The scan ($a \in [1.05, 4]$, 60 points, $T = 2\times 10^4$) produced: windowed means of
$D_t$ flat at $0.60$--$1.06$ on all 60 points (no decay anywhere, fourth-quarter vs
third-quarter window ratio $\ge 0.8$ on 60/60), but tail minima with median $0.012$ and
minimum $3.4\times 10^{-4}$; only two isolated grid points ($a = 3.15, 3.65$) met the
$\ge 0.2$ criterion, and no two adjacent grid points did. Deep runs at $T = 10^6$ pushed
tail minima lower still ($\sim 4\times 10^{-4}$). Mechanism (measured): \ALT's
rotation periodically sweeps past \OSPRM's vertex--uniform jump set, so on every tested
horizon the tail minimum of $D_t$ falls well below $0.2$ even though the process spends most of
its time at $D_t = \Theta(1)$. The gate verdict was FAIL; per pre-registration we
downgraded the proof target to weak trajectory separation (rung (iii)) and report the
mean-level flatness as empirical context only.

\subsection{Three proof routes, honestly accounted}

\paragraph{Route 1. Time-averaged separation.}
Bound $\bar D_T \ge |\bar y^{\ALTmath} - \bar y^{\mathrm{OS}}|$ and show the \OSPRM{}
average stays away from $y^{*}$. Excluded by our own measurement: the \OSPRM{}
\emph{duality gap of averages} also decays ($\sim T^{-1/2}$; $4$--$7 \times 10^{-4}$
at $T = 10^6$), so both average gaps decay on the measured horizon and the route cannot yield
a constant. Contribution retained: none (the route is dead), but the measurement
sharpened the claim that the separation is a trajectory phenomenon, invisible to
averages.

\paragraph{Route 2. Direct cascade by contradiction.}
Assume bounded $\vq^2$ and no vertex visits; derive that $x^{\mathrm{OS}}$ sticks near
$x^{*}$; contradict the player-1 regret dynamics. The final step (excluding
\emph{asymptotic} sticking; exact sticking is excludable) turned
out to require proving non-convergence of the \OSPRM{} last iterate, which we
could not close elementarily. Contributions retained: \cref{lem:readout},
\cref{lem:readout-robust}, \cref{lem:washout}, and the dichotomy structure of
\cref{thm:t1}, whose branch (b) is exactly the un-closable step, now isolated and
made the object of a direct test.

\paragraph{Route 3. ODE / stochastic-approximation limit.}
\cref{lem:complex} exhibits \ALT{} on truncation-free windows as a rotation field with
step $1/S_t$, inviting a Benaïm-style limiting-ODE analysis (and an $a = 2$
specialization). Establishing the required growth rate of $S_t$ is itself open, and the
program is a paper-sized undertaking that still would not touch the \OSPRM{}-side
difficulty. Contribution retained: \cref{lem:complex,lem:growth} as the mechanism lemmas behind
\cref{ass:c1}.

\paragraph{Final placement.}
The result landed between the pre-registered rungs: a rigorous dichotomy theorem with
one finite-horizon verified assumption and an explicit, empirically tight constant. This
is stronger than the pre-registered fallback for total proof failure (formalized
conjecture plus witness family), and weaker than an unconditional rung-(iii) theorem;
\cref{conj:cinf} records the remaining gap precisely.

\subsection{Exact-arithmetic validation}

As a check that the early dynamics are not floating-point artifacts (the four-precision
recomputation of \cref{tab:prec} covers long horizons statistically), the
first iterations of both dynamics were replayed in exact rational arithmetic at five
representative values of $a$. The rational and float trajectories agree to $10^{-12}$
for as long as the rational replay is feasible. Feasibility ends quickly, since
denominator bit-lengths roughly double per iteration (reaching $10^{4.5}$--$10^{5}$
bits by $t = 9$--$10$), so the exact replay covers the early iterations only.
Symbolic itineraries (truncation-activity patterns of both dynamics)
over long floating-point runs use at most 8 distinct symbols, with \ALT{} eventually
truncation-free (consistent with \cref{lem:complex}'s regime) and \OSPRM{} showing
persistent-state zeroing in $88$--$100\%$ of tail windows plus exact-vertex visits at
frequency $51$--$66\%$ on these five points (\cref{app:tables}).

\subsection{Scope of the claims}

Our claims are scoped to deterministic self-play dynamics of the tested update rules,
not to sampled (Monte Carlo) variants, discounted weighting \citep{brown2019solving},
or adversarial training loops. The theorem is for a $2\times 2$ family, and the
extensive-form evidence is witness-level, with one pre-registered gate failure
(\cref{sec:exp-efg}). Leduc-scale trees are far from the game sizes where alternation's
value was first observed \citep{bowling2015heads}. The catalog covers the dominant live
routes, not all conceivable explanations, and the staleness and damping dials are tested
down to $s=\lambda=10^{-2}$ and $10^{-3}$ only. Tree distances are per information set,
not reach-weighted.

Excluding branch (b) would require proving that \OSPRM{} plays exact vertices infinitely
often, a property stronger than last-iterate non-convergence and adjacent to the known
instability of \PRM-family last iterates \citep{farina2023regret,cai2025lastiterate}.
We offer no positive theory. The most natural candidate, play-level proximal damping, reproduces \ALT{} at
its three weakest tested settings ($\lambda \le 0.005$) and at a cost that grows with the perturbation.

\section{Protocol}
\label{app:protocol}

\subsection{The twelve arms}
\label{app:arms}

All arms share the \RMp{} state format (nonnegative regret vectors per player, zero
initialization, normalized readout, uniform on zero state) and differ only in the
update schedule and prediction structure. \cref{tab:arms} gives the exact rules.

\begin{table}[htbp]
\centering
\caption{The catalog's arms: the twelve main dynamics, the \textsc{alt-prm}$^{+}$
composition of \cref{sec:exp-add}, and the two re-injection dials of
\cref{sec:exp-inj,sec:exp-f3}. $g^i$ denotes player $i$'s instantaneous regret vector
(\cref{sec:setup}); $[\cdot]^{+}$ is the componentwise positive part.}
\label{tab:arms}
\footnotesize
\setlength{\tabcolsep}{4pt}
\begin{tabular}{@{}lp{14.68cm}@{}}
\toprule
\textbf{Arm} & \textbf{Update rule (one round)}\\
\midrule
\SIM{} & both players update from the previous joint profile (Jacobi sweep):
$\vq^i \leftarrow [\vq^i + g^i(\vx_t, \vy_t)]^{+}$.\\
\ALT{} & player 1 updates first; player 2 uses the \emph{new} $\vx_{t+1}$
(\cref{eq:alt}).\\
\PRM{} & two-sided predictive \RMp{}: persistent states accumulate stale increments;
each readout adds the previous round's own increment as prediction
\citep{farina2021faster}.\\
\OSPRM{} & one-sided perfect one-step prediction (\cref{eq:os}): player 2's persistent
state accumulates the stale increment; the readout adds the true fresh increment
$g^2(\vx_{t+1}, \vy_t)$ as a transient term.\\
\GSPROX$(\lambda)$ & \ALT{} schedule with a damped played strategy:
$\vq^1\leftarrow[\vq^1+g^1(\vx_t,\vy_t)]_+$, $\vx^{\mathrm{tgt}}_{t+1}=\sigma(\vq^1)$,
$\vx_{t+1}=(\vx^{\mathrm{tgt}}_{t+1}+\lambda\vx_t)/(1+\lambda)$; then
$\vq^2\leftarrow[\vq^2+g^2(\vx_{t+1},\vy_t)]_+$, $\vy^{\mathrm{tgt}}_{t+1}=\sigma(\vq^2)$,
$\vy_{t+1}=(\vy^{\mathrm{tgt}}_{t+1}+\lambda\vy_t)/(1+\lambda)$. The damping acts on the
played strategy only; the states accumulate exactly as in \ALT{} on the played profiles.
$\lambda \in \{0.25, 1.0\}$; $\lambda = 0$ recovers \ALT{} exactly.\\
\PP{} & proximal-point-style two-sided implicit step approximated by one inner
fixed-point iteration (an extragradient-style step in the spirit of ExRM$^{+}$, \citealp{farina2023regret}; not a reimplementation):
the half-step profile $\tilde\vx_{t+1}=\sigma([\vq^1_t+g^1(\vx_t,\vy_t)]_+)$,
$\tilde\vy_{t+1}=\sigma([\vq^2_t+g^2(\vx_t,\vy_t)]_+)$ is computed and discarded from the
state; the persistent accumulation uses regrets at the opponent's half-step strategy,
$\vq^1_{t+1}=[\vq^1_t+g^1(\vx_t,\tilde\vy_{t+1})]_+$,
$\vq^2_{t+1}=[\vq^2_t+g^2(\tilde\vx_{t+1},\vy_t)]_+$, and the readout
$\vx_{t+1}=\sigma(\vq^1_{t+1})$, $\vy_{t+1}=\sigma(\vq^2_{t+1})$ carries no prediction term.\\
\DELAY-$d$ & player 2's regret uses player 1's strategy from $d$ rounds earlier;
$d \in \{1, 2, 4, 8, 16\}$. By construction $d = 0$ coincides with \ALT{} and
$d = 1$ with \SIM{} (verified bit-exactly).\\
\TSOPT{} & two-sided perfect one-step optimism: both persistent states accumulate
stale increments, $\vq^i_{t+1}=[\vq^i_t+g^i(\vx_t,\vy_t)]_+$; the lookahead strategies
are the plain-\RMp{} readouts of the updated states, $\tilde\vx_{t+1}=\sigma(\vq^1_{t+1})$,
$\tilde\vy_{t+1}=\sigma(\vq^2_{t+1})$; each played strategy adds the regret at the
opponent's lookahead with the own current strategy as baseline,
$\vx_{t+1}=\sigma([\vq^1_{t+1}+g^1(\vx_t,\tilde\vy_{t+1})]_+)$,
$\vy_{t+1}=\sigma([\vq^2_{t+1}+g^2(\tilde\vx_{t+1},\vy_t)]_+)$. Dropping player~1's
prediction term gives $\vx_{t+1}=\tilde\vx_{t+1}$ and recovers \OSPRM{} exactly.\\
alt-prm$^{+}$ & alternating two-sided \PRM: $\vq^1_{t+1}=[\vq^1_t+g^1(\vx_t,\vy_t)]_+$, $\vx_{t+1}=\sigma([\vq^1_{t+1}+g^1(\vx_t,\vy_t)]_+)$; $\vq^2_{t+1}=[\vq^2_t+g^2(\vx_{t+1},\vy_t)]_+$, $\vy_{t+1}=\sigma([\vq^2_{t+1}+g^2(\vx_{t+1},\vy_t)]_+)$ (PCFR$^{+}$ schedule).\\
inj($\mu$) & $\vq^2\leftarrow[\vq^2+(1-\mu)g^2(\vx_t,\vy_t)+\mu\,g^2(\vx_{t+1},\vy_t)]_+$, $\vy_{t+1}=\sigma([\vq^2+(1-\mu)g^2(\vx_{t+1},\vy_t)]_+)$; total fresh weight in the readout is 1; $\mu=0$ is \OSPRM, $\mu=1$ is \ALT{} (bit-exact).\\
mix($\theta$) & $\vq^2\leftarrow[\vq^2+(1-\theta)g^2(\vx_t,\vy_t)+\theta\,g^2(\vx_{t+1},\vy_t)]_+$, $\vy_{t+1}=\sigma(\vq^2)$ (no transient term); $\theta=0$ is \SIM, $\theta=1$ is \ALT{} (bit-exact).\\
\bottomrule
\end{tabular}
\end{table}

\subsection{Game suite}

The 47-game matrix suite comprises: the 12-game pilot family (random dense/sparse
payoffs of sizes up to $5\times 5$, plus structured instances including a biased
matching-pennies); a 20-game size family (random games of increasing dimension); and a
15-game conditioning family ($U\,\mathrm{diag}(s)\,V^{\top}$ with orthogonal factors
and log-spaced spectra at $\kappa \in \{10^2, 10^4, 10^6\}$). The conditioning sweep of
\cref{sec:exp-f4} extends this construction to $\kappa \in \{10^1, \dots, 10^8\}$ with
5 fresh instances per band. Extensive-form witnesses: Kuhn poker (55 nodes / 12
information sets) \citep{kuhn2016simplified}, Leduc poker (9451 / 936)
\citep{southey2005bayes}, and a liar's-dice game with 4-face dice (8177 / 1024)
\citep{neller2011approximating}; the 5- and 6-face versions ($51{,}176$ / $294{,}877$
nodes) are the scale test of \cref{sec:exp-big}. Extensive-form
dynamics use the standard CFR decomposition with \RMp{} at each information set and
the corresponding schedule (alternating, simultaneous, predictive) applied treewide.
On extensive-form games the identities of \cref{prop:identity} are applied in sequence
form: $\vx_t,\vy_t$ are the realization plans of the behavioral profiles, $u$ is bilinear
in them, the averages are reach-weighted (realization-plan) averages, and
$R^1,R^2_{\mathrm{stale}},R^2_{\mathrm{fresh}}$ are the global sequence-form regrets of
the averaged plans, not sums of local counterfactual regrets. \cref{prop:closed} is a
statement about a single simplex state and is not invoked on trees; the sign $\Delta\ge0$
is measured there (all ten tree runs of \cref{tab:big5,tab:big6}), not proved.

\subsection{Lockstep protocol and determinism}

Every arm is run in lockstep with an \ALT{} reference on the same game: a single
program advances both dynamics and records $D_t$ and gap trajectories, eliminating any
cross-run nondeterminism. Lockstep references were additionally validated against
independently-run \ALT{} arms (maximal relative final-gap deviation: $0$). All
computations are pure CPU, single-process, in double precision unless stated (the
four-precision recomputation of \cref{tab:prec} also uses extended precision and
multi-precision arithmetic at 30 and 60 digits), with a fixed seed (game generation
only; the dynamics are deterministic given the game); total compute is $\approx 2.5$ CPU-hours for the main
suite on a single x86-64 server core. The extension runs add roughly $1.2$
CPU-hours for the four-precision $60$-point recomputation ($520$\,s on eight
workers), $0.4$ and $3.0$ CPU-hours for the 5- and 6-face liar's-dice runs
($293$\,s and $2193$\,s on five workers), and about a second for the epoch audit. Matrix-suite horizons are
$T = 2\times 10^4$ ($10^6$ for deep verification), extensive-form horizons $T = 10^4$,
witness-scan certificates $T = 2\times 10^5$.

\emph{Trajectory statistics.} \emph{Drift} is the total variation of the played
profile, $\sum_t \|\vx_{t+1}-\vx_t\|_1 + \|\vy_{t+1}-\vy_t\|_1$; \emph{oscillation}
is the same sum applied to successive differences,
$\sum_t \|(\vx_{t+1}-\vx_t)-(\vx_t-\vx_{t-1})\|_1$ plus the analogous $y$-term; an
arm \emph{hugs} \ALT{} when the profile distance $D_t$ has a decaying trend (tail mean
below half the head mean) and its gap ratio to \ALT{} lies in the reported band.

\emph{Last-iterate curves.}\label{app:lastiter-metric}
The plotted quantity is the first-coordinate deviation $|x_t-x^{*}|+|y_t-y^{*}|$ for
the two-action witness, and the full profile $\ell_1$ distance
$\|\vx_t-\vx^{*}\|_1+\|\vy_t-\vy^{*}\|_1$ is twice this value. Horizons in
\cref{tab:prec,tab:lastiter} are in iterations.

The identities of \cref{prop:closed,prop:order} hold on the suite arms whose player~1
is plain \RMp, namely \ALT, \SIM, \OSPRM, the \textsc{inj} and \textsc{mix} dials of
\cref{sec:exp-inj,sec:exp-f3}, and the \DELAY family.

\subsection{Sanity checks}

The update rules of \cref{eq:alt,eq:os} were verified line-by-line against the
implementation, with the first ten iterations replayed in exact rational arithmetic to
$10^{-12}$. Pilot-family results were reproduced bit-exactly against an earlier
independent implementation. In the course of that reproduction we corrected one overstated summary
in our own pilot notes: two-sided \PRM{} is faster than \ALT{} on 9/12 pilot games at
$T = 10^4$ (11/12 at $T = 2\times 10^4$), not 12/12 as first recorded; the core
separation \ALT{} vs \OSPRM{} is 12/12 at both horizons on the pilot family (median
$25.3\times$). On the
liar's-dice witness, the uniform-strategy exploitability is $0.804$ and the \ALT{}
curve decreases monotonically to $5.45\times 10^{-8}$ over $T = 10^4$.

\section{Results}
\label{app:tables}

\subsection{F1 boundary statistics (47-game matrix suite)}

\begin{table}[H]
\centering
\caption{Suite medians for the F1 arm and the damping family, lockstep vs \ALT{}
($T = 2\times 10^4$). ``Hug'' counts use the baseline band $[\nicefrac13, 3]$ combined
with trajectory closeness; they are interpretable summaries, not the primary
criterion (\cref{sec:setup}).}
\label{tab:f1full}
\scriptsize
\begin{tabular}{@{}lccccc@{}}
\toprule
\textbf{Arm} & gap/\ALT{} (med) & drift/\ALT{} & oscillation/\ALT{} & final $L_1$ (med) & hug count\\
\midrule
\OSPRM{} & 32.9 & 65.9 & 263 & 0.577 & 8/47\\
\GSPROX$(0.25)$ & 19.8 & -- & -- & -- & 7/47\\
\GSPROX$(1.0)$ & 56.5 & -- & -- & -- & 0/47\\
\bottomrule
\end{tabular}
\end{table}

Per-arm baseline-band hug counts across all arms:
\SIM{} 5, \PRM{} 11, \OSPRM{} 8, \GSPROX$(0.25)$ 7, \GSPROX$(1.0)$ 0, \PP{} 13,
\DELAY-1/2/4/8 each 5, \TSOPT{} 14 (of 47; concentrated in the conditioning family).
No arm reaches the $90\%$ reproduction criterion in any band. In the widest band
$[\nicefrac15, 5]$, \GSPROX$(1.0)$ rises from 0 to 6/47. The six games are
\texttt{ext5x5\_2}, \texttt{cond1e2\_3}, \texttt{cond1e2\_4}, \texttt{cond1e4\_2},
\texttt{cond1e6\_0}, and \texttt{cond1e6\_3}, disclosed per pre-registration.

\subsection{Staleness curve and robustness}

\begin{table}[H]
\centering
\caption{Median final gap / \ALT{} over the 47-game suite along the staleness axis.
Ratio rows are medians over games of the per-game ratio. Strict per-game monotonicity
violations: 4 of 47 games
(largest: $43.9\%$ decrease at $d\,{=}\,1\!\to\!2$ on one random $5\times5$ game;
three conditioning-family games show $1$--$17\%$ dips at $d\,{=}\,0\!\to\!1$).}
\label{tab:delayfull}
\footnotesize
\begin{tabular}{@{}lcccccc@{}}
\toprule
$d$ & 0 (\ALT) & 1 (\SIM) & 2 & 4 & 8 & 16\\
\midrule
median gap/\ALT{} & 1 & 41.33 & 78.86 & 166.5 & 471.0 & 1407.7\\
per-doubling ratio & -- & (one step) & 1.91 & 2.12 & 2.27 & 2.42\\
\bottomrule
\end{tabular}
\end{table}

Staleness robustness (50 independent random games, seed 43): median ratio $39.4$,
bootstrap $95\%$ CI $[29.7, 51.0]$ ($10^4$ resamples); distribution quantiles
(min/25/50/75/max): $0.79 / 24.5 / 39.4 / 69.1 / 1778$; $d{=}2$ ratio median $79.0$,
CI $[64.3, 98.6]$.

\subsection{Conditioning sweep (full medians)}

\begin{table}[H]
\centering
\caption{Median final gap / \ALT{} per conditioning band (5 instances each,
$T = 2\times 10^4$). \SIM{} row = the \ALT{} advantage (values $>1$ mean \ALT{}
faster); optimism rows $<1$ mean the arm is faster than \ALT{}. Spearman rank
correlation of the ratio with $\kappa$ in the last column.}
\label{tab:kappafull}
\footnotesize
\begin{tabular}{@{}lccccccccc@{}}
\toprule
\textbf{Arm} & $10^1$ & $10^2$ & $10^3$ & $10^4$ & $10^5$ & $10^6$ & $10^7$ & $10^8$ & Sp.\\
\midrule
\SIM{} & 37.0 & 13.5 & 149.1 & 40.7 & 87.9 & 21.2 & 33.3 & 71.5 & --\\
\PRM{} & 0.041 & 0.018 & 0.180 & 0.191 & 0.779 & 1.594 & 3.899 & 2.664 & 0.95\\
\PP{} & 0.001 & 0.008 & 0.078 & 0.054 & 0.364 & 1.212 & 1.053 & 0.555 & 0.88\\
\TSOPT{} & 0.046 & 0.006 & 0.324 & 0.048 & 0.308 & 1.000 & 0.699 & 0.723 & 0.83\\
\bottomrule
\end{tabular}
\end{table}

Closure thresholds (first band where gap/\ALT{} $\ge \nicefrac12$):
$\kappa^{*}(\text{\PRM{}}) = 10^5$, $\kappa^{*}(\text{\PP{}}) = \kappa^{*}(\text{\TSOPT{}}) = 10^6$. Under the
vs-\SIM{} operationalization (optimism/\SIM{} advantage falling below $2\times$), no
arm triggers within $\kappa \le 10^8$: the advantage shrinks from $10^3$--$10^5\times$
to $5.8$--$107\times$. Data quality: all final gaps $\ge 10^{-9}$ (no clamping-floor
artifacts); lockstep reference deviation $0$.

\subsection{Extensive-form witnesses (full ranking)}

\begin{table}[H]
\centering
\caption{Final exploitability and ratio to \ALT{} on the three extensive-form
witnesses ($T = 10^4$). Kuhn and Leduc columns give the exploitability ratio to
\ALT{} (best-response sum divided by two; \cref{sec:setup}, \emph{Conventions});
liar's dice gives both the final exploitability and the ratio. Bold: within the
$3\times$ reporting band of \ALT{}.}
\label{tab:efgfull}
\scriptsize
\setlength{\tabcolsep}{4pt}
\begin{tabular}{@{}lcccc@{}}
\toprule
\textbf{Arm} & Kuhn & Leduc & Liar's dice & Liar's ratio\\
\midrule
\ALT{} & 1 & 1 & $5.450\times 10^{-8}$ & 1\\
\PP{} & 0.0019 & 19.1 & $5.817\times 10^{-8}$ & \textbf{1.07}\\
\TSOPT{} & \textbf{2.04} & 70.4 & $7.450\times 10^{-8}$ & \textbf{1.37}\\
\GSPROX$(0.25)$ & 40.4 & 87.7 & $8.538\times 10^{-8}$ & \textbf{1.57}\\
\OSPRM{} & 74.9 & $1.54\times 10^{3}$ & $1.025\times 10^{-7}$ & \textbf{1.88}\\
\PRM{} & 0.070 & 32.1 & $2.059\times 10^{-7}$ & 3.78\\
\SIM{} (= \DELAY-1) & 118.6 & 320.8 & $4.358\times 10^{-7}$ & 7.99\\
\GSPROX$(1.0)$ & 142.7 & 388.7 & $9.688\times 10^{-7}$ & 17.8\\
\DELAY-2 & 190.9 & 656.1 & $2.100\times 10^{-6}$ & 38.5\\
\DELAY-4 & 423.3 & $1.69\times 10^{3}$ & $2.404\times 10^{-5}$ & 441\\
\DELAY-8 & 903.5 & $4.78\times 10^{3}$ & $2.412\times 10^{-4}$ & $4.43\times 10^{3}$\\
\bottomrule
\end{tabular}
\end{table}

Trajectory level on liar's dice: final $L_1$ distance to \ALT{} is $10.1$--$23.3$
across all arms (the four gap-band arms: \PP{} 23.3, \TSOPT{} 22.7,
\GSPROX$(0.25)$ 16.3, \OSPRM{} 10.1), all flat in $t$: no arm hugs \ALT{} at the
trajectory level even on the boundary game. On Leduc, the trajectory-closeness
(``decaying $L_1$'') summary has weak discriminative power (all arms decay while final
values remain large), so extensive-form conclusions rest on gap ratios; disclosed here.

\subsection{Witness-family certificates (theorem support)}

\begin{table}[H]
\centering
\caption{Certificate summary over the 60-point grid $a \in [1.05, 4]$
($T = 2\times 10^5$, tail-half statistics), supporting \cref{ass:c1} and the
evidence against branch (b). The full per-point table is in the
certificate file.}
\label{tab:certs}
\scriptsize
\begin{tabularx}{\linewidth}{@{}l X@{}}
\toprule
\textbf{Certificate} & \textbf{Value (range over 60 points)}\\
\midrule
C1 interior band $\eta(a)$ & $0.265$--$0.453$ (median $0.339$); \ALT{} truncation-free tail: 60/60\\
predicted constant $c = 2\eta(a)$ & $0.530$--$0.905$\\
measured $\min_t D_t$ at vertex times & $0.530$--$0.906$; matches $2\eta$ within $9.0\times 10^{-4}$ rel.\\
exact-vertex frequency of $\vy^{\mathrm{OS}}$ & $0.333$--$0.950$ (median $0.518$); positive: 60/60\\
\OSPRM{} persistent state $\bar\varepsilon$ & $4.9\times 10^{-6}$--$8.5\times 10^{-3}$\\
overall certificate (all conditions) & 60/60\\
\bottomrule
\end{tabularx}
\end{table}

Representative point $a = 2$: $\eta = 0.37232$, $2\eta = 0.74464$, measured
$\min_t D_t$ at vertex times $= 0.74484$, vertex frequency $0.657$,
$\bar\varepsilon = 1.9\times 10^{-3}$. The un-injected side of \OSPRM{} sticks near
$x^{*}$: measured $|\delta^2(x_t)|$ tail p99 $\le 0.026$ across all 60 grid points.

\subsection{Four-precision recomputation of the certificates}
\label{app:prec}

The certificate statistics of \cref{tab:certs} recomputed in four arithmetic
precisions, run in lockstep over the same 60-point grid.

\begin{table}[H]
\centering\scriptsize
\setlength{\tabcolsep}{4pt}
\caption{Four-precision recomputation of the witness certificates (60 grid points,
$T=2\times10^5$, tail half; four arithmetic precisions, double, extended, and 30- and
60-digit multi-precision, run in lockstep). Statistics agree across precisions although
individual \OSPRM{} trajectories do not; the quoted cross-precision differences are
relative and compare the double run with the 60-digit run.}\label{tab:prec}
\begin{tabular*}{\linewidth}{@{}@{\extracolsep{\fill}}p{4.6cm}cccc@{}}
\toprule
 & double & extended & 30-digit & 60-digit\\
\midrule
$\eta(a)$ range & \multicolumn{4}{c}{$[0.265, 0.453]$ (identical to $1.6\times10^{-12}$ rel.)}\\
exact-vertex frequency range & \multicolumn{4}{c}{$[0.333, 0.950]$ (max cross-precision difference $7.4\times10^{-3}$)}\\
\quad jump / stay (median) & \multicolumn{4}{c}{$0.300$ / $0.203$}\\
$\bar\varepsilon$ max & \multicolumn{4}{c}{$8.5\times10^{-3}$ (max cross-precision difference $1.6\%$)}\\
$\min_t D_t$ at vertex times $-\,2\eta$, rel.\ ($10^{-5}$) & $[5.3, 90]$ & $[8.9, 113]$ & $[4.0, 92]$ & $[3.3, 81]$\\
\ALT{} truncation-free tail & \multicolumn{4}{c}{60/60}\\
\OSPRM{} trajectory diverged from double ($|\Delta y|>0.1$) & -- & 53/60 (median $t=229$) & 53/60 & 53/60\\
\ALT{} max $|\Delta y|$ vs double & -- & \multicolumn{3}{c}{$4.8\times10^{-11}$}\\
jump-time margin ratio $\varepsilon_t/(\theta_t\Delta_t+\varepsilon_t)$: median / max & \multicolumn{4}{c}{$0.21$ / $1.0$}\\
\bottomrule
\end{tabular*}
\end{table}

% ---------------------------------------------------------------------------
% D_tables_ext.tex — generated by make_ext_tables.py and since hand-edited
% (R4/R6 review rounds: tab:anatomy shifted-gap column, 7-column tab:epoch,
% appended tab:dials). Do NOT rerun make_ext_tables.py; it would drop the
% edits. New content goes through review-stage scripts.
% ---------------------------------------------------------------------------
\subsection{Exact decomposition}

The exact decomposition of the duality gap (\cref{eq:identity}) evaluated on
four witness points, per unit weight $W$, with the stale and fresh readings of
the opponent's strategy separated.

\begin{table}[H]
\centering
\caption{Exact decomposition on the four witness games at $T = 2\times 10^{4}$, per unit weight $W$.}\label{tab:anatomy}
\footnotesize
\setlength{\tabcolsep}{3.5pt}
\begin{tabular}{@{}llrrrrrrr@{}}
\toprule
$a$ & arm & $R^1/W$ & $R^2_{\mathrm{stale}}/W$ & $R^2_{\mathrm{fresh}}/W$ & $\Delta/W$ & $\mathrm{gap}(\bar\vx,\bar\vy)/2$ & $\mathrm{gap}(\bar\vx^{+},\bar\vy)/2$ & vertex\\
\midrule
1.2 & \SIM & $5.88\times 10^{-3}$ & $7.93\times 10^{-3}$ & $1.50\times 10^{-2}$ & $6.99\times 10^{-3}$ & $6.90\times 10^{-3}$ & $6.93\times 10^{-3}$ & $0.054$\\
1.2 & \ALT & $4.79\times 10^{-4}$ & $-2.19\times 10^{-4}$ & $4.51\times 10^{-4}$ & $6.78\times 10^{-4}$ & $1.30\times 10^{-4}$ & $1.26\times 10^{-4}$ & $0.000$\\
1.2 & \OSPRM & $4.98\times 10^{-3}$ & $-3.52\times 10^{-3}$ & $5.93\times 10^{-3}$ & $9.45\times 10^{-3}$ & $7.32\times 10^{-4}$ & $7.32\times 10^{-4}$ & $0.821$\\
1.2 & \PRM & $6.06\times 10^{-5}$ & $6.66\times 10^{-5}$ & $3.89\times 10^{-1}$ & $3.89\times 10^{-1}$ & $6.36\times 10^{-5}$ & $3.85\times 10^{-5}$ & $0.429$\\
1.2 & \textsc{alt-prm}$^{+}$ & $2.49\times 10^{-8}$ & $-1.04\times 10^{-8}$ & $2.14\times 10^{-8}$ & $3.50\times 10^{-8}$ & $7.25\times 10^{-9}$ & $5.63\times 10^{-9}$ & $0.000$\\
2 & \SIM & $5.52\times 10^{-3}$ & $6.53\times 10^{-3}$ & $1.30\times 10^{-2}$ & $6.49\times 10^{-3}$ & $6.03\times 10^{-3}$ & $6.02\times 10^{-3}$ & $0.060$\\
2 & \ALT & $3.93\times 10^{-4}$ & $-1.88\times 10^{-4}$ & $3.99\times 10^{-4}$ & $5.92\times 10^{-4}$ & $1.03\times 10^{-4}$ & $1.00\times 10^{-4}$ & $0.000$\\
2 & \OSPRM & $6.10\times 10^{-3}$ & $-3.24\times 10^{-3}$ & $7.24\times 10^{-3}$ & $1.05\times 10^{-2}$ & $1.43\times 10^{-3}$ & $1.43\times 10^{-3}$ & $0.645$\\
2 & \PRM & $1.41\times 10^{-4}$ & $3.31\times 10^{-5}$ & $3.04\times 10^{-1}$ & $3.04\times 10^{-1}$ & $8.73\times 10^{-5}$ & $1.23\times 10^{-4}$ & $0.124$\\
2 & \textsc{alt-prm}$^{+}$ & $3.50\times 10^{-8}$ & $-1.89\times 10^{-8}$ & $2.79\times 10^{-8}$ & $5.25\times 10^{-8}$ & $8.03\times 10^{-9}$ & $5.18\times 10^{-9}$ & $0.000$\\
3 & \SIM & $3.36\times 10^{-3}$ & $5.25\times 10^{-3}$ & $1.19\times 10^{-2}$ & $6.59\times 10^{-3}$ & $4.31\times 10^{-3}$ & $4.33\times 10^{-3}$ & $0.064$\\
3 & \ALT & $3.60\times 10^{-4}$ & $-2.95\times 10^{-4}$ & $3.18\times 10^{-4}$ & $6.11\times 10^{-4}$ & $3.25\times 10^{-5}$ & $3.31\times 10^{-5}$ & $0.000$\\
3 & \OSPRM & $6.71\times 10^{-3}$ & $-1.82\times 10^{-3}$ & $9.46\times 10^{-3}$ & $1.13\times 10^{-2}$ & $2.45\times 10^{-3}$ & $2.45\times 10^{-3}$ & $0.545$\\
3 & \PRM & $-4.12\times 10^{-7}$ & $1.97\times 10^{-6}$ & $1.28\times 10^{-4}$ & $1.26\times 10^{-4}$ & $7.79\times 10^{-7}$ & $7.69\times 10^{-7}$ & $0.001$\\
3 & \textsc{alt-prm}$^{+}$ & $5.89\times 10^{-8}$ & $-3.14\times 10^{-8}$ & $4.39\times 10^{-8}$ & $8.34\times 10^{-8}$ & $1.37\times 10^{-8}$ & $9.67\times 10^{-9}$ & $0.000$\\
4 & \SIM & $2.19\times 10^{-3}$ & $3.17\times 10^{-3}$ & $1.02\times 10^{-2}$ & $6.95\times 10^{-3}$ & $2.68\times 10^{-3}$ & $2.70\times 10^{-3}$ & $0.072$\\
4 & \ALT & $4.60\times 10^{-4}$ & $-2.49\times 10^{-4}$ & $4.58\times 10^{-4}$ & $7.11\times 10^{-4}$ & $1.05\times 10^{-4}$ & $1.03\times 10^{-4}$ & $0.000$\\
4 & \OSPRM & $7.78\times 10^{-3}$ & $-1.34\times 10^{-3}$ & $1.14\times 10^{-2}$ & $1.27\times 10^{-2}$ & $3.22\times 10^{-3}$ & $3.22\times 10^{-3}$ & $0.507$\\
4 & \PRM & $9.84\times 10^{-6}$ & $4.63\times 10^{-5}$ & $1.16\times 10^{-1}$ & $1.16\times 10^{-1}$ & $2.81\times 10^{-5}$ & $5.36\times 10^{-5}$ & $0.001$\\
4 & \textsc{alt-prm}$^{+}$ & $9.98\times 10^{-8}$ & $-5.49\times 10^{-8}$ & $7.63\times 10^{-8}$ & $1.42\times 10^{-7}$ & $2.24\times 10^{-8}$ & $1.72\times 10^{-8}$ & $0.000$\\
\bottomrule
\end{tabular}
\end{table}

% 2026-10-14 arXiv 版: tab:inj 与 tab:lastiter 已搬入正文 sections/6_experiments.tex。

% No \clearpage here (removed 2026-10-08): it stranded the table stream on
% half-empty pages, and the overfull it was added to fix does not return
% without it.

\subsection{Epoch audit}

Epoch-level statistics on the witness family, $T = 2\times10^{5}$, tail half,
supporting \cref{prop:epoch,lem:flip}.

\begin{table}[H]
\centering
\caption{Epoch audit on the witness family, $T = 2\times10^{5}$, tail half.
\cref{lem:flip} is evaluated only on length-one epochs
(all $82237$/$50105$/$39564$/$27582$ of them); its prediction agrees with the measured
readout at every one. Vertex times inside no epoch: $0$/$0$/$1$/$0$.}
\label{tab:epoch}
\footnotesize
\setlength{\tabcolsep}{4pt}
\begin{tabular}{@{}lrrrrrrr@{}}
\toprule
$a$ & epochs & \shortstack[r]{len-1 share\\(\%)} & \shortstack[r]{\eqref{eq:washout-criterion},\\len $\ge 2$ (\%)} & \shortstack[r]{zero entry,\\len-1 (\%)} & \shortstack[r]{vertex\\epochs (\%)} & \shortstack[r]{len-1 exits among\\vertex times (\%)} & \shortstack[r]{zero disfav.\ at\\vertex (\%)}\\
\midrule
1.2 & 88158 & 93.3 & 100.0 & 78.4 & 86.6 & 85.6 & 92.8\\
2.0 & 66807 & 75.0 & 98.7 & 32.8 & 73.6 & 49.8 & 74.9\\
3.0 & 60284 & 65.6 & 66.8 & 14.7 & 57.4 & 29.4 & 60.0\\
4.0 & 51231 & 53.8 & 42.6 & 25.1 & 61.3 & 26.0 & 69.2\\
\bottomrule
\end{tabular}
\end{table}

\subsection{Perturbation sensitivity and the three dials}

Sensitivity of the four primary arms to a relative perturbation of $10^{-13}$
under two injection modes (payoffs scaled, initial regrets jittered), and the
verdicts of the three dials (mixing, damping, injection) under the primary and
the $3\times$ criteria.

\begin{table}[H]
\centering
\caption{Sensitivity to a relative perturbation of size $10^{-13}$, by mode (payoffs scaled vs.\ initial regrets jittered), $T = 2\times 10^{4}$, 47 games. `macro' counts games whose $\ell_1$ divergence exceeds $0.1$; $t_{\mathrm{first}}$ is the median breach round.}
\label{tab:sens}
\small
\setlength{\tabcolsep}{3.5pt}
\begin{tabular}{@{}llcccc@{}}
\toprule
mode & arm & $D_{\max}$ median & $D_{\max}$ max & macro & $t_{\mathrm{first}}$\\
\midrule
init & \ALT & $2.6\times 10^{-11}$ & $1.0\times 10^{-6}$ & 0/47 & --\\
init & \SIM & $5.1\times 10^{-12}$ & $4.2\times 10^{-1}$ & 1/47 & 881\\
init & \OSPRM & $2.0\times 10^{0}$ & $2.2\times 10^{0}$ & 31/47 & 97\\
init & \PRM & $2.4\times 10^{-12}$ & $2.6\times 10^{0}$ & 5/47 & 743\\
scale & \ALT & $2.4\times 10^{-12}$ & $2.8\times 10^{-11}$ & 0/47 & --\\
scale & \SIM & $7.3\times 10^{-13}$ & $4.1\times 10^{-1}$ & 1/47 & 1086\\
scale & \OSPRM & $2.0\times 10^{0}$ & $2.2\times 10^{0}$ & 33/47 & 182\\
scale & \PRM & $7.0\times 10^{-15}$ & $2.6\times 10^{0}$ & 5/47 & 806\\
\bottomrule
\end{tabular}
\end{table}

% BEGIN tab:dials (generated by review-stage/r6_dial_table.py; do not edit by hand)
\begin{table}[H]
\centering
\caption{Verdicts of the three dials under the primary and the $3\times$ criteria.}
\label{tab:dials}
\footnotesize
\setlength{\tabcolsep}{4pt}
\begin{tabular}{@{}llcclc@{}}
\toprule
dial & setting & median $\rho_g$ [IQR] & within $3\times$ & median criterion & $D$-tail\\
\midrule
mixing & $0$ & $+0.00$ $[0.00, 0.00]$ & 47/47 & yes & $0.000$\\
 & $0.01$ & $+0.42$ $[0.17, 0.69]$ & 26/47 & yes & $0.138$\\
 & $0.02$ & $+0.60$ $[0.26, 0.79]$ & 20/47 & no & $0.138$\\
 & $0.05$ & $+0.80$ $[0.46, 1.07]$ & 13/47 & no & $0.144$\\
 & $0.1$ & $+0.87$ $[0.52, 1.26]$ & 10/47 & no & $0.114$\\
 & $0.25$ & $+1.14$ $[0.79, 1.39]$ & 7/47 & no & $0.133$\\
 & $0.5$ & $+1.27$ $[1.07, 1.52]$ & 7/47 & no & $0.174$\\
 & $0.75$ & $+1.48$ $[1.26, 1.70]$ & 6/47 & no & $0.238$\\
 & $0.9$ & $+1.62$ $[1.31, 1.75]$ & 5/47 & no & $0.277$\\
 & $1$ & $+1.62$ $[1.39, 1.81]$ & 5/47 & no & $0.279$\\
\addlinespace[2pt]
damping & $0.001$ & $+0.22$ $[0.00, 0.38]$ & 39/47 & yes & $0.103$\\
 & $0.002$ & $+0.29$ $[0.01, 0.57]$ & 31/47 & yes & $0.106$\\
 & $0.005$ & $+0.47$ $[0.14, 0.65]$ & 24/47 & yes & $0.134$\\
 & $0.01$ & $+0.61$ $[0.30, 0.82]$ & 15/47 & no & $0.142$\\
 & $0.02$ & $+0.75$ $[0.44, 0.99]$ & 14/47 & no & $0.144$\\
 & $0.05$ & $+0.91$ $[0.55, 1.23]$ & 10/47 & no & $0.131$\\
 & $0.1$ & $+1.00$ $[0.75, 1.37]$ & 8/47 & no & $0.166$\\
 & $0.25$ & $+1.30$ $[1.03, 1.60]$ & 7/47 & no & $0.162$\\
 & $0.5$ & $+1.52$ $[1.25, 1.73]$ & 6/47 & no & $0.229$\\
 & $1$ & $+1.75$ $[1.52, 2.09]$ & 0/47 & no & $0.272$\\
 & $2$ & $+1.92$ $[1.71, 2.45]$ & 0/47 & no & $0.337$\\
 & $4$ & $+2.18$ $[1.88, 2.74]$ & 0/47 & no & $0.469$\\
\addlinespace[2pt]
injection & $0$ & $+1.52$ $[0.96, 2.08]$ & 8/47 & no & $0.686$\\
 & $0.1$ & $+1.61$ $[0.98, 2.08]$ & 9/47 & no & $0.669$\\
 & $0.25$ & $+1.51$ $[1.09, 2.10]$ & 9/47 & no & $0.686$\\
 & $0.3$ & $+1.46$ $[1.06, 2.04]$ & 9/47 & no & $0.650$\\
 & $0.35$ & $+1.32$ $[0.51, 1.81]$ & 12/47 & no & $0.588$\\
 & $0.4$ & $+1.20$ $[0.14, 1.67]$ & 14/47 & no & $0.508$\\
 & $0.45$ & $+0.58$ $[-0.02, 1.05]$ & 21/47 & no & $0.320$\\
 & $0.5$ & $+0.22$ $[-0.08, 0.57]$ & 27/47 & yes & $0.072$\\
 & $0.75$ & $-0.03$ $[-0.11, 0.08]$ & 43/47 & yes & $0.054$\\
 & $0.9$ & $-0.02$ $[-0.13, 0.02]$ & 46/47 & yes & $0.044$\\
 & $1$ & $+0.00$ $[0.00, 0.00]$ & 47/47 & yes & $0.000$\\
\bottomrule
\end{tabular}
\end{table}
% END tab:dials

% Auto-generated by gen_pergame_table.py — do not edit by hand.
% Source: alt_opt/results.json matrix.per_game; rho = log10(gap_arm / gap_ALT).
\subsection{Per-game log-ratios}

Per-game values behind the suite medians, one row per game.

\begin{table}[H]
\centering
\scriptsize
\setlength{\tabcolsep}{2.0pt}
\caption{Per-game log-ratios $\rho_g = \log_{10}(\mathrm{gap}_{\mathrm{arm}}/\mathrm{gap}_{\ALTmath})$ on the 47-game suite ($T = 2\times10^4$, seed 42), one row per game; positive = slower than \ALT. Columns as in \cref{tab:rho}; GS.25/GS1 = \GSPROX$(0.25/1.0)$, D1--D8 = \DELAY-$d$.}
\label{tab:pergame}
\begin{tabular}{@{}l*{11}{r}@{}}
\toprule
\textbf{Game} & \SIM & \PRM & \OSPRM & GS.25 & GS1 & \PP & D1 & D2 & D4 & D8 & \TSOPT\\
\midrule
\texttt{rand3x3\_0} & +2.13 & -2.49 & +1.81 & +2.09 & +2.40 & -2.94 & +2.13 & +2.42 & +2.65 & +3.01 & -2.96\\
\texttt{rand3x3\_1} & +1.47 & -2.59 & +1.31 & +1.57 & +1.65 & -3.31 & +1.47 & +1.96 & +2.17 & +2.51 & -3.22\\
\texttt{rand3x3\_2} & +1.62 & -0.30 & +1.13 & +1.40 & +1.77 & -3.31 & +1.62 & +1.88 & +2.14 & +2.46 & -3.44\\
\texttt{rand3x3\_3} & +2.45 & -2.18 & +2.02 & +2.31 & +2.62 & -3.19 & +2.45 & +2.68 & +2.92 & +3.21 & -2.57\\
\texttt{rand3x3\_4} & +1.70 & -0.70 & +1.44 & +1.30 & +1.86 & -2.85 & +1.70 & +2.00 & +2.34 & +2.73 & +0.27\\
\texttt{rand5x5\_0} & +1.64 & -2.34 & +1.33 & +1.20 & +1.73 & -3.53 & +1.64 & +1.85 & +2.06 & +2.49 & -2.24\\
\texttt{rand5x5\_1} & +1.47 & -2.70 & +1.52 & +1.34 & +1.56 & -3.57 & +1.47 & +1.22 & +1.76 & +2.17 & -3.32\\
\texttt{rand5x5\_2} & +1.73 & -0.21 & +0.93 & +1.66 & +2.04 & -3.61 & +1.73 & +1.91 & +2.33 & +2.67 & -0.20\\
\texttt{rand5x5\_3} & +1.42 & +0.71 & +2.53 & +1.03 & +1.51 & -0.25 & +1.42 & +1.56 & +1.95 & +2.45 & +0.98\\
\texttt{rand5x5\_4} & +1.31 & -3.15 & +1.44 & +1.05 & +1.52 & -3.63 & +1.31 & +1.59 & +1.82 & +2.25 & -3.03\\
\texttt{biased\_mp} & +1.77 & -0.09 & +1.14 & +1.52 & +1.81 & -4.50 & +1.77 & +1.92 & +2.01 & +2.49 & -3.76\\
\texttt{rps\_variant} & +1.50 & -0.34 & +1.37 & +1.75 & +1.79 & -4.51 & +1.50 & +1.92 & +2.12 & +2.47 & -0.26\\
\midrule
\texttt{ext3x3\_0} & +1.74 & -0.18 & +0.80 & +1.76 & +1.64 & -4.08 & +1.74 & +1.84 & +2.31 & +2.46 & -3.78\\
\texttt{ext3x3\_1} & +1.80 & -3.02 & +0.34 & +1.59 & +1.99 & -3.90 & +1.80 & +2.01 & +2.20 & +2.45 & -4.04\\
\texttt{ext3x3\_2} & +1.69 & -2.06 & +1.42 & +1.40 & +1.81 & -2.99 & +1.69 & +2.00 & +2.34 & +2.70 & +1.12\\
\texttt{ext3x3\_3} & +1.66 & -0.35 & +0.61 & +1.57 & +1.71 & -4.39 & +1.66 & +1.81 & +2.10 & +2.28 & -0.58\\
\texttt{ext3x3\_4} & +1.57 & -2.69 & +1.01 & +1.60 & +1.83 & -4.37 & +1.57 & +1.64 & +2.18 & +2.40 & -3.30\\
\texttt{ext5x5\_0} & +1.04 & -2.16 & +1.96 & +0.86 & +1.34 & -3.01 & +1.04 & +1.77 & +2.42 & +2.83 & -2.70\\
\texttt{ext5x5\_1} & +1.97 & -1.91 & +2.08 & +1.20 & +2.33 & -2.53 & +1.97 & +2.69 & +2.93 & +3.34 & +0.78\\
\texttt{ext5x5\_2} & +0.07 & +0.07 & +0.00 & +0.11 & +0.49 & -0.18 & +0.07 & +0.18 & +0.18 & +0.18 & -0.04\\
\texttt{ext5x5\_3} & +1.81 & -1.67 & +2.14 & +1.08 & +2.16 & -2.45 & +1.81 & +2.67 & +3.01 & +3.30 & -2.15\\
\texttt{ext5x5\_4} & +2.08 & -2.51 & +1.66 & +1.03 & +2.27 & -3.18 & +2.08 & +2.40 & +2.70 & +3.08 & -2.88\\
\texttt{ext10x10\_0} & +1.33 & -2.10 & +2.11 & +1.30 & +1.52 & -2.59 & +1.33 & +1.57 & +2.49 & +2.99 & -2.44\\
\texttt{ext10x10\_1} & +1.37 & -2.23 & +1.53 & +1.19 & +1.49 & -2.59 & +1.37 & +1.71 & +2.22 & +2.58 & +0.02\\
\texttt{ext10x10\_2} & +2.83 & -0.45 & +2.90 & +1.99 & +2.93 & -0.85 & +2.83 & +3.10 & +3.40 & +3.81 & -0.64\\
\texttt{ext10x10\_3} & +1.68 & -2.31 & +1.46 & +1.60 & +1.84 & -3.15 & +1.68 & +1.90 & +2.19 & +2.51 & -2.72\\
\texttt{ext10x10\_4} & +1.75 & -1.50 & +2.16 & +1.55 & +1.86 & -1.54 & +1.75 & +2.29 & +2.87 & +3.18 & -1.96\\
\texttt{ext20x20\_0} & +2.90 & +1.25 & +3.80 & +2.60 & +3.12 & +0.02 & +2.90 & +3.50 & +3.98 & +4.38 & +1.09\\
\texttt{ext20x20\_1} & +2.02 & -2.11 & +2.08 & +1.70 & +2.16 & -2.39 & +2.02 & +2.36 & +2.71 & +3.04 & -2.24\\
\texttt{ext20x20\_2} & +1.41 & -2.54 & +1.62 & +1.10 & +1.58 & -3.05 & +1.41 & +1.74 & +1.99 & +2.37 & -2.73\\
\texttt{ext20x20\_3} & +1.54 & -0.78 & +1.94 & +1.03 & +1.75 & -2.45 & +1.54 & +1.87 & +2.21 & +2.64 & -0.86\\
\texttt{ext20x20\_4} & +1.82 & -1.51 & +2.29 & +1.46 & +2.17 & -1.71 & +1.82 & +2.42 & +2.74 & +3.13 & -1.80\\
\midrule
\texttt{cond1e2\_0} & +0.85 & -0.16 & +0.25 & +0.41 & +1.35 & -0.07 & +0.85 & +1.30 & +1.56 & +2.17 & -0.30\\
\texttt{cond1e2\_1} & +1.42 & -2.19 & +1.48 & +1.26 & +1.63 & -2.41 & +1.42 & +1.72 & +2.14 & +2.78 & -2.34\\
\texttt{cond1e2\_2} & +1.53 & -1.36 & +1.01 & +1.29 & +1.57 & -2.19 & +1.53 & +1.90 & +2.69 & +3.08 & -1.61\\
\texttt{cond1e2\_3} & +0.00 & +0.00 & +0.00 & +0.16 & +0.51 & -0.03 & +0.00 & +0.00 & +0.00 & +0.00 & -0.03\\
\texttt{cond1e2\_4} & -0.03 & -0.03 & -0.02 & +0.18 & +0.57 & +0.07 & -0.03 & +0.04 & +0.14 & +0.15 & +0.01\\
\texttt{cond1e4\_0} & +1.63 & +0.49 & +1.64 & +0.86 & +1.74 & +0.48 & +1.63 & +2.18 & +2.82 & +3.15 & +0.09\\
\texttt{cond1e4\_1} & +1.99 & -0.01 & +1.98 & +1.76 & +2.14 & +0.00 & +1.99 & +2.22 & +2.55 & +2.85 & -0.40\\
\texttt{cond1e4\_2} & -0.00 & -0.07 & -0.02 & +0.15 & +0.50 & -0.40 & -0.00 & +0.00 & +0.00 & +0.00 & -0.08\\
\texttt{cond1e4\_3} & +1.40 & -0.89 & +2.43 & +1.24 & +1.57 & -1.46 & +1.40 & +1.68 & +2.03 & +2.86 & -1.66\\
\texttt{cond1e4\_4} & +0.90 & -1.16 & +1.81 & +0.61 & +1.34 & -1.54 & +0.90 & +1.44 & +2.34 & +3.36 & -1.63\\
\texttt{cond1e6\_0} & +0.51 & +0.45 & -0.14 & +0.16 & +0.53 & +0.02 & +0.51 & +0.76 & +1.08 & +1.51 & -0.15\\
\texttt{cond1e6\_1} & +2.78 & +0.40 & +3.19 & +1.85 & +2.86 & -0.45 & +2.78 & +3.08 & +3.34 & +3.65 & +0.19\\
\texttt{cond1e6\_2} & +2.26 & +0.51 & +2.42 & +2.05 & +2.27 & +0.08 & +2.26 & +2.34 & +3.18 & +3.49 & -0.01\\
\texttt{cond1e6\_3} & -0.08 & -0.14 & -0.10 & +0.15 & +0.52 & -0.09 & -0.08 & -0.03 & +0.07 & +0.26 & -0.14\\
\texttt{cond1e6\_4} & +1.47 & +0.44 & +3.03 & +1.12 & +1.76 & -0.28 & +1.47 & +2.61 & +3.18 & +3.52 & -0.23\\
\bottomrule
\end{tabular}
\end{table}

\end{document}